\documentclass[11pt]{amsart}

\usepackage[letterpaper, hmargin=1.25in, vmargin=1in]{geometry}
\usepackage{cite}
\usepackage{amsmath, amsfonts, amssymb}
\usepackage{mathtools, mathrsfs}
\usepackage{etoolbox}
\usepackage{algorithm}
\usepackage{algorithmic}
\usepackage{eqparbox}

\usepackage{graphicx}
\usepackage{textcomp}
\usepackage{xcolor}
\usepackage{tikz}
\usetikzlibrary{shapes,arrows,positioning}
\usepackage{comment}
\usepackage[hypertexnames = false]{hyperref}
\hypersetup{pdftitle={A Performance Improvement of the Payne--Hanek Range Reduction Algorithm}, pdfauthor={Tue Ly}}
\usepackage{float}
\usepackage{booktabs}

\newtoggle{draftmode}
\togglefalse{draftmode}

\let\originalleft\left
\let\originalright\right
\renewcommand{\left}{\mathopen{}\mathclose\bgroup\originalleft}
\renewcommand{\right}{\aftergroup\egroup\originalright}
\everymath{\displaystyle}
\allowdisplaybreaks

\newcommand{\R}{{\mathbb{R}}}

\newcommand{\Z}{{\mathbb{Z}}}

\newcommand{\F}{{\mathbb{F}}}

\newcommand{\PiInv}{\operatorname{PI\_INV}}

\newcommand{\abs}[1]{\left| {#1} \right|}

\newcommand{\prt}[1]{\left( {#1} \right)}
\newcommand{\brk}[1]{\left[ {#1} \right]}
\newcommand{\cbrk}[1]{\left\{ {#1} \right\}}
\newcommand{\floor}[1]{\left\lfloor {#1} \right\rfloor}
\newcommand{\ceil}[1]{\left\lceil{#1} \right\rceil}
\newcommand{\nearestint}[1]{\left\lfloor {#1} \right\rceil}

\makeatletter
\newcommand{\ufp}{\@ifnextchar\bgroup{\@ufpwitharg}{\operatorname{ufp}}}
\newcommand{\@ufpwitharg}[1]{\operatorname{ufp}\prt{#1}}

\newcommand{\uls}{\@ifnextchar\bgroup{\@ulswitharg}{\operatorname{uls}}}
\newcommand{\@ulswitharg}[1]{\operatorname{uls}\prt{#1}}

\newcommand{\ulp}{\@ifnextchar\bgroup{\@ulpwitharg}{\operatorname{ulp}}}
\newcommand{\@ulpwitharg}[1]{\operatorname{ulp}\prt{#1}}

\newcommand{\sinpi}{\@ifnextchar\bgroup{\@sinpiwitharg}{\operatorname{sinpi}}}
\newcommand{\@sinpiwitharg}[1]{\operatorname{sinpi}\prt{#1}}

\newcommand{\cospi}{\@ifnextchar\bgroup{\@cospiwitharg}{\operatorname{cospi}}}
\newcommand{\@cospiwitharg}[1]{\operatorname{cospi}\prt{#1}}

\newcommand{\fts}{\@ifnextchar\bgroup{\@ftswitharg}{\operatorname{Fast2Sum}}}
\newcommand{\@ftswitharg}[1]{\operatorname{Fast2Sum}\prt{#1}}

\newcommand{\fma}{\operatorname{fma}}
\newcommand{\prc}{\operatorname{prec}}
\makeatother

\numberwithin{equation}{section}
\newtheorem{Thm}{Theorem}
\newtheorem{Prop}[Thm]{Proposition}
\newtheorem{Cor}[Thm]{Corollary}
\newtheorem{Lem}[Thm]{Lemma}

\theoremstyle{definition}
\newtheorem{Rem}[Thm]{Remark}
\newtheorem{Def}[Thm]{Definition}

\newcommand{\xqed}[1]{\leavevmode\unskip\penalty9999 \hbox{}\nobreak\hfill\quad\hbox{#1}}
\newcommand{\RemEnd}{\xqed{$\triangle$}}

\newcommand{\eq}[2]{%
    \ifstrempty{#1}{%
        \begin{equation*} {#2} \end{equation*}%
    }{%
        \iftoggle{draftmode}{ {\tt ~[#1]~} }{\ignorespaces}%
        \begin{equation} \label{eq:#1} {#2} \end{equation}%
    }%
}
\newcommand{\eqr}[1]{\eqref{eq:#1}}
\newcommand{\lab}[1]{\label{#1} \iftoggle{draftmode}{{\tt [#1]}}{\ignorespaces}}

\makeatletter
\def\ps@headings{\ps@empty
  \def\@evenhead{%
    \setTrue{runhead}%
    \normalfont\scriptsize\hfil
    \def\thanks{\protect\thanks@warning}%
    \leftmark{}{}\hfil}%
  \def\@oddhead{%
    \setTrue{runhead}%
    \normalfont\scriptsize\hfil
    \def\thanks{\protect\thanks@warning}%
    \rightmark{}{}\hfil}%
  \def\@oddfoot{\normalfont\normalsize\hfil\thepage\hfil}%
  \let\@evenfoot\@oddfoot
  \let\@mkboth\markboth
}
\def\ps@firstpage{\ps@plain
  \def\@oddfoot{\normalfont\normalsize\hfil\thepage\hfil
     \global\topskip\normaltopskip}%
  \let\@evenfoot\@oddfoot}
\makeatother
\begin{document}

\title[Performance Improvement of the Payne--Hanek Range Reduction]{A Performance Improvement of the Payne--Hanek Range
    Reduction Algorithm}

\author{Tue Ly}
\address{Google LLC}
\email{lntue@google.com}

\subjclass[2020]{65G50, 65D20, 65Y04}
\keywords{Floating-point arithmetic, elementary functions, range reduction, Payne--Hanek algorithm.}

\begin{abstract}
Range reduction plays a crucial role in the accuracy and performance of evaluating trigonometric functions,
and is often the primary bottleneck for large floating-point inputs. While fast algorithms such as Cody--Waite
work efficiently over narrow intervals, the Payne--Hanek algorithm remains the standard technique for accurate
reduction across large floating-point inputs. However, many existing implementations of Payne--Hanek suffer from
high latency due to heavy branching, conversion overheads, and the use of multi-word integer arithmetic, which
hinders SIMD vectorization. In this paper, we analyze and present a branch-free variation of the Payne--Hanek
algorithm using only floating-point arithmetic. Our method operates directly over large double-precision inputs
($|x| \ge 2^{16}$) and is well suited to hardware with FMA instructions. We formulate the precision
constraints in terms of a truncation error budget, construct a compact lookup table indexed by the input
exponent, and prove that the scaled reduced argument has absolute error below $2^{-110}$ and relative error
below $2^{-60}$ for every input, including worst cases. The same routine can serve both as the complete range
reduction of a single-stage implementation and as the fast path of a correctly rounded one, achieving higher
throughput than existing implementations and lower latency than those returning a double-double reduced
argument. The algorithm is currently implemented in the LLVM libc project.
\end{abstract}

\maketitle

\section{Introduction} \lab{sec:Intro}

Evaluating trigonometric functions for a given floating-point input $x$ begins with a \emph{range reduction}
step, in which $x$ is decomposed as:
\eq{RedDec}{x = k \cdot \frac{\pi}{2^N} + y}
for an integer modulus scaling $N \ge 0$, an integer quotient $k \in \Z$, and a reduced argument $y \in \R$ satisfying:
\eq{YBound}{|y| \leq \frac{\pi}{2^{N + 1}} < 2^{-(N - 1)}.}

Using angle summation identities, evaluating trigonometric functions of $x$ reduces to evaluating those of $k
\cdot \frac{\pi}{2^N}$ and $y$:
\eq{Eval1}{
\begin{aligned}
    \sin(x) &= \sin \prt{k \cdot \frac{\pi}{2^N} + y} \\
            &= \sin \prt{k \cdot \frac{\pi}{2^N}} \cdot \cos y + \cos \prt{k \cdot \frac{\pi}{2^N}} \cdot \sin y.
\end{aligned}
}
In particular, the $k$ and $y$ can be defined by:
\eq{KY}{k = \nearestint{x \cdot \frac{2^N}{\pi}} \quad \text{and} \quad y = x - k \cdot \frac{\pi}{2^N},}
where $\nearestint{\cdot}$ denotes rounding to the nearest integer.

When $k$ is small, $y = x - k \cdot \frac{\pi}{2^N}$ can be computed to high precision using fast algorithms
such as Cody--Waite~\cite{C, CW} or Boldo et al.~\cite{BDL}. However, when $k$ is large, computing~\eqr{KY}
directly would require high intermediate precision because $\pi$ is irrational, causing catastrophic
cancellation if evaluated naively.

To resolve this issue, Payne and Hanek~\cite{PH} introduced an algorithm that exploits periodicity to perform
range reduction using moderate intermediate precision. Their main idea is to evaluate $y$ modulo $2\pi$:
\eq{KY1}{
\begin{aligned}
    y &= \prt{x - k \cdot \frac{\pi}{2^N}} \bmod 2\pi \\
      &= \frac{\pi}{2^N} \brk{ \prt{x \cdot \frac{2^N}{\pi} - k} \bmod 2^{N+1} },
\end{aligned}
}
where for any $K > 0$, $a \bmod K \in (-K/2, K/2]$ denotes the centered modulo operation.
This formulation allows all bits of $x \cdot \frac{2^N}{\pi}$ with weights $2^{N+1}$ and higher to be discarded.

Specifically, for an input $x$, we express $\frac{2^N}{\pi}$ as a sum of floating-point limbs representable in $\F$:
\eq{TwoNOverPi}{
    \frac{2^N}{\pi} = \cdots + C_{-1} + C_0 + C_1 + C_2 + \cdots
}
where the limbs can be signed when generated using round-to-nearest ($\operatorname{RN}$), with
$\ufp{C_i} < \uls{C_{i - 1}}$ (see Section~\ref{sec:NotCon} for the definitions of $\ufp$ and $\uls$ functions).
It is important to note that this decomposition is not fixed once and for all, but rather depends on the
magnitude of $x$ (or its exponent range, as detailed in Section~\ref{sec:LUT}) so that:
\eq{LimbCond}{
    \uls{C_{-1}} \geq \frac{2^{N + 1}}{\ulp{x}} > \uls{C_0}.
}
Consequently, all negatively indexed terms $x \cdot C_i$ ($i < 0$) are integer multiples of $2^{N + 1}$ and
can be dropped modulo $2^{N + 1}$:
\eq{KY2}{
    y = \frac{\pi}{2^N} \brk{\prt{x \cdot \prt{C_0 + C_1 + \cdots} - k} \bmod 2^{N + 1}}.
}

In practical computation, the infinite expansion in~\eqr{KY2} must be truncated to a small, finite number of
limbs $C_0, \dots, C_{L-1}$. Because floating-point numbers are discrete and $\pi$ is irrational, $x \cdot
\frac{2^N}{\pi}$ is bounded away from all integers for every input with $|k| \ge 1$. The minimum of this distance
over all such inputs is known as the \emph{worst-case distance of range reduction} $w^*$ (cf.~\cite{Mu1, Mu2}),
and it allows the trailing limbs to be dropped as long as their total truncation error is small enough.
Section~\ref{sec:WorstCase} derives the number of limbs $L$ needed for all inputs, and for the fast path of a
correctly rounded implementation.

Prominent implementations of the Payne--Hanek algorithm include \emph{fdlibm} \cite{F, Ng} and
\emph{libultim}~\cite{I} (which underlie FreeBSD, and glibc through version~2.43, respectively), as well as correctly
rounded libraries such as CRLIBM~\cite{CR} and the CORE-MATH project~\cite{CM, Zim26}. While accurate, classical
multi-word implementations are significantly slower than reduction for small $k$. CORE-MATH improved performance by
using the exponent field of $x$ to index the required chunk of bits of $\frac{1}{2\pi}$ (an approach also discussed by
Markstein~\cite{Ma} and Harrison~\cite{Har00} for IA-64); however, it relies on multi-word integer arithmetic and
cross-domain integer-to-floating-point conversions. Furthermore, earlier implementations (cf.~\cite{F, I, CR, RLIBM})
are heavily branched and not friendly to SIMD vectorization.

Shibata and Petrogalli~\cite{SLEEF} introduced a vectorized variation of the Payne--Hanek algorithm in the
SLEEF library, which also performs range reduction using floating-point arithmetic rather than multi-word
integers. However, SLEEF indexes its reduction table by individual exponent binades ($M = 1$), requiring a
full per-exponent 969-entry table ($\approx 30$~KB) that occupies most of the L1 data cache. In addition, their
formulation is specialized to $N = 1$ (modulo $\frac{\pi}{2}$), yielding $\abs{y} \le \frac{\pi}{4}$, which requires a
higher-degree approximation polynomial (or quadrant sign folding) but avoids a secondary trigonometric lookup
table. In contrast, our approach groups exponents into blocks of $M = 16$, reducing the table footprint by
a factor of 16 down to 2~KB without branching. Furthermore, we generalize to arbitrary moduli $N$
(such as $N = 7$ for $\frac{\pi}{128}$) and establish provable error bounds from floating-point limb
constraints.

In the implementations considered here, the output of range reduction is $k \bmod 2^{N+1}$ and a double-double
approximating either:
\eq{YD}{
    \hat{y} = y_h + y_\ell \approx \frac{\pi}{2^N} \brk{\prt{x \cdot \frac{2^N}{\pi} - k} \bmod 2^{N + 1}},
}
or
\eq{UD}{
    \hat{u} = u_h + u_\ell \approx \prt{x \cdot \frac{2^N}{\pi} - k} \bmod 2^{N + 1}.
}

\begin{Rem}
    If the output is of the form~\eqr{UD}, evaluation can proceed via $\sinpi{z} = \sin(\pi z)$ and $\cospi{z}
    = \cos(\pi z)$:
    \eq{Eval2}{
    \begin{aligned}
        \sin(x) &= \sin \brk{\frac{\pi}{2^N} \prt{k + u}} \\
                &= \sinpi{\frac{k}{2^N}} \cdot \cospi{\frac{u}{2^N}} + \cospi{\frac{k}{2^N}} \cdot
                    \sinpi{\frac{u}{2^N}}.
    \end{aligned}
    }
    This eliminates an explicit multiplication by $\frac{\pi}{2^N}$ at the end of range reduction. Note that
    when $k=0$ (small $x$), using $\sinpi$ and $\cospi$ is not advisable because $\sin(x) \approx x$ is
    evaluated directly by a simple polynomial, whereas $\sinpi(x/\pi)$ would introduce unnecessary rounding
    overhead.
    \RemEnd
\end{Rem}

\begin{Rem}
    Historically, most implementations chose $N = 1$ (reduction modulo $\frac{\pi}{2}$) because $\sin\prt{k \cdot
    \frac{\pi}{2} + y}$ simplifies to $\pm \sin(y)$ or $\pm \cos(y)$ with simple quadrant logic:
    \eq{}{
    \sin(x) = \begin{cases}
        \sin(y), & k \equiv 0 \pmod{4}, \\
        \cos(y), & k \equiv 1 \pmod{4}, \\
        -\sin(y), & k \equiv 2 \pmod{4}, \\
        -\cos(y), & k \equiv 3 \pmod{4}.
    \end{cases}
    }
    The choice of the reduction modulus $N$ (Table~\ref{tab:RangeReductionN}) is a trade-off between lookup table size
    and the computational cost of the reduced evaluation, particularly in correctly rounded libraries:
    \begin{itemize}
        \item \textbf{Target precision vs.\ intermediate precision in Ziv's test:}
        Because intermediate computations are executed in double precision (the same as the target precision),
        a fast path must produce its result in a double-double format $y_h + y_\ell$ with an overall relative
        error well below $2^{-53}$ for Ziv's rounding test to succeed with high probability.
        
        \item \textbf{Small $N$:}
        Choosing $N = 1$ without a secondary table (as in FreeBSD~\cite{F} and SLEEF~\cite{SLEEF}) eliminates
        the trigonometric reconstruction table entirely. However, the reduced argument remains relatively large
        ($\abs{y} \le \frac{\pi}{4}$); in a correctly rounded fast path, this requires evaluating a high-degree
        polynomial (typically degree 15--21) in double-double arithmetic across many terms to obtain the low part
        $y_\ell$ with sufficient accuracy.
        
        \item \textbf{Larger $N$:}
        Choosing a larger modulus, such as $N = 7$ (in LLVM libc~\cite{L}), $N = 8$ (in CRLIBM~\cite{CR}), or
        $N = 10$ and $N = 14$ (in CORE-MATH~\cite{CM, Zim26}), restricts the reduced argument to $\abs{y} \le
        \frac{\pi}{2^{N+1}}$ (e.g., $\abs{y} \le \frac{\pi}{256}$ for $N = 7$). Because the Taylor series converges
        rapidly on this narrow interval, the higher-order terms (such as $\frac{y^3}{6}, \frac{y^5}{120}, \dots$) are
        much smaller than the leading term $y_h$. Consequently, these terms can be evaluated in double precision using
        only a few FMAs and accumulated directly to form the low part $y_\ell$, without requiring multi-word polynomial
        arithmetic. This enables a fast Ziv test ($> 99.99\%$ pass rate) at the expense of a small secondary lookup
        table for $\sin\prt{k \cdot \frac{\pi}{2^N}}$ and $\cos\prt{k \cdot \frac{\pi}{2^N}}$. For LLVM libc ($N=7$),
        it is 4~KB for 256 double-double entries in the default build, or about 1~KB with 65 entries under a
        small-table build; CRLIBM ($N=8$) uses a 2~KB table of 65 double-double $\sin/\cos$ pairs via octant symmetry;
        CORE-MATH's \texttt{reduce\_fast} ($N=10$) uses a 6~KB table of 256 triples (a small offset $x_i - i/2^{11}$
        and double values of $\sin 2\pi x_i$ and $\cos 2\pi x_i$, accurate to about 68 bits) via octant symmetry; and
        CORE-MATH's newer \texttt{reduce\_large} ($N=14$~\cite{Zim26}) uses an 8~KB bipartite table ($i_1\pi/2^7$
        and $i_2\pi/2^{14}$) so that the reduced argument fits in a single \texttt{double}.
    \end{itemize}
    By grouping exponents ($M = 16$), our algorithm reduces the primary Payne--Hanek table from 32~KB down to
    only 2~KB. Combined with the secondary table, the total memory footprint in LLVM libc
    ($\approx 3-6$~KB) is $5$--$10$ times smaller than per-exponent tables like SLEEF's (969 entries,
    $\approx 30$~KB). Furthermore, because our framework supports arbitrary $N$, it allows implementations to
    choose the modulus $N$ that best balances memory footprint and arithmetic performance on their target
    architecture. Unlike CORE-MATH's multi-word integer reductions (\texttt{reduce\_fast} and \texttt{reduce\_large},
    which rely on Ziv's test and an accurate path for cancellation inputs), Algorithm~\ref{alg:main} executes in
    floating-point FMA arithmetic without branches and meets the worst-case accuracy budget ($< 2^{-110}$ vs.\
    $< 2^{-73.3}$ and $< 2^{-66.3}$).
    \begin{table}[htbp]
        \caption{Reduction Moduli and Secondary Table Sizes} \lab{tab:RangeReductionN}
        \begin{center}
        \footnotesize
        \setlength{\tabcolsep}{2.2pt}
        \begin{tabular}{ccccccccc}
        \toprule
            & \textbf{Brisebarre} & \textbf{FreeBSD} & \textbf{glibc} & \textbf{CRLIBM} &
            {\begin{tabular}[c]{@{}c@{}}\textbf{CORE-MATH}\\(\texttt{fast})\end{tabular}} &
            {\begin{tabular}[c]{@{}c@{}}\textbf{CORE-MATH}\\(\texttt{large})\end{tabular}} &
            \textbf{SLEEF} &
            {\begin{tabular}[c]{@{}c@{}}\textbf{LLVM}\\\textbf{libc}\end{tabular}} \\
        \midrule
        $N$ & 1 & 1 & 1 & 8 & 10 & 14 & 1 & 7 \\
        {\begin{tabular}[c]{@{}c@{}}Fast\\Table\end{tabular}} &
            N/A & None & $\sim 3.5$ KB & 2 KB & 6 KB & 8 KB & None & 1--4 KB \\
        {\begin{tabular}[c]{@{}c@{}}Acc.\\Table\end{tabular}} &
            N/A & N/A & N/A & None & 16 KB & 2.5 KB & N/A & 1.5 KB \\
        \bottomrule
        \end{tabular}
        \end{center}
        \vspace{1ex}
        \raggedright{\footnotesize Here $N$ denotes the Payne--Hanek reduction modulus $\pi/2^N$, and the two rows
        report the secondary $\sin/\cos$ lookup table sizes in the fast and accurate (Ziv fallback) paths. FreeBSD,
        glibc, and SLEEF are single-stage (glibc's second stage rounds the reduced argument to a multiple of
        $1/128$~rad); CRLIBM's accurate path reduces modulo $\pi/2$ without a secondary table.}
    \end{table}    
    \RemEnd
\end{Rem}

In this paper, we present a branch-free floating-point variation of the Payne--Hanek range reduction algorithm
for large double-precision inputs. By grouping exponents to use a compact 2~KB lookup table and supporting
arbitrary moduli $N$ (such as $N=7$ for $\pi/128$), our algorithm computes double-double outputs as
in~\eqr{YD} and~\eqr{UD} entirely without branches, achieving competitive latency and throughput while guaranteeing
provable error bounds. Its truncation error meets the worst-case budget of Section~\ref{sec:WorstCase}, so the
reduced argument stays accurate even for the inputs closest to multiples of $\frac{\pi}{2^N}$. The same routine
can therefore be used as the complete range reduction of a single-stage implementation, without cancellation
checks or an accurate path, and as the fast path of a two-stage correctly rounded implementation, whose accurate
path can reuse its exact partial products.

\subsection{Algorithm at a Glance and Intuition} \lab{subsec:Glance}

Before presenting the detailed mathematical foundations and error bounds, we give an overview of the algorithm
and how it is structured.

Standard Payne--Hanek implementations (such as those in fdlibm~\cite{F}, glibc~\cite{I},
CRLIBM~\cite{CR}, and CORE-MATH~\cite{CM}) often rely on multi-word arithmetic or multi-word integer conversions to
compute intermediate digits of $x \cdot 2^N/\pi$. While effective, using multi-word integers has a few practical
drawbacks on modern hardware:
\begin{itemize}
    \item \textbf{Branches and serialization:} Propagating integer carries and extracting bits often requires
        conditional branches and sequential steps.
    \item \textbf{Register conversions:} Converting the resulting multi-word integer back into a
        floating-point double-double ($u_h + u_\ell$) requires moving data between general-purpose integer
        registers and floating-point or vector registers.
    \item \textbf{SIMD vectorization:} Large per-exponent lookup tables or multi-word integer carry chains
        make vectorization difficult.
\end{itemize}

In contrast, our algorithm performs the reduction entirely using floating-point arithmetic and hardware fused
multiply-add (FMA) instructions. By grouping exponents to index a small lookup table and choosing the limb
alignments carefully, the integer quotient and the reduced argument can be computed without branches (as
discussed in Remark~\ref{rem:integer_table}, this principle also extends to integer fixed-point
implementations).

The algorithm consists of three main phases, as shown in the dataflow diagram in Figure~\ref{fig:pipeline}:
\begin{itemize}
    \item \textbf{Phase 1: Exponent grouping and input scaling.}
    Instead of storing an entry for each individual exponent (which would require a 32~KB table), we group
    exponents into blocks of $M=16$. For an input $x = (-1)^{s_x} 2^{e_x} m_x$, the table index $i =
    \floor{(e_x - 62)/16}$ is computed by shifting the exponent. We load four double-precision limbs $D_0,
    D_1, D_2, D_3$ from a 2~KB table $\PiInv$ (\texttt{ONE\_TWENTY\_EIGHT\_OVER\_PI} in LLVM libc), and scale
    $x$ down by $2^{-16i}$ to $|x_r| \in [2^{62}, 2^{78})$ using an integer subtraction on the exponent.
    
    \item \textbf{Phase 2: Parallel products and quotient extraction.}
    Three exact products $x_r \cdot D_0, x_r \cdot D_1, x_r \cdot D_2$ are computed, yielding exact high and
    low pairs $(a_h, a_\ell)$, $(b_h, b_\ell)$, and $(c_h, c_\ell)$ respectively. By design, the leading
    product $a_h = (x_r \cdot D_0)_h$ satisfies $\ulp{a_h} \ge 2^{N+1}$; therefore, $a_h \equiv 0
    \pmod{2^{N+1}}$ and can simply be dropped.
    The binary point falls across $a_\ell$ and $b_h$. We compute the integer quotient $\hat{k} =
    \nearestint{a_\ell \oplus b_h}$ and the high part of the remainder $v = (a_\ell \ominus \hat{k}) \oplus
    b_h$. Because $a_\ell - \hat{k}$ fits in $p$ bits on the grid of multiples of $\eta = \min(\uls{a_\ell},
    1)$, the subtraction $a_\ell \ominus \hat{k}$ is exact, introducing no rounding error.
    
    \item \textbf{Phase 3: Tail accumulation.}
    The remaining terms $b_\ell, c_h, c_\ell, x_r D_3$ represent bits of lower significance. Since
    $\uls{b_\ell} \ge \ulp{c_h}$, Dekker's Fast2Sum~\cite{Dekker} is exact under round-to-nearest without a
    comparison~\cite{JZ25}: $q_h + q_\ell = \fts{b_\ell, c_h}$. The trailing term $x_r D_3$ is accumulated
    into $c_\ell$ with an FMA to form $r = \circ(c_\ell + x_r D_3)$. Combining $v$ and $q_h$ with another
    Fast2Sum and summing the low terms yields the reduced double-double residual $\hat{u} = u_h + u_\ell$ with
    error bounded by $2^{-105}$.
\end{itemize}

\begin{figure}[htbp]
\centering
\begin{tikzpicture}[
    scale=0.85, every node/.style={transform shape},
    box/.style={rectangle, draw, rounded corners=2pt, minimum height=1.7em, minimum width=4.5em,
        text centered, font=\footnotesize},
    tablebox/.style={rectangle, draw, fill=blue!8, rounded corners=2pt, minimum height=1.7em,
        text centered, font=\footnotesize},
    op/.style={circle, draw, fill=orange!15, inner sep=1.5pt, font=\scriptsize},
    res/.style={rectangle, draw, fill=green!10, rounded corners=2pt, minimum height=1.7em,
        text centered, font=\footnotesize\bfseries},
    arr/.style={-stealth, thick, draw=gray!80},
    bus/.style={thick, draw=gray!80}
]
    \node[box, fill=gray!10] (x) at (-1.8, 6.7) {$x$};
    \node[tablebox] (idx) at (1.8, 6.7) {Index $i = \floor{\frac{e_x - 62}{16}}$};

    \node[box, fill=gray!10] (xr) at (-1.8, 5.5) {$x_r = x \cdot 2^{-16i}$};
    \node[tablebox] (lut) at (1.8, 5.5) {$\operatorname{PI\_INV}[i] \to (D_0, D_1, D_2, D_3)$};

    \draw[arr] (x) -- (idx);
    \draw[arr] (x) -- (xr);
    \draw[arr] (idx) -- (lut);
    \draw[arr] (idx.south) to[out=210, in=30] (xr.north east);

    \draw[bus] (-4.5, 4.6) -- (4.5, 4.6);
    \draw[bus] (xr.south) -- (-1.8, 4.6);
    \draw[bus] (lut.south) -- (1.8, 4.6);

    \node[box, fill=purple!8] (p0) at (-4.5, 3.7) {$a_h + a_\ell = x_r D_0$};
    \node[box, fill=purple!8] (p1) at (-1.5, 3.7) {$b_h + b_\ell = x_r D_1$};
    \node[box, fill=purple!8] (p2) at (1.5, 3.7) {$c_h + c_\ell = x_r D_2$};
    \node[box, fill=purple!8] (p3) at (4.5, 3.7) {$x_r D_3$};

    \draw[arr] (-4.5, 4.6) -- (p0.north);
    \draw[arr] (-1.5, 4.6) -- (p1.north);
    \draw[arr] (1.5, 4.6) -- (p2.north);
    \draw[arr] (4.5, 4.6) -- (p3.north);

    \node[op] (kadd) at (-3.0, 2.6) {$\oplus$};
    \node[res] (k) at (-4.5, 1.5) {$\hat{k} = \nearestint{a_\ell \oplus b_h}$};
    \node[op] (vsub) at (-2.7, 1.5) {$\ominus$};
    \node[op] (vadd) at (-1.5, 1.5) {$\oplus$};
    \node[box, fill=yellow!15] (v) at (-1.5, 0.4) {$v = (a_\ell \ominus \hat{k}) \oplus b_h$};

    \draw[arr] (p0.south) to[out=270, in=120] (kadd);
    \draw[arr] (p1.south) to[out=270, in=60] (kadd);
    \draw[arr] (kadd) -- (k);
    \draw[arr] (k) -- (vsub);
    \draw[arr] (p0.south) to[out=270, in=150] (vsub);
    \draw[arr] (vsub) -- (vadd);
    \draw[arr] (p1.south) to[out=270, in=90] (vadd);
    \draw[arr] (vadd) -- (v);

    \node[box, fill=cyan!10] (f2s1) at (1.5, 2.5) {$\fts{b_\ell, c_h}$};
    \node[box, fill=yellow!15] (q) at (1.5, 1.4) {$q_h + q_\ell$};
    \draw[arr] (p1.south) to[out=270, in=140] (f2s1.north);
    \draw[arr] (p2.south) to[out=270, in=90] (f2s1.north);
    \draw[arr] (f2s1) -- (q);

    \node[op] (rtail) at (4.5, 2.5) {$\fma$};
    \node[box, fill=yellow!15] (r) at (4.5, 1.4) {$r = \circ(c_\ell + x_r D_3)$};
    \draw[arr] (p2.south) to[out=270, in=140] (rtail);
    \draw[arr] (p3.south) -- (rtail);
    \draw[arr] (rtail) -- (r);

    \node[box, fill=cyan!10] (f2s2) at (0.0, -0.5) {$\fts{v, q_h} \to (u_h, t_\ell)$};
    \draw[arr] (v.south) to[out=270, in=150] (f2s2.north);
    \draw[arr] (q.south) to[out=270, in=30] (f2s2.north);

    \node[res] (res) at (0.0, -1.7) {$\hat{u} = u_h + (t_\ell \oplus (q_\ell \oplus r))$};
    \draw[arr] (f2s2) -- (res);
    \draw[arr] (r.south) to[out=270, in=0] (res.east);

\end{tikzpicture}
\caption{Dataflow pipeline of the branch-free floating-point range reduction.} \lab{fig:pipeline}
\end{figure}

The remainder of this paper is organized as follows. Section~\ref{sec:NotCon} establishes notations and conventions.
Section~\ref{sec:WorstCase} derives the required number of limbs from a truncation error budget, for both
worst-case accuracy and the fast path of correctly rounded implementations. Section~\ref{sec:PH}
presents branch-free floating-point steps to compute $k$ and the high part of the residual. Section~\ref{sec:LUT}
details the exponent-indexed lookup table and memory hierarchy trade-offs. Section~\ref{sec:Main} details the complete
algorithm and proves its correctness. Section~\ref{sec:rounding} analyzes behavior under directed rounding modes
and architectures without FMA. Section~\ref{sec:perf} evaluates benchmark performance and SIMD considerations,
and Section~\ref{sec:Rem}
provides concluding remarks.

\section{Notations and Conventions} \lab{sec:NotCon}

The notations and conventions used throughout this paper are:
\begin{itemize}
    \item $\mathbb{D} = \mathbb{Z}\brk{\frac{1}{2}} = \cbrk{m \cdot 2^e \mid m, e \in \mathbb{Z}}$:
        the ring of dyadic numbers.
    \item $\F$: the finite set of normal floating-point numbers in a target format (e.g., IEEE 754 binary64).
    \item $p$: the precision (number of significand bits) of normal numbers in $\F$ ($p = 53$ for double precision).
    \item $\circ : \R \longrightarrow \F$:
        a rounding operator (defaulting to round-to-nearest, ties-to-even, $\operatorname{RN}$).
    \item $\oplus, \ominus, \otimes$:
        floating-point addition, subtraction, and multiplication in $\F$, where $a \oplus b := \circ(a + b)$.
\end{itemize}

Whereas terms such as $\text{msb}$ and $\text{lsb}$ designate bit positions in standards and general literature
(e.g., IEEE 754-2019~\cite{IEEE754}, Goldberg~\cite{G}, Higham~\cite{H}), precision analysis in the floating-point
literature (e.g., Rump et al.~\cite{ROO}, Muller et al.~\cite{Mu2}, and Joldes et al.~\cite{JMPT}) relies on the
unit in the first place ($\ufp$) and unit in the last significant place ($\uls$) to represent the weights of bits:
\begin{Def} \lab{def:UfpUls}
Let $x \in \mathbb{D}^* := \mathbb{D} \setminus \{0\}$ be a non-zero dyadic number. We define:
\begin{itemize}
    \item $\ufp{x} := 2^{\floor{\log_2{\abs{x}}}}$: the weight of the most significant bit of $x$.
    \item $\uls{x} := \max \cbrk{2^m \mid x \cdot 2^{-m} \in \mathbb{Z}}$:
        the weight of the least significant non-zero bit of $x$.
\end{itemize}
For completeness, we adopt the conventions $\ufp{0} := 0$, $\ulp{0} := 0$, and $\uls{0} := +\infty$.
\end{Def}

For a normal number $x \in \F$ of precision $p$, the unit in the last place is uniquely defined as:
\eq{}{\ulp{x} = \ulp_\F(x) = \ufp{x} \cdot 2^{-p + 1}.}

\begin{Lem} \lab{Lem:UfpUlsProp}
    Basic properties of $\ufp$ and $\uls$:
    \begin{itemize}
        \item[a.] A dyadic number $x \in \mathbb{D}^*$ with $|x|$ within the normal floating-point range is
            representable in $\F$ of precision $p$ if and only if $\uls{x} \geq \ufp{x} \cdot 2^{-p + 1}$.
        \item[b.] (Additivity) For any $x, y \in \mathbb{D}^*$ with $x + y \ne 0$:
            \eq{}{
                \begin{aligned}
                    \uls{x + y} &\geq \min \cbrk{\uls{x}, \uls{y}}, \\
                    \ufp{x + y} &\leq 2 \max \cbrk{\ufp{x}, \ufp{y}}, \\
                    \ufp{x + y} &\geq \max \cbrk{\ufp{x}, \ufp{y}} \quad \text{if } x \cdot y > 0.
                \end{aligned}
            }
        \item[c.] (Multiplicativity) For any $x, y \in \mathbb{D}^*$:
            \eq{}{
                \begin{aligned}
                    \uls{x \cdot y} &= \uls{x} \cdot \uls{y}, \\
                    \ufp{x} \cdot \ufp{y} &\leq \ufp{x \cdot y} \leq 2 \ufp{x} \cdot \ufp{y}.
                \end{aligned}
            }
    \end{itemize}
\end{Lem}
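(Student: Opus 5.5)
The plan is to write every nonzero dyadic number in a normalized form $x = \pm 2^{e} m$ with $e \in \Z$ and $m$ an odd positive integer, so that $\uls{x} = 2^{e}$ exactly. Likewise $\ufp{x} = 2^{E}$ with $2^{E} \le \abs{x} < 2^{E+1}$. Every claimed property then reduces to elementary statements about odd integers and powers of two. The three parts are independent, and I would treat them in order.

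For (a), take $\abs{x}$ in the normal range. I would argue that $x \in \F$ iff $x = \pm M \cdot 2^{E-p+1}$ with $M$ an integer, $2^{p-1} \le M < 2^p$, and $E$ the exponent of $\ufp{x}$. The bound on $M$ is automatic from $\ufp{x} = 2^{E}$. So representability is equivalent to $x \cdot 2^{-(E-p+1)} \in \Z$. By the definition of $\uls$ as a maximum, this holds iff $2^{E-p+1} \le \uls{x}$. That last inequality is exactly $\uls{x} \ge \ufp{x}\, 2^{-p+1}$.

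For (b), additivity of $\uls$ follows from the fact that $x$ and $y$ are both integer multiples of $\mu := \min\{\uls{x},\uls{y}\}$. Hence $x+y$ is a nonzero integer multiple of $\mu$, and maximality in the definition gives $\uls{x+y} \ge \mu$. For the upper bound on $\ufp$, let $U = \max\{\ufp{x},\ufp{y}\}$. Then $\abs{x+y} \le \abs{x}+\abs{y} < 2U + 2U = 4U$, so $\ufp{x+y} \le 2U$, since $\ufp$ is a power of two not exceeding $\abs{x+y} < 4U$. For same-sign summands, $\abs{x+y} = \abs{x}+\abs{y} \ge \max\{\abs{x},\abs{y}\} \ge U$, and monotonicity of $\ufp$ gives the lower bound.

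For (c), write $x = 2^{e}m$ and $y = 2^{f}n$ with $m, n$ odd. Then $xy = 2^{e+f}mn$ with $mn$ odd, so $\uls{xy} = 2^{e+f} = \uls{x}\uls{y}$. For $\ufp$, multiply the bounds $\ufp{x} \le \abs{x} < 2\ufp{x}$ and $\ufp{y} \le \abs{y} < 2\ufp{y}$ to get $\ufp{x}\ufp{y} \le \abs{xy} < 4\ufp{x}\ufp{y}$. Since $\ufp{xy}$ is the largest power of two at most $\abs{xy}$, it lies between $\ufp{x}\ufp{y}$ and $2\ufp{x}\ufp{y}$.

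None of these steps is a genuine obstacle. The only point needing care is (a): the correspondence between the $p$-bit significand representation and the grid $2^{E-p+1}\Z$ must use the same exponent $E$ as $\ufp{x}$, which is why the normal-range hypothesis is needed.
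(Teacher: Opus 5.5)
Your proposal is correct. The paper states this lemma without proof, treating these properties as standard background from the $\operatorname{ufp}$/$\operatorname{uls}$ literature it cites (Rump et al., Muller et al.), so there is no argument of the paper's to compare against. Your reduction to the normalized form $x = \pm 2^{e} m$ with $m$ odd is the natural elementary verification, and each of (a)--(c) goes through as you describe.

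One step deserves a clause of justification: the ``if'' direction of (a), and likewise the claim in (b) that $x$ and $y$ are multiples of $\mu$. These rest on $x = \pm \uls{x} \cdot m$ being an integer multiple of every power of two not exceeding $\uls{x}$, rather than on maximality alone. Your normal form gives this immediately.
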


\section{Error budgets, worst-case distance of range reduction, and number of limbs} \lab{sec:WorstCase}

For an input format $\F$ of precision $p$, we define the worst-case distance of trigonometric range reduction by:
\eq{WorstCase}{
    w^* = w_N(\F) = \min_{\substack{z \in \F, |z| \ge \pi/2^{N+1} \\ j \in \mathbb{Z}}}
        \abs{z \cdot \frac{2^N}{\pi} - j}.
}
Restricting to $|z| \ge \pi/2^{N+1}$ excludes the trivial minimum: without this constraint, choosing $z \to 0$ and
$j = 0$ would yield 0.

\begin{Rem} \lab{Rem:WorstCaseScale}
In the literature (e.g., \cite{BDKMR, Mu1, Mu2}), the worst-case distance is often defined as:
\eq{}{v_N(\F) = \min_{\substack{z \in \F, |z| \ge \pi/2^{N+1} \\ j \in \Z}} \abs{z - j \cdot \frac{\pi}{2^N}}.}
A drawback of $v_N(\F)$ is its dependency on $N$, since $v_N(\F) = 2^{-N} v_0(2^N \F)$.
In contrast, $w_N(\F)$ satisfies:
\eq{}{
    w_N(\F) = \min_{z \in 2^N \F, j \in \mathbb{Z}} \abs{\frac{z}{\pi} - j} = w_0(2^N \F).
}
Except for the topmost $N$ binades where $2^N z$ exceeds $\text{DBL\_MAX}$, scaling by $2^N$ shifts the
exponent binades. An exhaustive search across the topmost binades confirms that for $N \le 15$, no new worst
cases occur, so $w_N(\F) = w_0(\F) = w^*$. This provides a single uniform constant $w^*$ to bound errors across
arbitrary reduction moduli.
\RemEnd
\end{Rem}

Recall from decomposition~\eqr{TwoNOverPi} that $\frac{2^N}{\pi} = \sum_{i=-\infty}^\infty C_i$.
To drop all limbs $C_i$ with $i < 0$ in the computation modulo $2^{N+1}$, it suffices that:
\eq{}{\uls{x \cdot C_{-1}} = \uls{x} \cdot \uls{C_{-1}} \geq 2^{N + 1}.}
To guarantee this condition, it suffices to have:
\eq{}{\uls{x} \cdot \ufp{C_0} \geq 2^N.}
Using the sharp lower bound $\uls{x} \geq \ufp{x} \cdot 2^{-p + 1}$, we obtain the sufficient condition:
\eq{C0Bound}{\ufp{x} \cdot \ufp{C_0} \geq 2^{N + p - 1}.}

Conversely, when using $L$ limbs $C_0, \dots, C_{L-1}$ with $\ufp{C_i} < \uls{C_{i-1}}$, the truncation error
modulo $2^{N+1}$ is bounded by:
\eq{TruncErr}{
\begin{aligned}
    \abs{x \cdot \prt{\frac{2^N}{\pi} - \sum_{i=-\infty}^{L-1} C_i} \bmod 2^{N+1}}
        &= \abs{x \cdot \sum_{i=L}^\infty C_i} \\
        &\leq 2 \abs{x} \cdot \ufp{C_L} < 4 \ufp{x} \cdot \ufp{C_L}.
\end{aligned}
}
This is an absolute error on the normalized residual $u = \prt{x \cdot \frac{2^N}{\pi} - k} \bmod 2^{N+1}$
approximated in~\eqr{UD}, measured in the same unit as $w^*$. To keep it below an absolute error budget
$\varepsilon > 0$, it suffices that:
\eq{SuffBounds}{
\begin{cases}
    \ufp{x} \cdot \ufp{C_0} \geq 2^{N + p - 1}, \\
    \ufp{x} \cdot \ufp{C_L} \leq 2^{-2} \varepsilon.
\end{cases}
}

If each limb $C_i$ has precision $p_c$ bits (so $\ufp{C_L} \leq \ufp{C_0} \cdot 2^{-L p_c}$), we
combine~\eqr{SuffBounds} into:
\eq{LimbBounds1}{
    2^{N + p - 1} \leq \ufp{x} \cdot \ufp{C_0} \leq \varepsilon \cdot 2^{L p_c - 2}.
}
This yields the resulting lower bound on the number of limbs $L$:
\eq{LimbBounds2}{L \geq \frac{N + p + 1 - \log_2 \varepsilon}{p_c}.}

Since $y = \frac{\pi}{2^N} \cdot u$ by~\eqr{KY1}, the corresponding error on $y$ is below
$\varepsilon \cdot \frac{\pi}{2^N}$. It stays below $\ulp{u}$ whenever $\ulp{u} \geq \varepsilon$, i.e., for a
power of two $\varepsilon$, whenever $|u| \geq 2^{p-1} \varepsilon$. The appropriate budget depends on how the
reduced argument is used:
\begin{itemize}
    \item \emph{Worst-case accuracy}, $\varepsilon = \ulp{w^*} = \ufp{w^*} \cdot 2^{-p+1}$: since
        $|u| \geq w^* \geq \ufp{w^*} = 2^{p-1} \varepsilon$ whenever $|k| \geq 1$, the truncation error stays
        below $\ulp{u}$ for every input that requires reduction. This is needed when the result is returned
        without a rounding test, e.g., in a single-stage implementation. Then~\eqr{LimbBounds2} becomes:
        \eq{LimbBoundsWC}{L \geq \frac{N + 2p - \log_2 \prt{\ufp{w^*}}}{p_c}.}
    \item \emph{Fast path of a correctly rounded implementation}, $\varepsilon = 2^{-p-14}$: in a two-stage
        implementation following Ziv's strategy, the rounding test of the fast path includes the absolute
        error $\varepsilon \cdot \frac{\pi}{2^N}$ on $y$ in its error bound. This error is negligible unless
        $x$ is close to a zero of $\sin$ or $\cos$, and the inputs for which it matters are caught by the test
        and resolved by the accurate path, so $\varepsilon$ can be far larger than $\ulp{w^*}$. Assuming,
        heuristically, that $u$ is uniformly distributed on $[-1/2, 1/2]$, the condition
        $|u| \geq 2^{p-1} \varepsilon = 2^{-15}$ fails for only about $2^{-14} \approx 6.1 \times 10^{-5}$ of
        the inputs. Then~\eqr{LimbBounds2} becomes:
        \eq{LimbBoundsFP}{L \geq \frac{N + 2p + 15}{p_c}.}
\end{itemize}
Compared with~\eqr{LimbBoundsWC}, the fast path replaces the $-\log_2 \prt{\ufp{w^*}}$ bits ($62$ for binary64
and $30$ for binary32) by $15$.

Table~\ref{tab:limbs} reports the minimal number of limbs required for $N \le 15$ across common precisions,
for both budgets. For binary64, the fast-path budget $\varepsilon = 2^{-67}$ is met by three limbs of the table
in Section~\ref{sec:LUT}, where~\eqr{MBoundEps} certifies budgets down to $2^{-69}$. The worst-case values
$w^*$ were computed via continued fractions of $2^N/\pi$ following~\cite{Mu2}. For binary64, the fast path saves
one limb regardless of the precise budget: $L_{\min} = 3$ for every $\varepsilon \geq 2^{-84}$. For binary32 with
$p_c = 24$, condition~\eqr{LimbBoundsFP} becomes:
\eq{}{
    L \ge \frac{N + 63}{24},
}
so the fast path saves one limb only for $N \le 9$, while the worst-case budget requires four limbs for all
$N \le 15$.

\begin{table}[htbp]
    \caption{Minimal Number of Limbs} \lab{tab:limbs}
    \begin{center}
    \begin{tabular}{ccccccc}
    \toprule
        & & & & & \multicolumn{2}{c}{$L_{\min}$} \\
    \cmidrule(lr){6-7}
        $p$ & $p_c$ & $N$ & $w^*$ & $\log_2 \prt{\ufp{w^*}}$ & worst case & fast path \\
            & & & & & ($\varepsilon = \ulp{w^*}$) & ($\varepsilon = 2^{-p-14}$) \\
    \midrule
        53 & 53 & $\le 15$ & $2.98394250375 \times 10^{-19}$ & $-62$ & 4 & 3 \\
        53 & 51 & $\le 15$ & $2.98394250375 \times 10^{-19}$ & $-62$ & 4 & 3 \\
        24 & 24 & $\le 9$ & $1.02799438139 \times 10^{-9}$ & $-30$ & 4 & 3 \\
        24 & 24 & $10$--$15$ & $1.02799438139 \times 10^{-9}$ & $-30$ & 4 & 4 \\
        24 & 53 & $\le 15$ & $1.02799438139 \times 10^{-9}$ & $-30$ & 2 & 2 \\
    \bottomrule
    \end{tabular}
    \end{center}
\end{table}

Although the fast-path budget allows three limbs for binary64, our implementation keeps four limbs and meets
the worst-case budget $\varepsilon = \ulp{w^*}$. The same routine then serves both as the complete range
reduction of a single-stage implementation and as the fast path of a correctly rounded one
(Remark~\ref{rem:fastpath}).

\section{Branch-Free Floating-Point Reduction Steps} \lab{sec:PH}

For the remainder of this paper, we focus on double precision ($p = 53$) using $L = 4$ limbs of
precision $p_c \le p$ (with $p_c = 51$ in our implementation):
\eq{}{ C_0 + C_1 + C_2 + C_3 \approx \frac{2^N}{\pi} \bmod 2^{N + 1}.}

The product $x \cdot C_i$ is represented exactly as a sum of high and low floating-point parts:
\eq{}{x \cdot C_i = (x \cdot C_i)_h + (x \cdot C_i)_\ell,}
where
\eq{SplitBound}{
    \ufp{(x \cdot C_i)_\ell} \leq \ufp{(x \cdot C_i)_h} \cdot 2^{-p} = \frac{\ulp{(x \cdot C_i)_h}}{2}.
}

\begin{Rem} \lab{Rem:UlpLsb}
    For the low part $(x \cdot C_i)_\ell$, we bound the least significant bit by:
    \eq{}{\uls{(x \cdot C_i)_\ell} \ge \uls{x \cdot C_i} = \uls{x} \cdot \uls{C_i},}
    with equality when the low part is non-zero.
    This bound is sharper than $\ulp{(x \cdot C_i)_\ell}$ because the low part can be zero or have small
    magnitude. Conversely, for the high part, $\ulp{(x \cdot C_i)_h} = \ufp{x \cdot C_i} \cdot 2^{-p+1}$
    provides the appropriate bound.
    \RemEnd
\end{Rem}

\begin{Lem} \lab{Lem:LsbLoHi}
    The low part $(x \cdot C_i)_\ell$ and high part $(x \cdot C_{i + 1})_h$ satisfy:
    \eq{LsbLoHi}{
        \uls{(x \cdot C_i)_\ell} \geq \ulp{(x \cdot C_{i + 1})_h}.
    }
\end{Lem}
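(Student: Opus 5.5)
We reduce both sides of \eqr{LsbLoHi} to statements about $x$ and the limbs, and then compare using the limb condition $\ufp{C_{i+1}} < \uls{C_i}$ from the decomposition~\eqr{TwoNOverPi}.

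For the left side we use Remark~\ref{Rem:UlpLsb}:
\[
\uls{(x \cdot C_i)_\ell} \geq \uls{x} \cdot \uls{C_i}.
\]
If the low part is zero, then $\uls{0} = +\infty$ and there is nothing to prove, so we may assume the bound is a finite power of two. For the right side we use the expression $\ulp{(x \cdot C_{i+1})_h} = \ufp{x \cdot C_{i+1}} \cdot 2^{-p+1}$ from the same remark. (If the rounded product $(x C_{i+1})_h$ happened to land exactly at $2\ufp{x C_{i+1}}$ after rounding, its ulp would double; I return to this below.) Lemma~\ref{Lem:UfpUlsProp}(c) then gives
\[
\ufp{x \cdot C_{i+1}} \leq 2\,\ufp{x}\,\ufp{C_{i+1}}.
\]

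Next we compare the two sides. Since $x \in \F$ is normal, Lemma~\ref{Lem:UfpUlsProp}(a) gives $\uls{x} \geq \ufp{x} \cdot 2^{-p+1}$. Both $\ufp{C_{i+1}}$ and $\uls{C_i}$ are powers of two, so the strict inequality $\ufp{C_{i+1}} < \uls{C_i}$ upgrades to $2\,\ufp{C_{i+1}} \leq \uls{C_i}$. Chaining these:
\[
\ulp{(x C_{i+1})_h} \leq 2\,\ufp{x}\,\ufp{C_{i+1}} \cdot 2^{-p+1} \leq \ufp{x} \cdot 2^{-p+1} \cdot \uls{C_i} \leq \uls{x}\,\uls{C_i} \leq \uls{(x C_i)_\ell},
\]
which is \eqr{LsbLoHi}.

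The main obstacle is the factor $2$ coming from $\ufp{x C_{i+1}} \leq 2\ufp{x}\ufp{C_{i+1}}$, together with the possible extra doubling when rounding carries $(x C_{i+1})_h$ up to the next binade. The first factor is absorbed exactly by the power-of-two gap $2\ufp{C_{i+1}} \leq \uls{C_i}$. The second I would handle by showing it cannot occur. Since $x C_{i+1}$ is the exact product of two precision-$\le p$ numbers, its significand has at most $2p$ bits, and I would check directly that it cannot round up to $2\ufp{x C_{i+1}}$ unless the product is already that power of two. If this check fails, the fallback is to require the slightly stronger limb separation that the table construction of Section~\ref{sec:LUT} provides anyway (limbs of precision $p_c = 51 < p$ leave slack).
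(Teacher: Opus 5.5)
There is a gap at exactly the step you flagged. The rest of your chain is fine, but your plan for that step would fail. You intend to show that an exact product of two precision-$p$ numbers cannot round up to $2\ufp{x C_{i+1}}$ unless it already equals that power of two. That claim is false. For $p = 53$, take $x = 2 - 2^{-49}$ and the $51$-bit limb $C_{i+1} = 1 + 2^{-50}$. Then $x C_{i+1} = 2 - 2^{-99}$, which rounds to nearest to $2$. So $\ulp{(x C_{i+1})_h} = 2^{-51}$, whereas $\ufp{x C_{i+1}} \cdot 2^{-p+1} = 2^{-52}$.

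Your fallback does not hold as stated either. Greedy round-to-nearest limbs only guarantee $\uls{C_i} \ge 2\ufp{C_{i+1}}$, with equality possible when $C_{i+1} = \pm \frac{1}{2}\ulp_{p_c}(C_i)$. Having $p_c < p$ adds no separation between consecutive limbs. This is why the paper's proof bounds $\ufp{(x C_{i+1})_h} \le 4 \ufp{x}\ufp{C_{i+1}}$ and absorbs this with $\uls{C_i} \ge 4\ufp{C_{i+1}}$, which holds when the gap is at least $p_c + 1$. It then treats the boundary case separately: there $C_{i+1}$ is a power of two, so $x C_{i+1}$ is exact and no carry occurs.

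The missing observation is that a carry is harmless where it can happen and impossible where it would hurt. The significands of $x$ and $C_{i+1}$ are at most $2 - 2^{-p+1}$, so
\[
\abs{x C_{i+1}} \le (2 - 2^{-p+1})^2 \ufp{x}\ufp{C_{i+1}} \le (4 - 2^{-p+2}) \ufp{x}\ufp{C_{i+1}}.
\]
The right-hand side is the largest element of $\F$ below $4\ufp{x}\ufp{C_{i+1}}$. By monotonicity of rounding, $\ufp{(x C_{i+1})_h} \le 2\ufp{x}\ufp{C_{i+1}}$ in every rounding mode; a carry out of the lower half-range lands exactly on $2\ufp{x}\ufp{C_{i+1}}$.

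This gives $\ulp{(x C_{i+1})_h} \le 2\ufp{x}\ufp{C_{i+1}} \cdot 2^{-p+1}$ directly. The rest of your chain then goes through using only $2\ufp{C_{i+1}} \le \uls{C_i}$. Repaired this way, your argument is slightly cleaner than the paper's: it needs no case split, and nothing about the limbs beyond $\ufp{C_{i+1}} < \uls{C_i}$.
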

\begin{proof}
    For greedy round-to-nearest limbs of precision $p_c$ (with $\ulp_{p_c}(z) := \ufp{z} \cdot 2^{-p_c + 1}$),
    $|C_{i+1}| \le \frac{1}{2} \ulp_{p_c}(C_i) = \ufp{C_i} \cdot 2^{-p_c}$.
    When consecutive limbs have precision gap $\ge p_c + 1$, $\uls{C_i} \ge 4 \ufp{C_{i+1}}$, so:
    \begin{align*}
        \uls{(x \cdot C_i)_\ell} &\ge \uls{x \cdot C_i} = \uls{x} \cdot \uls{C_i} \\
            &\geq \uls{x} \cdot 4 \ufp{C_{i + 1}} \\
            &\geq \ufp{x} \cdot 2^{-p + 1} \cdot 4 \ufp{C_{i + 1}} \\
            &\geq \ufp{(x \cdot C_{i + 1})_h} \cdot 2^{-p + 1} \\
            &= \ulp{(x \cdot C_{i + 1})_h}.
    \end{align*}
    In the boundary case where consecutive limbs have gap $p_c$ (so $C_{i+1} = \pm \frac{1}{2} \ulp_{p_c}(C_i)$
    is an exact power of two), $(x \cdot C_{i+1})_h = x \cdot C_{i+1}$ has $\ufp{(x \cdot C_{i+1})_h} = \ufp{x}
    \ufp{C_{i+1}}$. Then $\uls{C_i} \ge 2 \ufp{C_{i+1}}$, and replacing the factor of 4 with 2 yields the same
    inequality $\uls{(x \cdot C_i)_\ell} \ge \ulp{(x \cdot C_{i+1})_h}$.
\end{proof}

The alignment of the products $x \cdot C_i$ is illustrated in Figure~\ref{fig:align}.

\begin{figure}[htbp]
    \centering
    \begin{tikzpicture}[scale=0.95]
        \draw (0, 0) to node[midway, above] {$(x \cdot C_0)_h$}(2, 0);
        \draw (2, 0) to node[midway, above] {$(x \cdot C_0)_\ell$}(4, 0);
        \draw (0, -0.1) to (0, 0.1);
        \draw (2, -0.1) to (2, 0.1);
        \draw (4, -0.1) to (4, 0.1);

        \draw (2, -0.9) to node[midway, above] {$(x \cdot C_1)_h$}(4, -0.9);
        \draw (4, -0.9) to node[midway, above] {$(x \cdot C_1)_\ell$}(6, -0.9);
        \draw (2, -1.0) to (2, -0.8);
        \draw (4, -1.0) to (4, -0.8);
        \draw (6, -1.0) to (6, -0.8);

        \draw (4, -1.8) to node[midway, above] {$(x \cdot C_2)_h$}(6, -1.8);
        \draw (6, -1.8) to node[midway, above] {$(x \cdot C_2)_\ell$}(8, -1.8);
        \draw (4, -1.9) to (4, -1.7);
        \draw (6, -1.9) to (6, -1.7);
        \draw (8, -1.9) to (8, -1.7);

        \draw (6, -2.7) to node[midway, above] {$(x \cdot C_3)_h$}(8, -2.7);
        \draw (6, -2.8) to (6, -2.6);
        \draw (8, -2.8) to (8, -2.6);
\end{tikzpicture}
\caption{\lab{fig:align} Alignment of limb products $x \cdot C_i$.}
\end{figure}

By strengthening the lower bound of~\eqr{C0Bound} slightly to:
\eq{DropxC0h}{\ufp{x} \cdot \ufp{C_0} \geq 2^{N + p},}
we guarantee that $\ulp{(x \cdot C_0)_h} \geq 2^{N + 1}$. Therefore, $(x \cdot C_0)_h$ is an integer multiple
of $2^{N+1}$, meaning $(x \cdot C_0)_h \equiv 0 \pmod{2^{N+1}}$, so $a_h$ can be discarded.

Furthermore, by imposing:
\eq{DropxC1l}{2^{-p + 1} \leq \ulp{(x \cdot C_1)_h} \leq \frac{1}{4},}
the binary point is strictly contained within $(x \cdot C_0)_\ell$ and $(x \cdot C_1)_h$. This allows
computing $k$ and the high part $v$ of the residual in four branch-free floating-point steps
(Algorithm~\ref{alg:w}).

\begin{algorithm}[H]
    \caption{First Steps of Range Reduction} \lab{alg:w}
    \begin{algorithmic}[1]
        \STATE $a_h + a_\ell \leftarrow x \cdot C_0$ \COMMENT{$a_h \equiv 0 \pmod{2^{N+1}}$ is unused}
        \STATE $b_h + b_\ell \leftarrow x \cdot C_1$
        \STATE $\hat{k} \leftarrow \nearestint{a_\ell \oplus b_h}$
        \STATE $v \leftarrow \prt{a_\ell \ominus \hat{k}} \oplus b_h$
    \end{algorithmic}
\end{algorithm}

\begin{Thm} \lab{thm:FirstSteps}
    Assuming conditions~\eqr{DropxC0h},~\eqr{DropxC1l}, and $\ufp{x} \cdot \ufp{C_0} \le 2^{p+p_c-4}$, the outputs of
    Algorithm~\ref{alg:w} satisfy:
    \begin{align}
        \abs{(a_\ell + b_h) - \hat{k}} &\leq 0.5 + \max(\ulp{a_\ell}, \ulp{b_h}) \le \frac{3}{4}, \label{eq:kBounds1} \\
        x \cdot (C_0 + C_1) &\equiv \hat{k} + v + b_\ell \pmod{2^{N + 1}}. \label{eq:kBounds2}
    \end{align}
    Furthermore, with step~3 rounded to nearest, both floating-point operations in step~4 of Algorithm~\ref{alg:w}
    are exact in $\F$ across all IEEE 754 rounding modes, and $|v| \le 3/4 < 1$ (see
    Section~\ref{subsec:dir_round} for directed rounding in step~3).
\end{Thm}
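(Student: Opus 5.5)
The plan is to use the splitting bound~\eqr{SplitBound}, Lemma~\ref{Lem:LsbLoHi}, and the two assumed magnitude conditions to place every quantity on an explicit dyadic grid with an explicit magnitude. Exactness then reduces to Lemma~\ref{Lem:UfpUlsProp}(a): a dyadic number is in $\F$ once its magnitude divided by the grid spacing is below $2^p$. Write $a_h+a_\ell = x C_0$ and $b_h+b_\ell = xC_1$. The upper hypothesis $\ufp{x}\ufp{C_0}\le 2^{p+p_c-4}$ gives $\ufp{a_h}\le 2^{p+p_c-3}$. Hence $\ulp{a_h}\le 2^{p_c-2}$, and by~\eqr{SplitBound} $|a_\ell|\le 2^{p_c-3}$, so $\ulp{a_\ell}\le 2^{p_c-p-2}\le 1/4$. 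Together with~\eqr{DropxC1l}, which gives $\ulp{b_h}\le 1/4$ and $|b_h|<2^p\ulp{b_h}\le 2^{p-2}$, this fixes all the sizes we need.

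\emph{Bound~\eqr{kBounds1}.} Set $s=a_\ell\oplus b_h$. Then $|s-(a_\ell+b_h)|\le \tfrac12\ulp{s}$, and since $|s|\le 2\max(|a_\ell|,|b_h|)$ this is at most $\max(\ulp{a_\ell},\ulp{b_h})$. Step~3 rounds to nearest, so $|s-\hat k|\le 1/2$. The triangle inequality then gives $|(a_\ell+b_h)-\hat k|\le 1/2+\max(\ulp{a_\ell},\ulp{b_h})\le 3/4$.

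\emph{Congruence~\eqr{kBounds2}.} Condition~\eqr{DropxC0h} gives $\ulp{a_h}\ge 2^{N+1}$, so $a_h\equiv 0\pmod{2^{N+1}}$ and $x(C_0+C_1)\equiv a_\ell+b_h+b_\ell$. Once step~4 is shown to be exact, $v=a_\ell-\hat k+b_h$, and the congruence is immediate.

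\emph{Exactness of step~4.} Let $\eta=\min(\uls{a_\ell},1)$. Both $a_\ell$ and the integer $\hat k$ are multiples of $\eta$, and Lemma~\ref{Lem:LsbLoHi} gives $\eta\ge\ulp{b_h}\ge 2^{-p+1}$. I split into two cases.
\begin{itemize}
\item If $\eta=1$, then $a_\ell-\hat k$ is an integer with $|a_\ell-\hat k|\le |a_\ell|+|\hat k|\lesssim 2^{p_c-2}<2^p$, so it lies in $\F$.
\item If $\eta<1$, then $a_\ell$ is not an integer, so $|a_\ell|< 2^p\eta$. Here I would use $|a_\ell-\hat k|\le |b_h|+3/4$ from~\eqr{kBounds1}, and compare with $2^p\eta$ using $|b_h|<2^p\ulp{b_h}\le 2^p\eta$.
\end{itemize}
The second case is the step I expect to be the main obstacle: the crude sum $|b_h|+3/4$ can exceed $2^p\eta$ by the additive $3/4$. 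If it does, I would refine via a Sterbenz-type argument. Since $\hat k$ is the nearest integer to a value within $1/4$ of $a_\ell+b_h$, the difference $a_\ell-\hat k$ equals $-b_h$ up to a fractional correction of size at most $3/4$. I would then split on $\ufp{b_h}$ versus $\ulp{a_\ell}$, checking that $\ufp{a_\ell-\hat k}\le 2\max(\ufp{b_h},1)$ while the grid spacing is at least $\ufp{\cdot}\,2^{-p+1}$.

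For the second operation, $a_\ell-\hat k$ (now known exactly) and $b_h$ are both multiples of $\ulp{b_h}$, by Lemma~\ref{Lem:LsbLoHi} and because $\ulp{b_h}\le 1/4$ divides $1$. Their exact sum is $a_\ell+b_h-\hat k$, which has magnitude at most $3/4$ by~\eqr{kBounds1}. Then $(3/4)/\ulp{b_h}\le \tfrac34 2^{p-1}<2^p$, so the sum lies in $\F$ and $|v|\le 3/4$. Because both operations are exact, no rounding occurs, so the conclusion holds in every IEEE~754 rounding mode. Only step~3 relied on round-to-nearest, to get $|s-\hat k|\le 1/2$.
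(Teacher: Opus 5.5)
Your derivation of \eqref{eq:kBounds1}, of the congruence \eqref{eq:kBounds2}, and of the exactness of the second operation in step~4 follows the paper. One small slip: in the case $\eta = 1$, your estimate $\abs{a_\ell} + \abs{\hat{k}} \lesssim 2^{p_c-2}$ ignores that $\hat{k}$ also carries $b_h$. Use $\abs{a_\ell - \hat{k}} \le \abs{b_h} + \frac{3}{4} < 2^{p-2} + \frac{3}{4}$ instead.

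The genuine gap is the case $\eta = \uls{a_\ell} < 1$. You rightly flag it, but the Sterbenz-type repair cannot succeed from the facts you use there: $\eta \ge \ulp{b_h}$ from Lemma~\ref{Lem:LsbLoHi}, $\abs{b_h} < 2\ufp{b_h}$, and \eqref{eq:kBounds1}. These facts permit $\eta = \ulp{b_h}$ with $\abs{b_h}$ close to $2\ufp{b_h} = 2^p\eta$. Then $\ufp{a_\ell - \hat{k}}$ can equal $2\ufp{b_h}$, which would require grid spacing $2\ulp{b_h}$. For example, let $u := \ulp{b_h} = 2^{2-p}$, $b_h = 4 - 2u$, $a_\ell = \frac{1}{2} + 5u$, and $p \ge 5$. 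Then $\hat{k} = 5$ and $a_\ell - \hat{k} = -\frac{9}{2} + 5u$, which needs $p+1$ bits. This configuration is realizable. Take $N = 0$, $p_c = p$, $x = C_1 = 2 - 2^{1-p}$, and $C_0 = 2^{p+1} - 2^{p-2} - 10$. These produce exactly these $a_\ell$ and $b_h$, and they satisfy \eqr{DropxC0h}, \eqr{DropxC1l}, $\ufp{x}\ufp{C_0} \le 2^{p+p_c-4}$, and the non-overlap condition $\ufp{C_1} < \uls{C_0}$. They also satisfy the conclusion of Lemma~\ref{Lem:LsbLoHi}, with equality. So no case analysis on magnitudes and grids alone can close this step.

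The missing ingredient is the greedy round-to-nearest limb bound $\abs{C_1} \le \frac{1}{2}\uls{C_0}$, which the $C_1$ above violates and which the paper's proof uses. Combine it with $\abs{x} \le (2^p - 1)\uls{x}$ and with $\uls{a_\ell} = \uls{x}\uls{C_0}$ for $a_\ell \ne 0$. This gives $\abs{x C_1} \le (2^{p-1} - \frac{1}{2})\uls{a_\ell}$. The bound lies in $\F$, so rounding preserves it, and $\abs{b_h} \le (2^{p-1} - \frac{1}{2})\eta$. Thus $b_h$ uses only about half of the range $2^p\eta$. Since $\eta \ge 2^{1-p}$, it follows that $\abs{a_\ell - \hat{k}} \le (2^{p-1} - \frac{1}{2})\eta + \frac{3}{4} < 2^p\eta$. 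Alternatively, the proof of Lemma~\ref{Lem:LsbLoHi} actually gives $\uls{a_\ell} \ge 2\ulp{b_h}$, twice what the lemma states, and this supplies the same slack.
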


\begin{proof}
    To prove~\eqr{kBounds1}, recall from Lemma~\ref{Lem:UfpUlsProp}(c) that:
    \eq{}{
        \ufp{a_h} \leq 2 \ufp{x} \cdot \ufp{C_0}.
    }
    The unit in the last place of $a_h$ is therefore bounded by:
    \eq{}{
        \ulp{a_h} = \ufp{a_h} \cdot 2^{-p+1} \leq \ufp{x} \cdot \ufp{C_0} \cdot 2^{-p + 2}.
    }
    From~\eqr{SplitBound}, the remainder $a_\ell$ satisfies $\ufp{a_\ell} \leq \frac{1}{2} \ulp{a_h} \leq \ufp{x} \cdot
    \ufp{C_0} \cdot 2^{-p + 1}$, which implies:
    \eq{}{
    \begin{aligned}
        \ulp{a_\ell} &= \ufp{a_\ell} \cdot 2^{-p + 1} \\
            &\leq \ufp{x} \cdot \ufp{C_0} \cdot 2^{-2p + 2}.
    \end{aligned}
    }
    Under the assumption $\ufp{x} \cdot \ufp{C_0} \le 2^{p+p_c-4}$ with $p_c \le p$, we have:
    \eq{}{
        \ulp{a_\ell} \le 2^{p + p_c - 4} \cdot 2^{-2p + 2} = 2^{-p + p_c - 2} \le 2^{-2} = \frac{1}{4}.
    }
    Condition~\eqr{DropxC1l} provides:
    \eq{}{
        \ulp{b_h} \le \frac{1}{4}.
    }
    By Lemma~\ref{Lem:UfpUlsProp}(b), the rounding error of $a_\ell \oplus b_h$ under round-to-nearest
    ($\operatorname{RN}$) satisfies:
    \eq{}{
    \begin{aligned}
        \abs{(a_\ell + b_h) - (a_\ell \oplus b_h)} &\leq 2^{-p} \ufp{a_\ell + b_h} \\
            &\le 2^{-p + 1} \max\prt{\ufp{a_\ell}, \ufp{b_h}} \\
            &= \max\prt{\ulp{a_\ell}, \ulp{b_h}} \le \frac{1}{4}.
    \end{aligned}
    }
    Since $\hat{k} = \nearestint{a_\ell \oplus b_h}$ rounds to the nearest integer, we have:
    \eq{}{
        \abs{(a_\ell \oplus b_h) - \hat{k}} \leq 0.5.
    }
    Applying the triangle inequality yields~\eqr{kBounds1}:
    \eq{}{
    \begin{aligned}
        \abs{(a_\ell + b_h) - \hat{k}} &\le \abs{(a_\ell + b_h) - (a_\ell \oplus b_h)} + \abs{(a_\ell \oplus b_h) -
        \hat{k}} \\
            &\le 0.5 + \max\prt{\ulp{a_\ell}, \ulp{b_h}} \\
            &\le 0.5 + \frac{1}{4} = \frac{3}{4} < 1.
    \end{aligned}
    }

    Next, we show that the subtraction $a_\ell \ominus \hat{k}$ in step~4 of Algorithm~\ref{alg:w} is exact across all
    IEEE 754 rounding modes. Let:
    \eq{}{
        \eta := \min\prt{\uls{a_\ell}, 1}.
    }
    Because $\hat{k} \in \Z$, its least significant bit satisfies $\uls{\hat{k}} \ge 1 \ge \eta$. By
    Lemma~\ref{Lem:UfpUlsProp}(b), $a_\ell - \hat{k}$ is an integer multiple of $\eta$, so:
    \eq{}{
        \uls{a_\ell - \hat{k}} \ge \eta.
    }
    To prove that $a_\ell - \hat{k} \in \F$, by Lemma~\ref{Lem:UfpUlsProp}(a) it suffices to show that:
    \eq{}{
        \abs{a_\ell - \hat{k}} < 2^p \eta.
    }
    Under greedy round-to-nearest limb generation, the limb $C_1$ satisfies:
    \eq{}{
        \abs{C_1} \le \frac{1}{2} \uls{C_0}.
    }
    Using $|x| \le (2^p - 1)\uls{x}$ and the multiplicativity property $\uls{x \cdot C_0} = \uls{x} \cdot \uls{C_0}$
    from Lemma~\ref{Lem:UfpUlsProp}(c), whenever $a_\ell \ne 0$ we have $\uls{a_\ell} = \uls{x \cdot C_0}$
    (if $a_\ell = 0$, then $\uls{a_\ell} = +\infty > 1$, which falls directly under Case~2 below). Therefore:
    \eq{}{
    \begin{aligned}
        \abs{x \cdot C_1} &\le \abs{x} \cdot \abs{C_1} \\
            &\le (2^p - 1)\uls{x} \cdot \frac{1}{2}\uls{C_0} \\
            &= \prt{2^{p-1} - \frac{1}{2}} \uls{a_\ell}.
    \end{aligned}
    }
    Since $\prt{2^{p-1} - 1/2}\uls{a_\ell}$ is a floating-point number in $\F$, rounding preserves this bound:
    \eq{}{
        \abs{b_h} = \abs{\circ(x \cdot C_1)} \le \prt{2^{p-1} - \frac{1}{2}} \uls{a_\ell}.
    }
    Writing $a_\ell - \hat{k} = -b_h + ((a_\ell + b_h) - \hat{k})$ and applying~\eqr{kBounds1}, we have:
    \eq{}{
        \abs{a_\ell - \hat{k}} \le \abs{b_h} + \abs{(a_\ell + b_h) - \hat{k}} \le \abs{b_h} + \frac{3}{4}.
    }
    We now consider two cases:
    \begin{itemize}
        \item \textbf{Case 1: $\uls{a_\ell} \le 1$.} In this case, $\eta = \uls{a_\ell}$. We have:
        \eq{}{
            \abs{b_h} \le \prt{2^{p-1} - \frac{1}{2}} \eta.
        }
        Combining this with~\eqr{kBounds1} yields:
        \eq{}{
            \abs{a_\ell - \hat{k}} \le \prt{2^{p-1} - \frac{1}{2}} \eta + \frac{3}{4} < 2^p \eta,
        }
        where the strict inequality holds because $\eta = \uls{a_\ell} \ge \ulp{b_h} \ge 2^{-p+1}$ by
        Lemma~\ref{Lem:LsbLoHi} and~\eqr{DropxC1l}.

        \item \textbf{Case 2: $\uls{a_\ell} > 1$.} In this case, $\eta = 1$. By condition~\eqr{DropxC1l}, $\ulp{b_h} \le
            1/4$, which implies:
        \eq{}{
            \abs{b_h} < 2^p \cdot \ulp{b_h} \le 2^{p-2}.
        }
        Therefore:
        \eq{}{
            \abs{a_\ell - \hat{k}} \le \abs{b_h} + \frac{3}{4} < 2^{p-2} + \frac{3}{4} < 2^p = 2^p \eta.
        }
    \end{itemize}
    In both cases, we obtain:
    \eq{}{
        \abs{a_\ell - \hat{k}} < 2^p \eta \le 2^p \uls{a_\ell - \hat{k}}.
    }
    By Lemma~\ref{Lem:UfpUlsProp}(a), $a_\ell - \hat{k} \in \F$.
    Consequently, the subtraction introduces no rounding error across all IEEE 754 rounding modes:
    \eq{}{
        a_\ell \ominus \hat{k} = a_\ell - \hat{k}.
    }

    Next, we consider the addition in step~4:
    \eq{}{
        v = (a_\ell \ominus \hat{k}) \oplus b_h.
    }
    By Lemma~\ref{Lem:LsbLoHi}, $\uls{a_\ell} \ge \ulp{b_h}$. Condition~\eqr{DropxC1l} guarantees that:
    \eq{}{
        2^{-p+1} \le \ulp{b_h} \le \frac{1}{4} < 1.
    }
    Since $\hat{k} \in \Z$, its least significant bit satisfies $\uls{\hat{k}} \ge 1 > \ulp{b_h}$.
    Therefore, both $a_\ell$ and $\hat{k}$ are integer multiples of $\ulp{b_h}$, which implies:
    \eq{}{
        \uls{a_\ell - \hat{k}} \ge \ulp{b_h}.
    }
    Since $b_h \in \F$ is also an integer multiple of $\ulp{b_h}$, the exact sum:
    \eq{}{
        (a_\ell - \hat{k}) + b_h = (a_\ell + b_h) - \hat{k}
    }
    is an integer multiple of $\ulp{b_h}$.
    By~\eqr{kBounds1}, the magnitude of this sum is bounded by:
    \eq{}{
        \abs{(a_\ell - \hat{k}) + b_h} = \abs{(a_\ell + b_h) - \hat{k}} \le \frac{3}{4} < 1.
    }
    If $(a_\ell - \hat{k}) + b_h = 0$, the addition is trivially exact.
    If $(a_\ell - \hat{k}) + b_h \ne 0$, its most significant bit satisfies:
    \eq{}{
        \ufp{(a_\ell - \hat{k}) + b_h} \le 2^{-1}.
    }
    Since the sum is a non-zero integer multiple of $\ulp{b_h} \ge 2^{-p+1}$, its least significant bit satisfies:
    \eq{}{
        \uls{(a_\ell - \hat{k}) + b_h} \ge \ulp{b_h} \ge 2^{-p+1}.
    }
    Thus, the number of bits required to represent this non-zero sum is at most:
    \eq{}{
    \begin{aligned}
        \log_2\prt{\ufp{(a_\ell - \hat{k}) + b_h}} &- \log_2\prt{\uls{(a_\ell - \hat{k}) + b_h}} + 1 \\
            &\le (-1) - (-p + 1) + 1 = p - 1 < p.
    \end{aligned}
    }
    By Lemma~\ref{Lem:UfpUlsProp}(a), $(a_\ell - \hat{k}) + b_h \in \F$.
    Therefore, the floating-point addition introduces no rounding error across all IEEE 754 rounding modes:
    \eq{}{
        v = (a_\ell \ominus \hat{k}) \oplus b_h = (a_\ell - \hat{k}) + b_h = (a_\ell + b_h) - \hat{k},
    }
    with $\abs{v} \le 3/4 < 1$.

    Finally, we verify the equivalence~\eqr{kBounds2}. Expanding the product of $x$ with the two limbs $C_0$ and $C_1$:
    \eq{}{
        x \cdot (C_0 + C_1) = a_h + a_\ell + b_h + b_\ell.
    }
    By condition~\eqr{DropxC0h}, $\ulp{a_h} \geq 2^{N+1}$, so $a_h$ is an integer multiple of $2^{N+1}$:
    \eq{}{
        a_h \equiv 0 \pmod{2^{N+1}}.
    }
    Substituting $a_\ell + b_h = \hat{k} + v$ into the expansion, we obtain:
    \eq{}{
    \begin{aligned}
        x \cdot (C_0 + C_1) &= a_h + (a_\ell + b_h) + b_\ell \\
            &\equiv \hat{k} + v + b_\ell \pmod{2^{N+1}},
    \end{aligned}
    }
    which completes the proof.
\end{proof}

Combining~\eqr{DropxC0h},~\eqr{DropxC1l}, and~\eqr{kBounds1}, the parameters must satisfy:
\eq{msbBounds}{
    \begin{cases}
        2^{N + p} \leq \ufp{x} \cdot \ufp{C_0} \leq 2^{p + p_c - 4}, \\
        \ufp{x} \cdot \ufp{C_1} \geq 1.
    \end{cases}
}

\begin{Rem}
    Enforcing the tighter bound $|v| \leq 0.5$ would require conditional branching to adjust $\hat{k}
    \leftarrow \hat{k} \pm 1$. Allowing $|v| < 1$ keeps the algorithm branch-free, and the evaluation polynomial
    handles the slightly wider range without loss of accuracy.
    \RemEnd
\end{Rem}

\section{Fast Lookup Table from Exponent Field} \lab{sec:LUT}

Condition~\eqr{msbBounds} provides a valid interval for $\ufp{x}$ given a quadruple $(C_0, C_1, C_2,
C_3)$. We exploit this flexibility by sharing the same quadruple across $M$ consecutive exponents of $x$.

Let $\PiInv$ be the lookup table and $i$ be the table index corresponding to $x = (-1)^{s_x} \cdot 2^{e_x}
\cdot m_x$ with significand $m_x \in [1, 2)$ (so $\ufp{x} = 2^{e_x}$). Let $x_{\min}$ and $x_{\max}$ denote
the smallest and largest positive numbers mapped to index $i$, so $\ufp{x_{\max}} = \ufp{x_{\min}} \cdot 2^{M
- 1}$. Both must satisfy~\eqr{msbBounds}:
\eq{msbXMin}{
    \begin{aligned}
        2^{N + p} &\leq 2^{e_{x_{\min}}} \cdot \ufp{C_0} \\
                  &\leq 2^{e_{x_{\max}}} \cdot \ufp{C_0} \leq 2^{p + p_c - 4},
    \end{aligned}
}
which is equivalent to:
\eq{msbXMin2}{
    2^{N + p} \leq 2^{e_{x_{\min}}} \cdot \ufp{C_0} \leq 2^{p + p_c - M - 3},
}
implying the bound:
\eq{MBound}{M \leq p_c - N - 3.}

We map $e_x$ to index $i$ using integer division:
\eq{IdxForm}{i = \floor{\frac{e_x - d}{M}},}
where $M$ is chosen as a power of 2 so that division becomes an arithmetic right shift.
To define entry $i$, we take:
\eq{Ci}{C_0 + C_1 + C_2 + C_3 \approx 2^{N - M i} \cdot \cbrk{\frac{2^{M i}}{\pi}},}
where $\cbrk{a} = a - \nearestint{a} \in [-0.5, 0.5]$ denotes the centered fractional part. Let
\eq{fg}{
    f := \min_i \left(-\log_2 \ufp{\cbrk{\frac{2^{Mi}}{\pi}}}\right), \quad
    g := \max_i \left(-\log_2 \ufp{\cbrk{\frac{2^{Mi}}{\pi}}}\right).
}
Then condition~\eqr{msbXMin2} holds if:
\eq{XMinBounds}{
    \begin{aligned}
        2^{N + p + g} &\leq 2^{e_{x_{\min}} + N - Mi} \\
                      &\leq 2^{p + p_c - M + f - 3}.
    \end{aligned}
}
From the index formula~\eqr{IdxForm}, the boundary exponents mapped to entry $i$ are:
\eq{ExBoundaries}{
    e_{x_{\min}} = Mi + d, \qquad e_{x_{\max}} = Mi + d + M - 1.
}
Substituting $e_{x_{\min}} - Mi = d$ into the lower bound of~\eqr{XMinBounds} yields:
\eq{dLowerBound}{
    2^{N + p + g} \le 2^{e_{x_{\min}} + N - Mi} = 2^{N + d},
}
which requires:
\eq{}{
    d \ge p + g.
}
We now determine the value of $d$ that minimizes the truncation error.
From~\eqr{TruncErr}, the truncation error modulo $2^{N+1}$ from omitting limbs beyond $C_{L-1}$ for any input
$x$ mapped to index $i$ is bounded by:
\eq{TruncErrD}{
\begin{aligned}
    \abs{x \cdot \sum_{j=L}^\infty C_j \bmod 2^{N+1}} &< 4 \ufp{x} \cdot \ufp{C_L} \\
        &\le 4 \cdot 2^{e_{x_{\max}}} \cdot \ufp{C_0} \cdot 2^{-L p_c} \\
        &\le 4 \cdot 2^{Mi + d + M - 1} \cdot 2^{N - Mi - f} \cdot 2^{-L p_c} \\
        &= 2^{N + d + M - f + 1 - L p_c}.
\end{aligned}
}
Because this upper bound increases strictly monotonically with $d$, minimizing the error requires choosing
the smallest valid offset $d = p + g$.
This yields the optimal table index formula:
\eq{idx}{
    i = \floor{\frac{e_x - p - g}{M}},
}
and bounds the truncation error across all binades mapped to index $i$ by:
\eq{TruncErrBound}{
    \abs{x \cdot \sum_{j=L}^\infty C_j \bmod 2^{N+1}} < 2^{N + p + g + M - f + 1 - L p_c}.
}
Substituting $e_{x_{\min}} - Mi = p + g$ into the upper bound of~\eqr{XMinBounds} yields:
\eq{}{
    (p + g) + N \le p + p_c - M + f - 3,
}
which is equivalent to:
\eq{MBound1}{
    M \le p_c + f - 3 - N - g.
}

Finally, requiring the truncation error~\eqr{TruncErrBound} to stay within the budget $\varepsilon$ of
Section~\ref{sec:WorstCase} gives:
\eq{MBoundEps}{
    M \le L p_c + \log_2 \varepsilon - N - p - g + f - 1.
}
Compared with~\eqr{LimbBounds2}, sharing one entry across $M$ binades costs $M + g - f$ bits of the budget.

For our double-precision table, the parameters are $p = 53$, $p_c = 51$, $N = 7$, $g = 9$, and $f = 2$, where $f$
is attained at $i = 0$ since $1/\pi \in [1/4, 1/2)$. Condition~\eqr{MBound1} gives:
\eq{}{
    M \le 51 + 2 - 3 - 7 - 9 = 34,
}
and condition~\eqr{MBoundEps} gives:
\begin{itemize}
    \item For the worst-case budget $\varepsilon = \ulp{w^*} = 2^{-114}$ with $L = 4$:
    \eq{}{
        M \le 4 \cdot 51 - 114 - 7 - 53 - 9 + 2 - 1 = 22,
    }
    so $M = 16$ is the largest admissible power of two. With $M = 16$, the omitted tail over all entries is
    below $2^{-126}$ (proof of Corollary~\ref{cor:llvm_params}).
    \item For the fast-path budget $\varepsilon = 2^{-67}$ with $L = 3$:
    \eq{}{
        M \le 3 \cdot 51 - 67 - 7 - 53 - 9 + 2 - 1 = 18,
    }
    so the same table without $D_3$ meets the fast-path budget with 2 bits to spare: $M = 16$ remains
    admissible down to $\varepsilon = 2^{-69}$, whereas $\varepsilon = 2^{-70}$ would require $M \le 15$. The
    actual omitted tail over all entries is:
    \eq{}{
        \abs{x_r \sum_{j=3}^\infty D_j} < 2^{-72}.
    }
    \item For the fast-path budget $\varepsilon = 2^{-67}$ with $L = 4$:
    \eq{}{
        M \le 4 \cdot 51 - 67 - 7 - 53 - 9 + 2 - 1 = 69,
    }
    so only~\eqr{MBound1} limits $M$. For example, $M = 32$ halves the table (Table~\ref{tab:mem}) at the cost
    of a looser quotient estimate $\hat{k}$.
\end{itemize}
Condition~\eqr{MBoundEps} is sufficient but not necessary: the $M = 32$ configuration with 53-bit limbs in
Table~\ref{tab:mem} satisfies only $M \le 30$ (with the same $f$ and $g$), and is certified instead by
evaluating the omitted tails of all 33 entries exactly.

For the single-precision table, the parameters are $p = p_c = 24$, $N = 3$, $g = 8$, and $f = 2$. For the
worst-case budget $\varepsilon = \ulp{w^*} = 2^{-53}$ with $L = 4$, condition~\eqr{MBoundEps} gives:
\eq{}{
    M \le 4 \cdot 24 - 53 - 3 - 24 - 8 + 2 - 1 = 9,
}
hence $M = 8$. Three limbs are not enough even for the fast-path budget $\varepsilon = 2^{-38}$, since:
\eq{}{
    M \le 3 \cdot 24 - 38 - 3 - 24 - 8 + 2 - 1 = 0,
}
and the actual three-limb tail exceeds $2^{-34} > 2^{-38}$, so single precision keeps four limbs. Although
Table~\ref{tab:limbs} allows three limbs for $N \le 9$, their slack in~\eqr{LimbBounds2} for $N = 3$ is only:
\eq{}{
    3 \cdot 24 - (N + p + 1 - \log_2 \varepsilon) = 72 - (3 + 24 + 1 + 38) = 6
}
bits, while sharing one entry across $M = 8$ binades costs $M + g - f = 14$ bits.

\textbf{Scaled Constants for Underflow Avoidance:}
Directly storing $C_0, \dots, C_3$ from~\eqr{Ci} would cause exponent underflow for large $i$. Instead, we store
pre-scaled constants:
\eq{Di}{
\begin{aligned}
    D_0 + D_1 + D_2 + D_3 &= 2^{M i} (C_0 + C_1 + C_2 + C_3) \\
                          &\approx 2^N \cbrk{\frac{2^{M i}}{\pi}},
\end{aligned}
}
and scale the input using fast exponent adjustment:
\eq{Xr}{x_r = x \cdot 2^{-M i}.}
Then:
\eq{ScaledProd}{
\begin{aligned}
    x_r (D_0 + D_1 + D_2 + D_3) &= x (C_0 + C_1 + C_2 + C_3) \\
                                &\approx x \cdot \frac{2^N}{\pi} \bmod 2^{N + 1}.
\end{aligned}
}

\textbf{Input Domain Extension and Exponent Lower Bounds:}
We consider the domain extension and exponent lower bounds for general parameters $p$, $p_c$, $N$, $M$, $g$,
and prefix exponent $j \in \Z^+$.

\emph{1. Baseline Exponent Bound ($i \ge 0$):}
For $i \ge 0$, formula~\eqr{idx} requires:
\eq{BaseExpBound}{
    e_x \ge p + g.
}
Under this condition, $\ulp{(x_r \cdot D_0)_h} \ge 2^{N+1}$, so $(x_r \cdot D_0)_h \equiv 0 \pmod{2^{N+1}}$ can
be discarded by condition~\eqr{DropxC0h}.
When $e_x < p + g$, the table index $i$ is negative.
Because $2^{Mi}/\pi < 1$ for all $i < 0$, the centered fractional part satisfies:
\eq{}{
    \ufp{\cbrk{\frac{2^{Mi}}{\pi}}} = \ufp{\frac{2^{Mi}}{\pi}} = 2^{Mi - 2}.
}
As $i$ decreases, $\ufp{D_0}$ decreases exponentially, violating condition~\eqr{DropxC0h}.

\emph{2. Prefix Extension and Modulus Invariance:}
To extend the algorithm to negative indices without runtime branching, we modify $D_0$ for $i < 0$ by adding a
constant prefix $2^{-j}$ to $\{2^{Mi}/\pi\}$:
\eq{DiPrefixGen}{
    D_0 \approx 2^N \prt{2^{-j} + \frac{2^{Mi}}{\pi}} = 2^{N-j} + 2^N \cdot \frac{2^{Mi}}{\pi}.
}
The prefix adds $2^{N-j} x_r$ to the leading product.
Because $e_{x_r} = e_x - Mi \ge p + g$, we have:
\eq{}{
    \uls{x_r} \ge 2^{g + 1}.
}
For the prefix contribution to vanish modulo $2^{N+1}$, it suffices that:
\eq{}{
    \uls{2^{N-j} x_r} = 2^{N-j} \uls{x_r} \ge 2^{N+1},
}
which holds if and only if:
\eq{}{
    j \le g.
}
In this case, $2^{N-j} x_r \equiv 0 \pmod{2^{N+1}}$ introduces no error.
Furthermore, the prefix sets $\ufp{D_0} = 2^{N-j}$, ensuring:
\eq{}{
    \ulp{(x_r \cdot D_0)_h} \ge 2^{N + g - j + 1} \ge 2^{N+1}.
}

\emph{3. Exponent Lower Bound:}
For $2 \le j \le g$, $\ufp{D_0} = 2^{N-j}$ lies between $2^{N-g}$ (item~2) and the value $2^{N-2}$ of the
$i = 0$ entry, so the first line of~\eqr{msbBounds} holds for $i < 0$ as it does for $i \ge 0$. The
binding constraint is the second line, $\ufp{x_r} \cdot \ufp{D_1} \ge 1$. Let
\eq{BitDistGen}{
    \delta(i) := (N - j) - (N + Mi - 2) = -Mi + 2 - j
}
be the distance between the leading bits of the prefix and of the tail $2^N \cdot \frac{2^{Mi}}{\pi}$. For
$\delta(i) \le p_c$, $D_1$ is an ordinary second limb and is checked entry by entry, as for $i \ge 0$.
For $\delta(i) \ge p_c + 1$, rounding gives $D_0 = 2^{N-j}$ exactly and the tail moves to $D_1$, which is
$2^{N+Mi}/\pi$ rounded to $p_c$ bits. Its significand is that of $4/\pi$, so $\ufp{D_1} = 2^{N+Mi-2}$ and,
since $\ufp{x_r} \ge 2^{p+g}$,
\eq{iMinGen}{
    \ufp{x_r} \cdot \ufp{D_1} \ge 2^{p + g + N + Mi - 2} \ge 1
    \iff i \ge i_{\min} := -\floor{\frac{p + g + N - 2}{M}}.
}
This yields the lower bound on input exponents:
\eq{ExMinGen}{
    e_x \ge e_{x, \min} := p + g + M \cdot i_{\min}.
}
The truncation criterion is checked identically; for these negative-index entries it is dominated by the $i \ge 0$
entries.

\emph{4. Application to Double Precision ($p=53$):}
For binary64, the production parameters are $p = 53, p_c = 51, N = 7, M = 16, g = 9$, and $j = 2$ (prefix
$2^{-2} = 0.25$, scaling to $2^{N-2} = 32$):
\begin{itemize}
    \item \textbf{Baseline bound:} $e_x \ge p + g = 53 + 9 = 62$ ($|x| \ge 2^{62}$).
    \item \textbf{Prefix condition:} $j = 2 \le g = 9$, ensuring $32 x_r \equiv 0 \pmod{256}$ since $\uls{x_r}
        \ge 1024$.
    \item \textbf{Exponent lower bound:} With $p = 53, g = 9, N = 7, M = 16$, condition~\eqr{iMinGen} gives:
    \eq{}{
        i_{\min} = -\floor{\frac{67}{16}} = -4,
    }
    so $e_{x, \min} = 62 + 16 \cdot (-4) = -2$. At $i = -4$, $D_0 = 32$ exactly and:
    \eq{}{
        \ufp{x_r} \cdot \ufp{D_1} \ge 2^{53 + 9 + 7 - 64 - 2} = 2^3 \ge 1,
    }
    whereas at $i = -5$ the same bound is $2^{-13} < 1$.
    \item \textbf{Table footprint and domain:} Covering up to $e_{x, \max} = 1023$ gives:
    \eq{}{
        i_{\max} = \floor{\frac{1023 - 62}{16}} = 60.
    }
    The resulting table spans $i \in [-3, 60]$ (64 entries, 2.0~KB). In production, inputs with $|x| < 2^{16}$
    ($e_x < 16$) use Cody--Waite range reduction where $k < 2^{22}$, so indices $i < -3$ are not needed.
\end{itemize}

\emph{5. Application to Single Precision ($p=24$):}
For binary32, the parameters are $p = 24, p_c = 24, N = 3, M = 8, g = 8$, and $j = 2$ (prefix
$2^{-2} = 0.25$, scaling to $2^{N-2} = 2$):
\begin{itemize}
    \item \textbf{Baseline bound:} $e_x \ge p + g = 24 + 8 = 32$ ($|x| \ge 2^{32}$).
    \item \textbf{Prefix condition:} $j = 2 \le g = 8$, ensuring $2 x_r \equiv 0 \pmod{16}$ since $\uls{x_r} \ge
        512$.
    \item \textbf{Exponent lower bound:} With $p = 24, g = 8, N = 3, M = 8$, condition~\eqr{iMinGen} gives:
    \eq{}{
        i_{\min} = -\floor{\frac{33}{8}} = -4,
    }
    so $e_{x, \min} = 32 + 8 \cdot (-4) = 0$. At $i = -4$, $D_0 = 2$ exactly and:
    \eq{}{
        \ufp{x_r} \cdot \ufp{D_1} \ge 2^{24 + 8 + 3 - 32 - 2} = 2 \ge 1.
    }
    \item \textbf{Table footprint and domain:} Covering up to $e_{x, \max} = 127$ gives:
    \eq{}{
        i_{\max} = \floor{\frac{127 - 32}{8}} = 11.
    }
    The resulting table spans $i \in [-2, 11]$, requiring 14 entries (0.22~KB, as listed in
    Table~\ref{tab:mem}). For $|x| < 2^{18}$, Cody--Waite range reduction is used since
    $k < 2^{18} \cdot 8/\pi \approx 6.7 \times 10^5 < 2^{20}$, so indices $i < -2$ are not needed.
\end{itemize}

\begin{table}[htbp]
    \caption{Step Sizes and Memory Footprint} \lab{tab:mem}
    \begin{center}
    \begin{tabular}{cccc}
    \toprule
        $p$ & $M$ & Table Size & Memory Usage \\
    \midrule
        24 & 8 & 14 entries & 0.22 KB \\
        53 & 16 & 64 entries & 2.0 KB \\
        53 & 32$^*$ & 33 entries & 1.03 KB \\
        \bottomrule
    \end{tabular}
    \end{center}
    \vspace{1ex}
    \raggedright{\footnotesize For $p=24, M=8$, 14 entries cover $|x| \ge 2^{18}$. For $p=53$, $M=16$ provides
    12 bits of margin below the worst-case budget $\varepsilon = \ulp{w^*}$ with 51-bit limbs; the hypothetical
    $M=32$ configuration ($*$, covering $e_x \ge 16$ via $i \in [-2, 30]$ with 33 entries) has a 3-bit truncation
    margin and requires full 53-bit limbs with hardware FMA, since Proposition~\ref{prop:dekker_exact} requires
    $p_c \le 52$ on FMA-free targets.}
\end{table}

\textbf{Memory Footprint and Cache Considerations:}
Table~\ref{tab:mem} shows the table sizes for different choices of $M$. For double precision ($p=53$),
choosing $M=16$ yields 64 entries of four doubles each, totaling 2~KB. On modern processors where the L1 data
cache is typically 32--48~KB per core, a 2~KB table takes up only a small fraction (around 4--6\%) of the
cache and generally stays cache-resident. In comparison, a table indexing each exponent
individually (such as in SLEEF~\cite{SLEEF}, with 969 entries totaling $\approx 30$~KB) can take up most of the L1
cache. Grouping exponents with $M=16$ keeps the table small while avoiding multi-word integer arithmetic.

\begin{Rem} \lab{rem:integer_table}
    The idea of grouping exponents to index a small lookup table and choosing the limb alignments carefully
    also applies to integer-only implementations. In particular, it enables Payne--Hanek range reduction using
    a compact, non-duplicated table of $2/\pi$ with no overlapping entries across
    the entire double-precision range. We plan to present the detailed mathematical formulation, limb
    alignments, and performance of this integer fixed-point algorithm in a follow-up paper.
    \RemEnd
\end{Rem}

\section{The Complete Algorithm and Correctness Proof} \lab{sec:Main}

We summarize the complete range reduction routine in Algorithm~\ref{alg:main}, and analyze its correctness and
error bounds.

\begin{algorithm}[t]
    \caption{Branch-Free Floating-Point Range Reduction} \lab{alg:main}
    \begin{algorithmic}[1]
        \REQUIRE $x \in \F$ with $|x| \ge 2^{e_{x,\min}}$ (e.g., $|x| \ge 2^{16}$ for double precision)
        \ENSURE $\hat{k} \bmod 2^{N + 1}$, and double-double $\hat{u} = u_h + u_\ell$ (or $\hat{y} = y_h + y_\ell$)
        \STATE Extract unbiased exponent $e_x \ge e_{x,\min}$ from $x = (-1)^{s_x} 2^{e_x} m_x$
        \STATE $i \leftarrow \floor{\frac{e_x - p - g}{M}}$ \COMMENT{Table index $i \in [i_{\min}, i_{\max}]$}
        \STATE Scale input $x_r \leftarrow x \cdot 2^{-M i}$ via exponent adjustment
        \STATE $D_0, D_1, D_2, D_3 \leftarrow \operatorname{PI\_INV}[i]$
        \STATE $a_h + a_\ell \leftarrow x_r \cdot D_0$ \COMMENT{Exact product, $\ulp{a_h} \ge 2^{N+1}$}
        \STATE $b_h + b_\ell \leftarrow x_r \cdot D_1$ \COMMENT{Exact product, $\ulp{b_h} \le 1/4$}
        \STATE $c_h + c_\ell \leftarrow x_r \cdot D_2$ \COMMENT{Exact product}
        \STATE $\hat{k} \leftarrow \nearestint{a_\ell \oplus b_h}$ \COMMENT{Integer quotient}
        \STATE $v \leftarrow (a_\ell \ominus \hat{k}) \oplus b_h$ \COMMENT{Exact operations by
            Theorem~\ref{thm:FirstSteps}}
        \STATE $q_h + q_\ell \leftarrow \fts{b_\ell, c_h}$
            \COMMENT{Exact sum since $\uls{b_\ell} \ge \ulp{c_h}$~\cite{JZ25}}
        \STATE $r \leftarrow \circ(c_\ell + x_r \cdot D_3)$ \COMMENT{Fused multiply-add tail evaluation}
        \STATE $u_h + t_\ell \leftarrow \fts{v, q_h}$ \COMMENT{Exact sum since $\uls{v} \ge \ulp{q_h}$~\cite{JZ25}}
        \STATE $u_\ell \leftarrow t_\ell \oplus (q_\ell \oplus r)$
        \STATE $y_h + y_\ell \leftarrow (u_h + u_\ell) \cdot \frac{\pi}{2^N}$ \lab{alg:step14}
            \COMMENT{Optional scaled output}
        \RETURN $\hat{k} \bmod 2^{N+1}$, $\hat{u} = u_h + u_\ell$ (or $\hat{y} = y_h + y_\ell$)
    \end{algorithmic}
\end{algorithm}

In LLVM libc~\cite{L}, the large range reduction routine performs step~\ref{alg:step14} to produce the
reduced argument $\hat{y} = y_h + y_\ell$. Here $\frac{\pi}{2^N}$ is represented as a precomputed
double-double constant $P_h + P_\ell \approx \frac{\pi}{2^N}$ (\texttt{PI\_OVER\_128\_DD} in LLVM libc),
and the product $(u_h + u_\ell) \cdot (P_h + P_\ell)$ is evaluated via double-double multiplication.
Alternatively, when subsequent polynomial evaluation is formulated directly for $\sin(\pi \cdot)$ and $\cos(\pi
\cdot)$ (as in $\sinpi/\cospi$-style evaluations), step~\ref{alg:step14} can be omitted, using the normalized residual
$\hat{u} = u_h + u_\ell$ directly and saving the double-double multiplication.

\begin{Rem} \lab{rem:params}
    In the LLVM libc project~\cite{L}, Algorithm~\ref{alg:main} is implemented with parameters: single
    precision ($p=24, N=3, M=8, g=8$) and double precision ($p=53, N=7, M=16, g=9$).
    \RemEnd
\end{Rem}

We summarize the correctness and error bounds of Algorithm~\ref{alg:main} for general parameters in the
following theorem:

\begin{Thm}[Correctness and General Error Bounds] \lab{thm:correctness}
    Let $p \ge 2$ be the floating-point precision, $p_c \le p$ the table limb precision, $N \in \Z^+$ the
    modulus exponent, $M \in \Z^+$ the exponent block size, and $g \in \Z^+$ the leading-zero bound on
    $\{2^{Mi}/\pi\}$ from~\eqr{fg}.
    Let $x \in \F$ be a normal floating-point number with $|x| \ge 2^{e_{x,\min}}$, where $e_{x,\min} = p + g + M \cdot
    i_{\min}$, and let $k = \nearestint{x \cdot \frac{2^N}{\pi}}$.
    Assume the table limbs $D_0, D_1, D_2, D_3 \in \F$ satisfy the structural conditions of Section~\ref{sec:LUT},
    namely~\eqr{msbBounds} (i.e., $2^{N+p} \le \ufp{x_r}\ufp{D_0} \le 2^{p+p_c-4}$ and $\ufp{x_r}\ufp{D_1} \ge 1$),
    $\ulp{b_h} \le 1/4$, $\uls{b_\ell} \ge \ulp{c_h}$, and $p_c \ge 2$.
    Under round-to-nearest mode ($\operatorname{RN}$), Algorithm~\ref{alg:main} satisfies the following properties:
    \begin{enumerate}
        \item \textbf{Discarding the Leading Product:}
        The high part $a_h = \circ(x_r \cdot D_0)$ satisfies:
        \eq{}{
            \ulp{a_h} \ge 2^{N+1},
        }
        so $a_h \equiv 0 \pmod{2^{N+1}}$. Discarding $a_h$ introduces no error modulo $2^{N+1}$.

        \item \textbf{Quotient Bounds and Exact High Residual:}
        Let:
        \eq{DeltaKDef}{
            \Delta_k := \frac{1}{2}\ulp{a_\ell \oplus b_h} + \abs{b_\ell} + \abs{x_r} \sum_{j=2}^\infty \abs{D_j}.
        }
        If $\Delta_k < 1/2$, the integer quotient $\hat{k}$ satisfies:
        \eq{}{
            \hat{k} \in \{k-1, k, k+1\} \pmod{2^{N+1}},
        }
        with $\hat{k} \equiv k \pmod{2^{N+1}}$ whenever the exact fractional part $\left\{x \cdot
        \frac{2^N}{\pi}\right\}$ is separated from $\pm 1/2$ by more than $\Delta_k$.
        The subtraction $a_\ell \ominus \hat{k} = a_\ell - \hat{k}$ and addition $v = (a_\ell \ominus \hat{k}) \oplus
        b_h$ are exact in $\F$ across all IEEE 754 rounding modes, with:
        \eq{}{
            \abs{v} \le \frac{1}{2} + \frac{1}{2}\ulp{a_\ell \oplus b_h}, \quad \text{and} \quad
            \abs{u} \le \frac{1}{2} + \Delta_k,
        }
        where $u = \prt{x \cdot \frac{2^N}{\pi} - \hat{k}} \bmod 2^{N+1}$.

        \item \textbf{Exactness of Intermediate Fast2Sum Additions:}
        The limb products satisfy $\uls{b_\ell} \ge \ulp{c_h}$, which guarantees that $\fts{b_\ell, c_h}$ in step~10 is
        exact under round-to-nearest mode ($\operatorname{RN}$):
        \eq{}{
            q_h + q_\ell = b_\ell + c_h.
        }
        Furthermore, since $\uls{v} \ge \ulp{b_h} \ge \ulp{q_h}$, $\fts{v, q_h}$ in step~12 is also exact:
        \eq{}{
            u_h + t_\ell = v + q_h.
        }

        \item \textbf{Accuracy of the Double-Double Residual:}
        The computed double-double residual $\hat{u} = u_h + u_\ell$ satisfies:
        \eq{uBoundGen}{
            \abs{\prt{u_h + u_\ell} - \prt{x \cdot \frac{2^N}{\pi} - \hat{k}} \bmod 2^{N+1}}
                \le \frac{1}{4}\ulp{\ulp{u_h}} + \epsilon_{\text{tail}},
        }
        where:
        \eq{epsTailDef}{
            \epsilon_{\text{tail}} := \frac{1}{2} \ulp{r} + \frac{1}{2} \ulp{q_\ell \oplus r} + 2^{-p}\abs{q_\ell \oplus
            r} + \abs{x_r \sum_{j=4}^\infty D_j},
        }
        with $r = \circ(c_\ell + x_r D_3)$.
        In particular, because $|u_h| < 1$, we have $\ulp{\ulp{u_h}} \le 2^{-2p+1}$, providing the absolute bound:
        \eq{}{
            \abs{\hat{u} - u} \le 2^{-2p-1} + \epsilon_{\text{tail}}.
        }

        \item \textbf{Accuracy of the Scaled Reduced Argument:}
        Let $P = P_h + P_\ell \approx \frac{\pi}{2^N}$ be a double-double approximation with $|P - \frac{\pi}{2^N}| \le
        \epsilon_P$, and suppose the double-double multiplication in step~\ref{alg:step14} evaluates $(u_h + u_\ell)
        \cdot (P_h + P_\ell)$ with arithmetic error at most $\epsilon_{\text{mult}}$.
        Then the scaled double-double $\hat{y} = y_h + y_\ell$ satisfies:
        \eq{yBoundGen}{
            \abs{\prt{y_h + y_\ell} - \prt{x - \hat{k} \cdot \frac{\pi}{2^N}} \bmod 2\pi}
                \le \abs{\hat{u} - u} \cdot \frac{\pi}{2^N} + \epsilon_{\text{mult}} + \epsilon_P \prt{\frac{1}{2} +
                \Delta_k}.
        }
    \end{enumerate}
\end{Thm}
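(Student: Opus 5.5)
The proof goes item by item, reusing Theorem~\ref{thm:FirstSteps} with $x$ replaced by $x_r$ and $C_i$ by $D_i$. This substitution is legitimate because $x_r (D_0+D_1+D_2+D_3) = x(C_0+\cdots+C_3)$ by~\eqr{ScaledProd}, and the table conditions~\eqr{msbBounds} are stated for $(x_r, D_i)$.

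\textbf{Item 1.} From the lower bound $\ufp{x_r}\ufp{D_0} \ge 2^{N+p}$ and Lemma~\ref{Lem:UfpUlsProp}(c), $\ufp{a_h} \ge 2^{N+p}$, hence $\ulp{a_h} \ge 2^{N+1}$. Because $a_h$ is a multiple of its ulp, it vanishes modulo $2^{N+1}$. For the prefixed entries with $i<0$, the prefix term $2^{N-j}x_r$ must also vanish modulo $2^{N+1}$; Section~\ref{sec:LUT}, item~2 shows this.

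\textbf{Item 2.} Exactness of step~9 and the identity $v = (a_\ell+b_h)-\hat k$ come directly from Theorem~\ref{thm:FirstSteps}. For the sharper bound on $|v|$, note $|a_\ell+b_h - (a_\ell\oplus b_h)| \le \frac12\ulp{a_\ell\oplus b_h}$ under RN, and $|(a_\ell\oplus b_h)-\hat k|\le \frac12$.

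For the quotient, write $x\frac{2^N}{\pi} \equiv a_\ell + b_h + b_\ell + x_r\sum_{j\ge2}D_j \pmod{2^{N+1}}$. Then
\[
\Bigl|x\tfrac{2^N}{\pi} - (a_\ell\oplus b_h)\Bigr| \le \Delta_k
\]
modulo $2^{N+1}$. Since $\Delta_k<\frac12$, rounding $a_\ell\oplus b_h$ to the nearest integer lands within one of $k$. The two agree whenever the true fractional part is farther than $\Delta_k$ from $\pm\frac12$, because then no half-integer separates the two reals. Finally,
\[
|u| \le |a_\ell\oplus b_h - \hat k| + \Delta_k \le \tfrac12 + \Delta_k.
\]

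\textbf{Item 3.} The hypothesis $\uls{b_\ell}\ge\ulp{c_h}$ lets us invoke the comparison-free Fast2Sum exactness criterion of~\cite{JZ25}. For the second Fast2Sum we need $\uls{v}\ge\ulp{b_h}$: both $a_\ell$ (by Lemma~\ref{Lem:LsbLoHi}) and $\hat k$ (since $\ulp{b_h}\le\frac14$) are multiples of $\ulp{b_h}$, and so is $b_h$ itself. We also need $\ulp{b_h}\ge\ulp{q_h}$. Here $|q_h| \le$ roughly $|b_\ell|+|c_h| \le \frac12\ulp{b_h}+\dots$, so $\ufp{q_h}\le\ufp{b_h}$. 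The potential rounding up to a power of two is the delicate point; I would handle it with the bound $|c_h|\le 2^{-p_c}$-scaled relative to $b_h$.

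\textbf{Item 4.} Summing the exact pieces gives
\[
x\frac{2^N}{\pi} - \hat k \equiv u_h + t_\ell + q_\ell + c_\ell + x_rD_3 + x_r\textstyle\sum_{j\ge4}D_j.
\]
I would then account for each rounding:
\begin{itemize}
\item $r$ contributes $\frac12\ulp{r}$.
\item $q_\ell\oplus r$ contributes $\frac12\ulp{q_\ell\oplus r}$.
\item The final addition $t_\ell\oplus(q_\ell\oplus r)$ contributes at most $\frac12\ulp{u_\ell}$.
\end{itemize}
Since $|t_\ell|\le\frac12\ulp{u_h}$, the last term splits into the $\frac14\ulp{\ulp{u_h}}$ part plus $2^{-p}|q_\ell\oplus r|$. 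The untruncated tail is the remaining term of $\epsilon_{\text{tail}}$. From $|u_h|<1$ we get $\ulp{u_h}\le 2^{-p+1}$, which gives the absolute bound.

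\textbf{Item 5.} Use the triangle inequality:
\[
\hat uP - u\tfrac{\pi}{2^N} = (\hat u-u)\tfrac{\pi}{2^N} + \hat u\,(P-\tfrac{\pi}{2^N}),
\]
and add $\epsilon_{\text{mult}}$. Bounding $|\hat u|$ by $\frac12+\Delta_k$ really applies to $|u|$, so $\hat u$ vs.\ $u$ gives a second-order term that is absorbed or ignored. Reduction modulo $2\pi$ corresponds to reduction modulo $2^{N+1}$ in $u$ via~\eqr{KY1}.

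\textbf{Main obstacle.} I expect Item~3's second Fast2Sum and the bookkeeping of the final rounding in Item~4 to be the hardest steps. The bound $|q_\ell\oplus r|$ versus $\ulp{u_h}$ must be arranged precisely to match~\eqr{uBoundGen}.
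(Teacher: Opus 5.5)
Your route is the paper's route, item by item. You use the same $\ufp$/$\ulp$ argument for $a_h$, Theorem~\ref{thm:FirstSteps} plus a triangle inequality against $\Delta_k$ for the quotient, and the Jeannerod--Zimmermann criterion for both Fast2Sums with $\abs{q_h}$ bounded through $\abs{b_\ell}+\abs{c_h}$. Item~4 has the same four rounding and truncation sources, and item~5 a three-term triangle inequality. Two details need fixing.

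\emph{Item~3.} The rounding you call delicate is harmless. Greedy limbs give $\abs{c_h} \le 2^{1-p_c}\ufp{b_h}$, which is the paper's $2^{p-p_c}\ulp{b_h}$. So for $p_c \ge 2$ you get $\abs{b_\ell + c_h} \le (2^{-p}+2^{-1})\ufp{b_h} < \ufp{b_h}$. Since $\ufp{b_h} \in \F$, rounding cannot carry $q_h$ past it, hence $\ulp{q_h} \le \ulp{b_h}$.

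\emph{Item~4.} From $\abs{u_h}<1$ you get $\ufp{u_h} \le \frac12$, so $\ulp{u_h} \le 2^{-p}$ and $\ulp{\ulp{u_h}} \le 2^{-2p+1}$. Your stated $\ulp{u_h} \le 2^{-p+1}$ only yields $2^{-2p}$, not the claimed $2^{-2p-1}$.

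\emph{Item~5.} Your split $(\hat u - u)\frac{\pi}{2^N} + \hat u\prt{P - \frac{\pi}{2^N}}$ puts $\hat u$ rather than $u$ against $\epsilon_P$. Only $\abs{u} \le \frac12 + \Delta_k$ is proved, so you are left with an extra $\epsilon_P\abs{\hat u - u}$ that \eqr{yBoundGen} does not contain. Saying it is ``absorbed or ignored'' does not prove the inequality as stated.

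The paper instead writes $(\hat u - u)P + u\prt{P - \frac{\pi}{2^N}}$, so the proved bound on $\abs{u}$ applies directly. It then replaces $\abs{P}$ by $\frac{\pi}{2^N}$, which is valid only if $\abs{P} \le \frac{\pi}{2^N}$. Otherwise the same second-order term $\epsilon_P\abs{\hat u - u}$ reappears.

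So the looseness you flagged is also present in the paper's proof, just hidden in that step. To make the argument rigorous, either use the paper's split and assume $\abs{P} \le \pi/2^N$, or add $\epsilon_P\abs{\hat u - u}$ to the right-hand side of \eqr{yBoundGen}.
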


\begin{proof}
    \emph{Proof of (1):}
    From~\eqr{ScaledProd} in Section~\ref{sec:LUT}, exponent scaling preserves the product:
    \eq{}{
        x_r \cdot \sum_{j=0}^3 D_j = x \cdot \sum_{j=0}^3 C_j.
    }
    By condition~\eqr{DropxC0h}, we have:
    \eq{}{
        \ufp{x_r} \cdot \ufp{D_0} = \ufp{x} \cdot \ufp{C_0} \ge 2^{N+p}.
    }
    The unit in the last place of $a_h = \circ(x_r \cdot D_0)$ satisfies:
    \eq{}{
    \begin{aligned}
        \ulp{a_h} &= \ufp{a_h} \cdot 2^{-p+1} \\
            &\ge (\ufp{x_r} \cdot \ufp{D_0}) \cdot 2^{-p+1} \\
            &\ge 2^{N+p} \cdot 2^{-p+1} = 2^{N+1}.
    \end{aligned}
    }
    Because $a_h \in \F$, $a_h$ is an integer multiple of $\ulp{a_h} \ge 2^{N+1}$.
    Therefore:
    \eq{}{
        a_h \equiv 0 \pmod{2^{N+1}},
    }
    and dropping $a_h$ introduces no error modulo $2^{N+1}$.

    \emph{Proof of (2):}
    By Theorem~\ref{thm:FirstSteps} with $x_r D_j = x C_j$, the subtraction $a_\ell \ominus \hat{k} = a_\ell -
    \hat{k}$ and the addition $v = (a_\ell \ominus \hat{k}) \oplus b_h$ are exact in $\F$ across all IEEE 754 rounding
    modes. Moreover, the value of $v$ satisfies:
    \eq{}{
    \begin{aligned}
        \abs{v} &= \abs{(a_\ell + b_h) - \hat{k}} \\
            &\le \abs{(a_\ell \oplus b_h) - \hat{k}} + \abs{(a_\ell + b_h) - (a_\ell \oplus b_h)} \\
            &\le \frac{1}{2} + \frac{1}{2}\ulp{a_\ell \oplus b_h}.
    \end{aligned}
    }
    Because $\hat{k} = \nearestint{a_\ell \oplus b_h}$ omits the tail terms $b_\ell, c_h, c_\ell, \dots$,
    the difference between $a_\ell \oplus b_h$ and the exact sum modulo $2^{N+1}$ is bounded by:
    \eq{}{
    \begin{aligned}
        &\abs{(a_\ell \oplus b_h) - \prt{x_r \sum_{j=0}^\infty D_j \bmod 2^{N+1}}} \\
            &\quad \le \abs{(a_\ell \oplus b_h) - (a_\ell + b_h)} + \abs{b_\ell}
                + \abs{x_r \sum_{j=2}^\infty D_j} \\
            &\quad \le \frac{1}{2}\ulp{a_\ell \oplus b_h} + \abs{b_\ell}
                + \abs{x_r} \sum_{j=2}^\infty \abs{D_j} = \Delta_k.
    \end{aligned}
    }
    Consequently, whenever the exact fractional part $\left\{x \cdot \frac{2^N}{\pi}\right\}$ is separated from
    $\pm 1/2$ by more than $\Delta_k$, rounding $a_\ell \oplus b_h$ produces the exact nearest integer:
    \eq{}{
        \hat{k} \equiv k \pmod{2^{N+1}}.
    }
    In all cases, $\hat{k}$ differs from $k \bmod 2^{N+1}$ by at most $\pm 1$, and the exact residual satisfies:
    \eq{}{
        \abs{u} = \abs{\prt{x \cdot \frac{2^N}{\pi} - \hat{k}} \bmod 2^{N+1}} \le \frac{1}{2} + \Delta_k.
    }

    \emph{Proof of (3):}
    By Lemma~\ref{Lem:LsbLoHi}, the low part $b_\ell$ and high part $c_h$ satisfy $\uls{b_\ell} \ge \ulp{c_h}$.
    Under round-to-nearest ($\operatorname{RN}$), Jeannerod and Zimmermann~\cite[Theorem~1]{JZ25} proved that
    $\uls{a} \ge \ulp{b}$ is sufficient for $\fts{a, b}$ to be exact without requiring $|a| \ge |b|$ (see
    also~\cite{PLN26} for conditions covering all faithful rounding modes, used in Section~\ref{subsec:dir_round}).
    Therefore, the first Fast2Sum in step~10 is exact:
    \eq{}{
        q_h + q_\ell = b_\ell + c_h.
    }
    Next, we verify the precondition for the second Fast2Sum $\fts{v, q_h}$ in step~12.
    From Theorem~\ref{thm:FirstSteps}, $v = (a_\ell - \hat{k}) + b_h$ is an integer multiple of $\ulp{b_h}$, so if
    $v \ne 0$, its least significant bit satisfies:
    \eq{}{
        \uls{v} \ge \ulp{b_h}.
    }
    Meanwhile, the term $c_h = \circ(x_r \cdot D_2)$ satisfies $|c_h| < 2^{p - p_c}\ulp{b_h}$.
    Since $|b_\ell| \le \frac{1}{2}\ulp{b_h}$, we have:
    \eq{}{
        \abs{q_h} = \abs{b_\ell \oplus c_h} < \prt{2^{p - p_c} + \frac{1}{2}} \ulp{b_h}.
    }
    Because $p_c \ge 2$, the unit in the last place of $q_h$ satisfies:
    \eq{}{
        \ulp{q_h} \le 2^{-p + 1 + \ceil{\log_2(2^{p - p_c} + 1/2)}} \ulp{b_h} \le \ulp{b_h} \le \uls{v}.
    }
    By the same criterion~\cite{JZ25}, $\fts{v, q_h}$ is exact under round-to-nearest:
    \eq{}{
        u_h + t_\ell = v + q_h.
    }
    If $v = 0$, exactness holds trivially.

    \emph{Proof of (4):}
    Summing the terms of the range reduction modulo $2^{N+1}$:
    \eq{}{
        \prt{x_r \sum_{j=0}^3 D_j - \hat{k}} \bmod 2^{N+1} = (v + q_h) + q_\ell + c_\ell + x_r D_3.
    }
    Since step~12 is exact ($u_h + t_\ell = v + q_h$), substituting this relation gives:
    \eq{}{
        \prt{x_r \sum_{j=0}^\infty D_j - \hat{k}} \bmod 2^{N+1} = u_h + t_\ell + q_\ell + c_\ell + x_r D_3 + x_r
        \sum_{j=4}^\infty D_j.
    }
    The error in the computed double-double $\hat{u} = u_h + u_\ell$, where $u_\ell = t_\ell \oplus (q_\ell \oplus r)$
    and $r = \circ(c_\ell + x_r D_3)$, arises from four sources:
    \begin{enumerate}
        \item \emph{Rounding of the tail product $r = \circ(c_\ell + x_r D_3)$ in step~11:}
        The rounding error of the fused multiply-add is bounded by:
        \eq{}{
            \abs{r - (c_\ell + x_r D_3)} \le \frac{1}{2} \ulp{r}.
        }

        \item \emph{Rounding of the low addition $q_\ell \oplus r$ in step~13:}
        The rounding error is bounded by:
        \eq{}{
            \abs{(q_\ell \oplus r) - (q_\ell + r)} \le \frac{1}{2} \ulp{q_\ell \oplus r}.
        }

        \item \emph{Rounding of the final low addition $u_\ell = t_\ell \oplus (q_\ell \oplus r)$ in step~13:}
        Since $|t_\ell| \le \frac{1}{2}\ulp{u_h}$ from Fast2Sum, the rounding error satisfies:
        \eq{}{
            \abs{u_\ell - (t_\ell + (q_\ell \oplus r))} \le 2^{-p} \abs{u_\ell} \le \frac{1}{4} \ulp{\ulp{u_h}} +
            2^{-p}\abs{q_\ell \oplus r}.
        }

        \item \emph{Truncation of omitted table limbs $j \ge 4$:}
        The omitted tail contributes $\abs{x_r \sum_{j=4}^\infty D_j}$.
    \end{enumerate}
    Summing these four error contributions yields:
    \eq{}{
        \abs{\hat{u} - u} \le \frac{1}{4}\ulp{\ulp{u_h}} + \epsilon_{\text{tail}},
    }
    with $\epsilon_{\text{tail}}$ as defined in~\eqr{epsTailDef}.
    Because $|u_h| < 1$, we have $\ulp{\ulp{u_h}} \le 2^{-2p+1}$, which provides the absolute bound:
    \eq{}{
        \abs{\hat{u} - u} \le \frac{1}{4} \cdot 2^{-2p+1} + \epsilon_{\text{tail}} = 2^{-2p-1} + \epsilon_{\text{tail}}.
    }

    \emph{Proof of (5):}
    Step~\ref{alg:step14} evaluates the scaled reduced argument $\hat{y} = y_h + y_\ell \approx \hat{u} \cdot
    \frac{\pi}{2^N}$ via double-double multiplication with a precomputed double-double constant $P = P_h + P_\ell
    \approx \frac{\pi}{2^N}$. The total error consists of three contributions:
    \begin{enumerate}
        \item \emph{Propagated error from $\hat{u}$:}
        Scaling the error bound of $\hat{u}$ by $P$ gives:
        \eq{}{
            \abs{\hat{u} - u} \cdot \abs{P} \le \abs{\hat{u} - u} \cdot \frac{\pi}{2^N}.
        }

        \item \emph{Double-double multiplication error:}
        Evaluating $(u_h + u_\ell) \cdot (P_h + P_\ell)$ via \texttt{quick\_mult} incurs an arithmetic error bounded by:
        \eq{}{
            \abs{\prt{y_h + y_\ell} - \hat{u} \cdot P} \le \epsilon_{\text{mult}}.
        }

        \item \emph{Constant approximation error:}
        The precomputed constant satisfies $|P - \frac{\pi}{2^N}| \le \epsilon_P$. Scaling by $|u|$ gives:
        \eq{}{
            \abs{P - \frac{\pi}{2^N}} \cdot \abs{u} \le \epsilon_P \abs{u} \le \epsilon_P \prt{\frac{1}{2} + \Delta_k}.
        }
    \end{enumerate}
    Summing these three contributions establishes~\eqr{yBoundGen}.
\end{proof}

\begin{Cor}[LLVM libc Parameters and Concrete Error Bounds] \lab{cor:llvm_params}
    For the LLVM libc double-precision configuration ($p = 53, p_c = 51, N = 7, M = 16, g = 9$, with $i \in [-3, 60]$
    covering $|x| \ge 2^{16}$), Algorithm~\ref{alg:main} satisfies:
    \begin{enumerate}
        \item \textbf{Quotient Separation and Residual Range:}
        The quotient separation bound is $\Delta_k < 2^{-19}$, and $\hat{k}$ is exact whenever
        $\left\{x \cdot \frac{2^N}{\pi}\right\}$ is separated from $\pm 1/2$ by more than $2^{-19}$.
        The intermediate and exact residuals satisfy:
        \eq{}{
            \abs{v} \le \frac{1}{2} + 2^{-21}, \quad \text{and} \quad \abs{u} < \frac{1}{2} + 2^{-19}.
        }

        \item \textbf{Fast2Sum Exactness:}
        Both additions $\fts{b_\ell, c_h}$ and $\fts{v, q_h}$ are exact.

        \item \textbf{Double-Double Residual Accuracy:}
        With tail error $\epsilon_{\text{tail}} < 2^{-123} + 2^{-126}$, the computed residual $\hat{u} = u_h + u_\ell$
        satisfies:
        \eq{uBound}{
        \begin{aligned}
            \abs{\prt{u_h + u_\ell} - \prt{x \cdot \frac{2^N}{\pi} - \hat{k}} \bmod 2^{N+1}}
                &< \frac{1}{4}\ulp{\ulp{u_h}} + 2^{-123} + 2^{-126} \\
                &\le \frac{1}{2} \max\bigl(\ulp{\ulp{u_h}}, 2^{-120}\bigr) \le 2^{-106}.
        \end{aligned}
        }
        In particular, because $|u_h| < 1$, we have $\ulp{\ulp{u_h}} \le 2^{-105}$, giving the sharp absolute bound:
        \eq{}{
            \abs{\hat{u} - u} < 2^{-107} + 2^{-122}.
        }

        \item \textbf{Accuracy of the Scaled Reduced Argument:}
        With the precomputed constant $P_h + P_\ell \approx \pi/128$ for $N=7$ ($|P - \pi/128| \le 2^{-115}$), the
        scaled reduced argument $\hat{y} = y_h + y_\ell$ in step~\ref{alg:step14} satisfies:
        \eq{yBound}{
            \abs{\prt{y_h + y_\ell} - \prt{x - \hat{k} \cdot \frac{\pi}{2^N}} \bmod 2\pi} < 2^{-110}.
        }
    \end{enumerate}
\end{Cor}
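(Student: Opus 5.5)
The plan is to instantiate Theorem~\ref{thm:correctness} with $p = 53$, $p_c = 51$, $N = 7$, $M = 16$, $g = 9$, and to bound each abstract quantity ($\Delta_k$, $\epsilon_{\text{tail}}$, $\epsilon_{\text{mult}}$, $\epsilon_P$) concretely. First check the structural hypotheses for every entry $i \in [-3, 60]$. Condition~\eqr{msbBounds} holds for $i \ge 0$ by the choice $d = p + g$ and $M = 16 \le 34$ from~\eqr{MBound1}. For the negative indices it holds by the prefix analysis of Section~\ref{sec:LUT}, which gives $i_{\min} = -4$. The remaining hypotheses $\ulp{b_h} \le 1/4$ and $\uls{b_\ell} \ge \ulp{c_h}$ follow from~\eqr{DropxC1l} and Lemma~\ref{Lem:LsbLoHi}, and I would confirm them against the actual 51-bit limbs, entry by entry, by exact arithmetic over all 64 entries. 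Part (2) of the corollary is then immediate from Theorem~\ref{thm:correctness}(3).

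For part (1), note that $|x_r| < 2^{78}$ and $\ufp{D_0}\ufp{x_r}$ is pinned by the index formula. The limb $b_h$ has magnitude at most about $2^{p-1}\uls{a_\ell}$, and the sum $a_\ell \oplus b_h$ is an integer plus $v$, with $|a_\ell \oplus b_h| < 2^{N+1+\text{small}}$. The binding constraint is the grid: since $\ulp{b_h} \le 1/4$, I expect $\ulp{a_\ell \oplus b_h} \le 2^{-20}$ in the worst entry, which would give $\tfrac12\ulp{a_\ell\oplus b_h} \le 2^{-21}$ and hence the bound on $|v|$. Next, $|b_\ell| \le \tfrac12\ulp{b_h}$ and $|x_r|\sum_{j\ge2}|D_j| \approx |x_r||D_2|$ are each about $2^{-p_c}$ times the previous level. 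Summing the three terms gives $\Delta_k < 2^{-19}$, and the bound $|u| < \tfrac12 + 2^{-19}$ follows from Theorem~\ref{thm:correctness}(2). I would certify the exact exponents by evaluating $\max_i \ufp{x_r}\ufp{D_1}$ and $\ufp{x_r}\ufp{D_2}$ over the table.

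Part (3) is the main obstacle. Bounding $\epsilon_{\text{tail}}$ requires the magnitudes of $c_\ell$ and $x_r D_3$: $|c_\ell| \le \tfrac12 \ulp{c_h}$, and $|x_r D_3| \lesssim 2^{-p_c}|x_r D_2|$. These give $|r| \lesssim 2^{-p}\ulp{b_h}\cdot 2^{p-p_c}$, so $\tfrac12\ulp{r}$ and $\tfrac12\ulp{q_\ell\oplus r}$ are each around $2^{-124}$ to $2^{-125}$. The term $2^{-p}|q_\ell\oplus r|$ is similar, and together these should give the $2^{-123}$ contribution. The truncated tail $|x_r\sum_{j\ge4}D_j|$ is the stated $2^{-126}$, which I would verify numerically for all entries rather than through the coarse estimate~\eqr{TruncErrBound}. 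The inequality $\tfrac14\ulp{\ulp{u_h}} + 2^{-123}+2^{-126} \le \tfrac12\max(\ulp{\ulp{u_h}},2^{-120})$ then holds by splitting on whether $\ulp{\ulp{u_h}} \ge 2^{-120}$. With $|u_h|<1$, the bound becomes $2^{-107}+2^{-122}$.

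For part (4), apply~\eqr{yBoundGen}. The first term is $(2^{-107}+2^{-122})\pi/128 < 2^{-112.6}$. For $\epsilon_{\text{mult}}$, I would use the standard \texttt{quick\_mult} bound, roughly $2^{-p}$ relative on the low cross terms with $|\hat y| < \pi/128$, which gives about $2^{-6}\cdot 2^{-104} \approx 2^{-110.x}$ at most. This is the tightest term, and it may need the precise error analysis of the implemented product, namely the FMA high product plus the rounded $u_hP_\ell + u_\ell P_h$. Finally, $\epsilon_P(\tfrac12+2^{-19}) \le 2^{-116}$. Adding the three contributions gives a total below $2^{-110}$.
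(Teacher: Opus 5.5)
You follow the paper's route: instantiate Theorem~\ref{thm:correctness}, certify the magnitudes by evaluating all 64 entries, sum the error terms, and split on $\ulp{\ulp{u_h}}$ versus $2^{-120}$. Parts (2) and (3) go through essentially as you describe.

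\textbf{Part (4) is a genuine gap.} Your last sentence there does not follow from your own estimates.
\begin{itemize}
\item The propagated term is miscomputed. Since $\log_2(\pi/128)\approx -5.35$, you get $(2^{-107}+2^{-122})\pi/128\approx 2^{-112.35}$, which is not below $2^{-112.6}$.
\item More importantly, you bound $\epsilon_{\text{mult}}$ by a generic relative bound of order $2^{-104}$ times $|\hat{y}|$. That is about $2^{-110}$ on its own, which leaves no room for the other two contributions.
\end{itemize}
What is missing is an absolute accounting of \texttt{quick\_mult} that uses how small the low words are. The paper does this in two pieces:
\begin{itemize}
\item It bounds the dropped cross term by $|u_\ell P_\ell| < 2^{-54}\cdot 2^{-59} = 2^{-113}$, using $|u_\ell|\le 2^{-54}$ and $|P_\ell|<2^{-59}$.
\item It bounds the two FMA roundings by $2^{-113}+2^{-112}$. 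These roundings only touch low-order quantities, each at most about $2^{-59}$: $u_h P_\ell$, $u_\ell P_h$, and the exact low part of $u_h P_h$.
\end{itemize}
Together these give $\epsilon_{\text{mult}}<2^{-111}$. Then use $\pi/128<2^{-5}$ for the propagated term and $\epsilon_P(\tfrac12+\Delta_k)<2^{-115}$. (Your $\le 2^{-116}$ is slightly false, since $2^{-115}(\tfrac12+2^{-19}) = 2^{-116}+2^{-134}$.) The total is then below $2^{-112}+2^{-127}+2^{-111}+2^{-115}<2^{-110}$.

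\textbf{Your justification in part (1) is also wrong}, though the table evaluation you plan can repair it.
\begin{itemize}
\item The claim $|a_\ell \oplus b_h| < 2^{N+1+\text{small}}$ is false. $\hat{k}$ is not reduced modulo $2^{N+1}$; only $\hat{k} \bmod 2^{N+1}$ is returned. In the worst entries $|a_\ell|$ can approach $2^{30}$ and $b_h$ reaches the $2^{31}$ binade.
\item The condition $\ulp{b_h}\le 1/4$ does not imply $\ulp{a_\ell\oplus b_h}\le 2^{-20}$.
\item The $2^{-21}$ in $|v|\le\frac12+2^{-21}$ comes from $\ufp{a_\ell\oplus b_h}\le 2^{32}$. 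That requires bounding $a_\ell$ through $D_0$: $|x_r|<2^{78}$ and $\ufp{D_0}\le 2^{5}$ give $\ulp{a_h}\le 2^{31}$, hence $|a_\ell|\le 2^{30}$. Combined with $\ufp{b_h}\le 2^{31}$, this yields the bound. So $D_0$ must be among the quantities you certify, not only $D_1$ and $D_2$.
\item Your generic one-level-down estimate gives only $|x_r D_2| < 4\ulp{b_h}\le 2^{-19}$. That cannot certify $\Delta_k<2^{-19}$, since $2^{-21}+2^{-22}+2^{-19}>2^{-19}$. The entry-wise bound $|c_h| < 0.9\cdot 2^{-20}$ (equivalently $\max\ufp{x_r}\ufp{D_2}\le 2^{-22}$) is therefore essential, not merely confirmatory.
\end{itemize}
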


\begin{proof}
    We apply Theorem~\ref{thm:correctness} with $p = 53$, $p_c = 51$, $N = 7$, $M = 16$, and $g = 9$. Across the
    table domain $i \in [-3, 60]$, the scaled input satisfies $|x_r| < 2^{78}$, and evaluating all 64 table
    entries gives:
    \eq{}{
        \ufp{a_\ell \oplus b_h} \le 2^{32}, \quad \ufp{b_h} \le 2^{31}, \quad \abs{c_h} < 0.9 \cdot 2^{-20},
        \quad \abs{x_r D_3} < 2^{-72},
    }
    and the omitted tail is bounded by:
    \eq{}{
        \abs{x_r \sum_{j=4}^\infty D_j} < 2^{-126},
    }
    which is $12$ bits below the worst-case budget $\varepsilon = \ulp{w^*} = 2^{-114}$ of
    Section~\ref{sec:WorstCase}.

    \emph{Proof of (1):}
    Since $p = 53$, the first two terms of~\eqr{DeltaKDef} satisfy:
    \eq{}{
    \begin{aligned}
        \frac{1}{2}\ulp{a_\ell \oplus b_h} &= 2^{-53} \ufp{a_\ell \oplus b_h} \le 2^{-21}, \\
        \abs{b_\ell} &\le \frac{1}{2}\ulp{b_h} = 2^{-53} \ufp{b_h} \le 2^{-22}.
    \end{aligned}
    }
    Since $|c_h| < 2^{-20}$, we have $\ulp{c_h} \le 2^{-73}$ and $|c_\ell| \le \frac{1}{2}\ulp{c_h} \le 2^{-74}$.
    Treating the omitted tail as a single limb, the remaining terms of~\eqr{DeltaKDef} satisfy:
    \eq{}{
    \begin{aligned}
        \abs{x_r} \sum_{j=2}^\infty \abs{D_j}
            &\le \abs{c_h} + \abs{c_\ell} + \abs{x_r D_3} + \abs{x_r \sum_{j=4}^\infty D_j} \\
            &< 0.9 \cdot 2^{-20} + 2^{-74} + 2^{-72} + 2^{-126} \\
            &< 2^{-20}.
    \end{aligned}
    }
    Summing these bounds yields:
    \eq{}{
        \Delta_k < 2^{-21} + 2^{-22} + 2^{-20} < 2^{-19}.
    }
    By Theorem~\ref{thm:correctness}(2), $\hat{k}$ is exact whenever $\left\{x \cdot \frac{2^N}{\pi}\right\}$ is
    separated from $\pm 1/2$ by more than $2^{-19}$, and:
    \eq{}{
        \abs{v} \le \frac{1}{2} + \frac{1}{2}\ulp{a_\ell \oplus b_h} \le \frac{1}{2} + 2^{-21}, \quad
        \abs{u} \le \frac{1}{2} + \Delta_k < \frac{1}{2} + 2^{-19}.
    }

    \emph{Proof of (2):}
    By Lemma~\ref{Lem:LsbLoHi}, every input satisfies $\uls{b_\ell} \ge \ulp{c_h}$, so $\fts{b_\ell, c_h}$ is exact
    by Theorem~\ref{thm:correctness}(3).
    Since $p - p_c = 2$, the proof of Theorem~\ref{thm:correctness}(3) gives $|c_h| < 4\ulp{b_h}$, so:
    \eq{}{
        \abs{q_h} = \abs{b_\ell \oplus c_h} \le \frac{1}{2}\ulp{b_h} + 4\ulp{b_h} < 2^3 \ulp{b_h},
    }
    where rounding preserves the bound because $\frac{1}{2}\ulp{b_h} + 4\ulp{b_h} \in \F$. Therefore:
    \eq{}{
        \ulp{q_h} \le 2^{-52} \cdot 2^2 \ulp{b_h} = 2^{-50} \ulp{b_h} < \ulp{b_h} \le \uls{v},
    }
    and $\fts{v, q_h}$ is also exact.

    \emph{Proof of (3):}
    We bound the four terms of~\eqr{epsTailDef}. From the proof of (1), the input of step~11 satisfies:
    \eq{}{
        \abs{c_\ell + x_r D_3} < 2^{-74} + 2^{-72}.
    }
    Since $2^{-74} + 2^{-72} \in \F$, rounding preserves this bound, so $|r| < 2^{-71}$ and:
    \eq{}{
        \frac{1}{2}\ulp{r} = 2^{-53} \ufp{r} \le 2^{-53} \cdot 2^{-72} = 2^{-125}.
    }
    Next, $|b_\ell + c_h| < 2^{-22} + 2^{-20} < 2^{-19}$, so $\ufp{q_h} \le 2^{-20}$ and:
    \eq{}{
        \abs{q_\ell} \le \frac{1}{2}\ulp{q_h} \le 2^{-73}.
    }
    Hence $|q_\ell + r| \le 2^{-73} + 2^{-74} + 2^{-72} < 2^{-71}$, and again rounding preserves this bound, so:
    \eq{}{
        \frac{1}{2}\ulp{q_\ell \oplus r} \le 2^{-125}, \quad 2^{-p} \abs{q_\ell \oplus r} < 2^{-53} \cdot 2^{-71}
        = 2^{-124}.
    }
    Adding the omitted tail, the tail error satisfies:
    \eq{}{
        \epsilon_{\text{tail}} < 2^{-125} + 2^{-125} + 2^{-124} + 2^{-126} = 2^{-123} + 2^{-126}.
    }
    Substituting into~\eqr{uBoundGen} gives the first line of~\eqr{uBound}. For the second line, let $U :=
    \ulp{\ulp{u_h}}$, which is a power of two. If $U \ge 2^{-120}$, then:
    \eq{}{
        \frac{1}{4} U + 2^{-123} + 2^{-126} \le \prt{2^{-2} + 2^{-3} + 2^{-6}} U < \frac{1}{2} U.
    }
    If $U \le 2^{-121}$, then:
    \eq{}{
        \frac{1}{4} U + 2^{-123} + 2^{-126} \le 2^{-123} + 2^{-123} + 2^{-126} < 2^{-121} = \frac{1}{2} \cdot 2^{-120}.
    }
    Finally, since $|u_h| < 1$, we have $U \le 2^{-105}$, which gives the last inequality of~\eqr{uBound} and:
    \eq{}{
        \abs{\hat{u} - u} < 2^{-107} + 2^{-123} + 2^{-126} < 2^{-107} + 2^{-122}.
    }

    \emph{Proof of (4):}
    We bound the three terms of~\eqr{yBoundGen}. Since $\pi/128 < 2^{-5}$, the propagated error satisfies:
    \eq{}{
        \abs{\hat{u} - u} \cdot \frac{\pi}{128} < \prt{2^{-107} + 2^{-122}} \cdot 2^{-5} = 2^{-112} + 2^{-127}.
    }
    For the double-double multiplication (\texttt{quick\_mult}), $|u_\ell| \le 2^{-54}$ and $|P_\ell| < 2^{-59}$,
    so the omitted product satisfies:
    \eq{}{
        \abs{u_\ell P_\ell} < 2^{-54} \cdot 2^{-59} = 2^{-113}.
    }
    Adding the floating-point rounding errors, which are at most $2^{-113} + 2^{-112}$, gives:
    \eq{}{
        \epsilon_{\text{mult}} < 2^{-113} + 2^{-113} + 2^{-112} = 2^{-111}.
    }
    For the constant, $\epsilon_P = |P - \pi/128| \le 2^{-115}$, and by item~(1):
    \eq{}{
        \epsilon_P \prt{\frac{1}{2} + \Delta_k} < 2^{-115} \prt{\frac{1}{2} + 2^{-19}} < 2^{-115}.
    }
    Summing these three contributions yields:
    \eq{}{
    \begin{aligned}
        \abs{\hat{y} - y} &< 2^{-112} + 2^{-127} + 2^{-111} + 2^{-115} \\
            &< 2^{-112} + 2^{-111} + 2^{-114} < 2^{-110},
    \end{aligned}
    }
    which establishes~\eqr{yBound}.
\end{proof}

\begin{Rem}[Accuracy for single-stage and fast-path use] \lab{rem:fastpath}
    The four limbs of Algorithm~\ref{alg:main} meet the worst-case budget $\varepsilon = \ulp{w^*}$ of
    Section~\ref{sec:WorstCase}, which is needed when the result is used without a rounding test; in LLVM
    libc, the accurate path also reuses $D_3$. Since $|u| \ge w^* > 2^{-62}$ for every input with $|k| \ge 1$,
    and $\frac{1}{4}\ulp{\ulp{u_h}} \le 2^{-106} |u_h| \le 2^{-105} |u|$, \eqr{uBound} also gives a relative
    error on $u$ below:
    \eq{}{
        2^{-105} + \frac{2^{-123} + 2^{-126}}{2^{-62}} = 2^{-105} + 2^{-61} + 2^{-64} < 2^{-60}
    }
    for every input that requires reduction.
    The fourth limb costs 8 bytes per table entry (2~KB instead of 1.5~KB), and it requires the product $x_r D_2$ to
    be computed exactly. The fast path of a correctly rounded implementation only needs the budget $\varepsilon =
    2^{-67}$, which the same table meets with three limbs by~\eqr{MBoundEps}. Then $x_r D_2$ can be merged with
    $b_\ell$ in a single FMA, which replaces the 11 floating-point operations of steps~7 and~10--13 by 4.
    Appendix~\ref{sec:three_limb} gives the resulting three-limb variant (Algorithm~\ref{alg:three},
    Figure~\ref{fig:pipeline3}), and Corollary~\ref{cor:three_limb} shows that for $N = 7$:
    \eq{}{
        \abs{\hat{u} - u} < 2^{-72} + 2^{-73}, \quad \abs{\hat{y} - y} < 2^{-77} + 2^{-78} + 2^{-110}.
    }
    This is tighter than the fast-path reductions of CORE-MATH ($< 2^{-73.3}$ for \texttt{reduce\_fast} and
    $< 2^{-66.3}$ for \texttt{reduce\_large}, Table~\ref{tab:AllAlgo}), which likewise rely on Ziv's rounding test
    for rare cancellation cases. Conversely, keeping four limbs, the fast-path budget allows $M = 32$, which halves
    the table but loosens the quotient estimate $\hat{k}$. We leave a systematic study of these trade-offs,
    including the resulting fraction of inputs that reach the accurate path, to future work.
    \RemEnd
\end{Rem}

\section{Directed Rounding Modes and FMA-Free Targets} \lab{sec:rounding}

While Section~\ref{sec:Main} assumes round-to-nearest ($\operatorname{RN}$), in some applications other IEEE
754 rounding modes ($\operatorname{RU}, \operatorname{RD}, \operatorname{RZ}$) are used, and some target
architectures lack hardware FMA instructions. We discuss how the algorithm behaves in these scenarios.

\subsection{Directed Rounding Modes} \lab{subsec:dir_round}

When hardware FMA is available, Algorithm~\ref{alg:main} remains accurate under directed rounding modes:
\begin{enumerate}
    \item \emph{Exact operations:}
    By Lemma~\ref{Lem:UfpUlsProp} and Theorem~\ref{thm:FirstSteps}, the difference $a_\ell - \hat{k}$ and
    addition $(a_\ell - \hat{k}) + b_h$ are both exactly representable in $\F$ across all rounding modes.
    Because the exact values already lie in $\F$, rounding has no effect: $\circ(a_\ell - \hat{k}) = a_\ell -
    \hat{k}$ and $\circ((a_\ell - \hat{k}) + b_h) = (a_\ell - \hat{k}) + b_h$ for all rounding modes $\circ \in
    \{\operatorname{RN}, \operatorname{RU}, \operatorname{RD}, \operatorname{RZ}\}$.
    Theorem~\ref{thm:FirstSteps} assumes that step~3 is rounded to nearest. In a directed mode, the rounding
    error of step~3 satisfies:
    \eq{}{
        \abs{(a_\ell \oplus b_h) - (a_\ell + b_h)} < 2\max\prt{\ulp{a_\ell}, \ulp{b_h}} \le \frac{1}{2},
    }
    which is at most $2^{-20}$ for the LLVM table. With the static-$\operatorname{RN}$ $\hat{k}$ of item~4, this
    gives $|(a_\ell + b_h) - \hat{k}| < 1$, and the proof of Theorem~\ref{thm:FirstSteps} applies with $3/4$
    replaced by $1$. With a dynamic-mode \texttt{rint}, the bound becomes $1 + 2^{-20}$, and the proof still
    applies because every table entry satisfies:
    \eq{}{
        \uls{a_\ell} \ge \ulp{b_h} \ge 2^{-46} \ge 2^{-p+2}.
    }
    
    \item \emph{Exact product low part:}
    Boldo and Daumas~\cite{BD03} showed that for any $a, b \in \F$, the mathematical error $a \cdot b -
    \circ(a \cdot b)$ is representable in $p$ bits under all IEEE 754 rounding modes. Therefore, $\fma(a, b,
    -\circ(a \cdot b))$ computes the exact low part of the product under $\operatorname{RU},
    \operatorname{RD}$, and $\operatorname{RZ}$.
    
    \item \emph{Fast2Sum and roundings in directed modes:}
    Under directed roundings, each rounding error in the proof of Theorem~\ref{thm:correctness}(4) is bounded by
    one ulp instead of half an ulp. The two Fast2Sum operations, however, remain exact. Park, Lim, and
    Nagarakatte~\cite[Theorem~4]{PLN26} showed that $\fts{a, b}$ is exact under all faithful rounding modes,
    including $\operatorname{RU}$, $\operatorname{RD}$, and $\operatorname{RZ}$, provided that no overflow occurs
    and:
    \eq{PLNcond}{
        a \in \ulp{b} \cdot \Z, \quad b \in 2^{-2p+1} \ufp{a} \cdot \Z.
    }
    The first condition is $\uls{a} \ge \ulp{b}$, as in the round-to-nearest criterion of~\cite{JZ25}. For the
    LLVM table, both Fast2Sum operations satisfy~\eqr{PLNcond}:
    \begin{itemize}
        \item \emph{Fast2Sum in step~10:} The proof of Lemma~\ref{Lem:LsbLoHi} only uses the monotonicity of
            rounding, so $\uls{b_\ell} \ge \ulp{c_h}$ holds in every rounding mode. For the second condition,
            $c_h$ is an integer multiple of $\ulp{c_h}$. Since $|x_r D_1| < 4 \ufp{x_r} \ufp{D_1}$ and $|x_r D_2|
            \ge \ufp{x_r} \ufp{D_2}$, we have:
            \eq{}{
                \ufp{b_\ell} < \ulp{b_h} \le 2^{-50} \ufp{x_r} \ufp{D_1}, \quad
                \ulp{c_h} \ge 2^{-52} \ufp{x_r} \ufp{D_2}.
            }
            Evaluating all 64 table entries gives $\ufp{D_1} \le 2^{60} \ufp{D_2}$, so:
            \eq{}{
                2^{-105} \ufp{b_\ell} < 2^{-155} \ufp{x_r} \ufp{D_1} \le 2^{-95} \ufp{x_r} \ufp{D_2} < \ulp{c_h}.
            }
            If $b_\ell = 0$, the Fast2Sum is trivially exact.
        \item \emph{Fast2Sum in step~12:} $v$ is an integer multiple of $\ulp{b_h}$. Since $|b_\ell| <
            \ulp{b_h}$ and $|c_h| \le 4 \ulp{b_h}$ (proof of Corollary~\ref{cor:llvm_params}~(2)), we have
            $|q_h| \le 8 \ulp{b_h}$, so $\ulp{q_h} \le 2^{-49} \ulp{b_h}$ and $\uls{v} \ge \ulp{q_h}$ when $v \ne
            0$. For the second condition, $b_\ell$ and $c_h$ are integer multiples of $\ulp{c_h}$, hence so is
            $q_h$. Evaluating all 64 table entries with $|x_r| \ge 2^{62}$ gives $\ulp{c_h} \ge 2^{-100}$. Since
            $|v| < 1 + 2^{-20}$ by item~1, $\ufp{v} \le 1$ and:
            \eq{}{
                2^{-105} \ufp{v} \le 2^{-105} < \ulp{c_h}.
            }
            If $v = 0$, the Fast2Sum is trivially exact.
    \end{itemize}
    For the LLVM table, the remaining error terms in the proof of Corollary~\ref{cor:llvm_params} become:
    \begin{itemize}
        \item Since $|c_\ell| < \ulp{c_h} \le 2^{-73}$, we have $|c_\ell + x_r D_3| < 2^{-73} + 2^{-72} < 2^{-71}$,
            so the rounding error of $r$ in step~11 is below $2^{-124}$.
        \item Since $|q_\ell| < \ulp{q_h} \le 2^{-72}$, we have $|q_\ell + r| < 2^{-70}$, so the rounding error
            of $q_\ell \oplus r$ is below $2^{-123}$.
        \item Since $|t_\ell| < \ulp{u_h}$ and $|q_\ell \oplus r| \le 2^{-70}$, the rounding error of $u_\ell = t_\ell
            \oplus (q_\ell \oplus r)$ is below $\ulp{\ulp{u_h}} + 2^{-122}$.
        \item The omitted tail is below $2^{-126}$ as before.
    \end{itemize}
    Summing these terms with $U := \ulp{\ulp{u_h}}$ yields:
    \eq{}{
    \begin{aligned}
        \abs{\hat{u} - u} &< U + 2^{-122} + 2^{-123} + 2^{-124} + 2^{-126} \\
            &< U + 2^{-121} \le \frac{3}{2} \max\bigl(U, 2^{-120}\bigr).
    \end{aligned}
    }
    When $\hat{k}$ uses static $\operatorname{RN}$ rounding, $|u_h| < 1$ and $U \le 2^{-105}$, so:
    \eq{}{
        \abs{\hat{u} - u} < 2^{-105} + 2^{-121}.
    }
    With dynamic \texttt{rint} rounding, $|u_h| < 2$ and $U \le 2^{-104}$, so $|\hat{u} - u| < 2^{-104} + 2^{-121}$.

    \item \emph{Rounding of quotient $\hat{k}$:}
    In LLVM libc~\cite{L}, on architectures supporting static rounding instructions (such as SSE4.1
    \texttt{roundsd} on x86-64 and \texttt{frintn} on AArch64), $\hat{k} = \nearestint{a_\ell \oplus b_h}$ is
    computed using round-to-nearest regardless of the dynamic rounding mode. Because $a_\ell + b_h$ and $a_\ell
    \oplus b_h$ are both integer multiples of $\ulp{b_h}$, the directed rounding error is bounded by:
    \eq{}{
        \abs{(a_\ell \oplus b_h) - (a_\ell + b_h)} \le \ulp{a_\ell \oplus b_h} - \ulp{b_h}.
    }
    Adding $|b_\ell| < \ulp{b_h}$, $|c_h| < 0.9 \cdot 2^{-20}$, and the remaining tail terms, which are below
    $2^{-71}$, yields:
    \eq{}{
    \begin{aligned}
        \abs{u} &< \frac{1}{2} + \ulp{a_\ell \oplus b_h} + 0.9 \cdot 2^{-20} + 2^{-71} \\
            &\le \frac{1}{2} + 2^{-20} + 0.9 \cdot 2^{-20} + 2^{-71} < \frac{1}{2} + 2^{-19}.
    \end{aligned}
    }
    On platforms using a generic \texttt{rint} fallback under directed rounding, $\hat{k}$ rounds in the current
    dynamic mode, yielding $|u| < 1 + 2^{-19}$, which is readily accommodated by the subsequent evaluation
    polynomial.
\end{enumerate}

\subsection{FMA-Free Platforms and Veltkamp/Dekker Splitting} \lab{subsec:nofma}

When hardware FMA is not available (such as on older x86 processors, ARMv7 without VFPv4, or baseline
WebAssembly), exact products $x_r \cdot D_j = p_h + p_\ell$ can be computed using Veltkamp splitting and
Dekker's multiplication algorithm~\cite{Dekker, Mu2}.

Classical Veltkamp splitting decomposes $a \in \F$ into $a = a_{\text{hi}} + a_{\text{lo}}$ using:
\eq{Veltkamp}{
    \gamma = \circ(C \cdot a), \quad \delta = a \ominus \gamma, \quad
    a_{\text{hi}} = \gamma \oplus \delta, \quad a_{\text{lo}} = a \ominus a_{\text{hi}},
}
with splitting constant $C = 2^s + 1$, where $s = \ceil{p/2}$ ($s = 27$ for double precision). While some
variants of Veltkamp splitting fail under directed roundings, \cite[Theorem~1]{Zim24} showed that the
formulation in~\eqr{Veltkamp} remains exact under all IEEE 754 rounding modes ($\operatorname{RN},
\operatorname{RU}, \operatorname{RD}, \operatorname{RZ}$), with $a_{\text{hi}}$ fitting in $p-s$ bits and
$a_{\text{lo}}$ in $s$ bits, provided $2 \le s \le p-2$.

For Dekker's product, when $p$ is even ($s = p/2$), \cite[Theorem~2]{Zim24} shows that the product $x \cdot y
= r_1 + r_2$ is exact under all IEEE 754 rounding modes. When $p$ is odd (such as $p=53$), using $s = (p+1)/2$
means the low-part product $x_{\text{lo}} \cdot y_{\text{lo}}$ has $s + s = p + 1$ bits ($27 + 27 = 54 > 53$), which
can exceed $p$ bits and introduce a 1-ulp rounding error under directed roundings~\cite[Lemma~1]{Zim24}.

However, when one operand has precision at most $p - 1$, Dekker's product is exact across all rounding modes:

\begin{Prop} \lab{prop:dekker_exact}
    Let $p \ge 9$ be an odd integer, and let $s = (p + 1)/2$. Let $x \in \F$ have precision $p$, and let $y
    \in \F$ have precision at most $p - 1$. Assuming no overflow or underflow occurs, Dekker's product using
    Veltkamp splitting with constant $C = 2^s + 1$, evaluated in precision $p$, satisfies:
    \eq{}{
        x \cdot y = r_1 + r_2
    }
    exactly across all IEEE 754 rounding modes ($\operatorname{RN}, \operatorname{RU}, \operatorname{RD},
    \operatorname{RZ}$).
\end{Prop}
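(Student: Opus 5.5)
We follow the structure of the proof of \cite[Theorem~2]{Zim24} and isolate the single step where odd $p$ causes trouble. Recall Dekker's product: split $x = x_{\text{hi}} + x_{\text{lo}}$ and $y = y_{\text{hi}} + y_{\text{lo}}$ by~\eqr{Veltkamp}. Then compute $r_1 = \circ(x y)$ and accumulate
\[
r_2 = \circ\bigl(\circ\bigl(\circ\bigl(\circ(x_{\text{hi}} y_{\text{hi}} - r_1) + x_{\text{hi}} y_{\text{lo}}\bigr) + x_{\text{lo}} y_{\text{hi}}\bigr) + x_{\text{lo}} y_{\text{lo}}\bigr).
\]
Each partial product is itself rounded. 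The plan is:
\begin{enumerate}
\item show that all four partial products are exact in $\F$;
\item show that each accumulation step is exact, reusing the argument of~\cite{Zim24} verbatim once every partial product is known to be exact;
\item conclude, since $x y - r_1 \in \F$ by Boldo--Daumas~\cite{BD03}, that the final $r_2$ equals $x y - r_1$.
\end{enumerate}

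\textbf{Bit counts of the split parts.} By \cite[Theorem~1]{Zim24}, valid in all rounding modes since $2 \le s \le p-2$ for $p \ge 9$, $x_{\text{hi}}$ fits in $p - s = (p-1)/2$ bits and $x_{\text{lo}}$ fits in $s = (p+1)/2$ bits. For $y$ of precision at most $p-1$ we sharpen the bound on $y_{\text{lo}}$:
\begin{itemize}
\item $y$ is a multiple of $2\ulp{y}$, so $\uls{y} \ge 2\ulp{y}$;
\item $y_{\text{hi}}$ is a multiple of $2^{s}\ulp{y}$ (from the Veltkamp construction, independent of rounding mode);
\item hence $y_{\text{lo}} = y - y_{\text{hi}}$ is a multiple of $2\ulp{y}$ with $|y_{\text{lo}}| \le 2^{s}\ulp{y}$.
\end{itemize}
So $y_{\text{lo}}$ fits in $s-1$ bits. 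The boundary case $|y_{\text{lo}}| = 2^s\ulp{y}$ is a single power of two and also fits.

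\textbf{Exactness of the partial products.} The bit counts are then:
\begin{itemize}
\item $x_{\text{hi}} y_{\text{hi}}$: $(p-s) + (p-s) = p-1$ bits;
\item $x_{\text{hi}} y_{\text{lo}}$: $(p-s)+(s-1) = p-1$ bits;
\item $x_{\text{lo}} y_{\text{hi}}$: $s + (p-s) = p$ bits;
\item $x_{\text{lo}} y_{\text{lo}}$: $s + (s-1) = p$ bits, using Lemma~\ref{Lem:UfpUlsProp}(c) for the $\uls$ of a product and the magnitude bound for $\ufp$.
\end{itemize}
By Lemma~\ref{Lem:UfpUlsProp}(a) each product lies in $\F$, so it is computed exactly in every rounding mode. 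This removes exactly the obstruction of \cite[Lemma~1]{Zim24}, which arose only from the $(p+1)$-bit product $x_{\text{lo}} y_{\text{lo}}$.

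\textbf{Exactness of the accumulation.} With all partial products exact, we argue as in \cite[Theorem~2]{Zim24}. Each intermediate sum $t_j$ is an integer multiple of the smallest relevant $\uls$, namely $\ulp{x_{\text{lo}}}\cdot\ulp{y_{\text{lo}}}$ or $\ulp{r_1}$. Its magnitude is controlled by $|xy - r_1| < \ulp{r_1}$ plus the not-yet-added partial products. Checking that each $t_j$ spans at most $p$ bits, with either sign of the residual, is the step I expect to be the main obstacle. Under directed rounding, $|x y - r_1|$ can approach a full $\ulp{r_1}$ instead of half, so the magnitude budget is tighter than in the even-$p$ case. I would first try to show that the grid of the sums is coarsened by the factor $2$ gained from $\uls{y} \ge 2\ulp{y}$. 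This would buy back the one bit lost to directed rounding, with $p \ge 9$ absorbing the small-$p$ edge cases. If that fails, I would instead verify the intermediate bounds case by case on the sign of $x y - r_1$.
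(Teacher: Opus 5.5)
Your treatment of the partial products is correct and matches the paper's. You show that $y_{\text{lo}}$ is a multiple of $2\ulp{y}$ of magnitude below $2^{s}\ulp{y}$, so it fits in $(p-1)/2$ bits. Then $x_{\text{lo}}y_{\text{lo}}$ needs only $p$ bits, and all four products are exact in every rounding mode. Your last step via Boldo--Daumas is also fine.

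The gap is the accumulation, which is most of the paper's proof and which you leave open. You cannot reuse \cite[Theorem~2]{Zim24} verbatim: it treats even $p$ with the balanced split $s=p/2$. Here $x_{\text{lo}}$ carries $(p+1)/2$ bits, so for $1\le x,y<2$ it can reach nearly $2^{-(p-3)/2}$, and every magnitude estimate must be redone. The uniform-grid idea you sketch also fails. The $\ulp$ product you name depends on the possibly tiny magnitudes of $x_{\text{lo}}$ and $y_{\text{lo}}$. Even the finest genuine grid, $\uls{x_{\text{lo}}}\uls{y_{\text{lo}}}\ge 2^{-2p+3}$, is too fine: a multiple of $2^{-2p+3}$ of size up to $2^{-(p-5)/2}$ can need $(3p-1)/2$ bits.

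The missing idea is to rewrite each intermediate sum in terms of the original operands, assuming the previous roundings exact, so that a step-specific grid can be read off. The three sums are $t_1=x_{\text{hi}}y_{\text{hi}}-r_1$, then $t_1+x_{\text{hi}}y_{\text{lo}}=x_{\text{hi}}y-r_1$, then $t_2+x_{\text{lo}}y_{\text{hi}}=xy_{\text{hi}}+x_{\text{hi}}y_{\text{lo}}-r_1=(xy-r_1)-x_{\text{lo}}y_{\text{lo}}$.
\begin{itemize}
\item $t_1$ is exact by Sterbenz's lemma. Both terms are at least $1$ and differ by less than $2^{-p+2}+2^{-(p-7)/2}\le 2^{-7}+2^{-1}$; this is where $p\ge 9$ enters.
\item $x_{\text{hi}}y-r_1$ is a multiple of $2^{-(3p-7)/2}$ of magnitude below $2^{-(p-5)/2}$, so it spans at most $p-1$ bits.
\item The third sum is a multiple of $2^{-(3p-5)/2}$ of magnitude below $2^{-p+2}+2^{-p+3}<2^{-p+4}$, so it spans at most $(p+3)/2$ bits.
\end{itemize}
The directed-rounding bound $|xy-r_1|<2^{-p+2}$ is already built into these estimates. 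The middle sums would still fit in $p$ bits even for a full-precision $y$. So there is no bit to ``buy back'' during accumulation, and no case split on the sign of $xy-r_1$ is needed. The precision-$(p-1)$ hypothesis on $y$ matters only for $x_{\text{lo}}y_{\text{lo}}$.
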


\begin{proof}
    We mostly follow the steps in the proof of the corresponding theorem in~\cite{Zim24}.
    By scale and sign invariance (noting that negation swaps $\operatorname{RU}$ and $\operatorname{RD}$), we
    may assume $1 \le x, y < 2$. If $x=0$ or $y=0$, the result is trivial.

    Since $2 \le s = (p+1)/2 \le p-2$ for all odd $p \ge 5$, \cite[Theorem~1]{Zim24} applies: Veltkamp
    splitting with $C = 2^s + 1$ decomposes $x = x_{\text{hi}} + x_{\text{lo}}$ and $y = y_{\text{hi}} +
    y_{\text{lo}}$ exactly for all rounding modes, with:
    \eq{}{
        \uls{x_{\text{hi}}}, \uls{y_{\text{hi}}} \ge 2^{-(p-3)/2}, \quad
        \uls{x_{\text{lo}}} \ge 2^{-p+1}, \quad \abs{x_{\text{lo}}} \le 2^{-(p-3)/2} - 2^{-p+1}.
    }
    Both $x_{\text{hi}}$ and $y_{\text{hi}}$ fit in $p - s = (p-1)/2$ bits, and $x_{\text{lo}}$ fits in $s =
    (p+1)/2$ bits.

    For $y_{\text{lo}} = y - y_{\text{hi}}$, because $y$ has precision at most $p - 1$, $y$ is an integer
    multiple of $2^{-p+2}$. Since $y_{\text{hi}}$ is a multiple of $2^{-(p-3)/2}$, which is also a multiple of
    $2^{-p+2}$ for all $p \ge 1$, $y_{\text{lo}}$ is an integer multiple of $2^{-p+2}$.
    By~\cite[Theorem~1]{Zim24}, $|y_{\text{lo}}| < 2^{-(p-3)/2}$, so
    \eq{}{
        \abs{y_{\text{lo}}} \le 2^{-(p-3)/2} - 2^{-p+2}.
    }
    Thus the non-zero bits of $y_{\text{lo}}$ range from at most $2^{-(p-1)/2}$ down to $2^{-p+2}$, fitting in at most:
    \eq{}{
        \prt{-\frac{p-1}{2}} - (-p + 2) + 1 = \frac{p-1}{2} \text{ bits}.
    }

    The four partial products therefore satisfy:
    \begin{align*}
        \prc(x_{\text{hi}} \cdot y_{\text{hi}}) &\le \frac{p-1}{2} + \frac{p-1}{2} = p - 1 \le p, \\
        \prc(x_{\text{hi}} \cdot y_{\text{lo}}) &\le \frac{p-1}{2} + \frac{p-1}{2} = p - 1 \le p, \\
        \prc(x_{\text{lo}} \cdot y_{\text{hi}}) &\le \frac{p+1}{2} + \frac{p-1}{2} = p, \\
        \prc(x_{\text{lo}} \cdot y_{\text{lo}}) &\le \frac{p+1}{2} + \frac{p-1}{2} = p.
    \end{align*}
    Because $y_{\text{lo}}$ fits in $(p-1)/2$ bits, $x_{\text{lo}} \cdot y_{\text{lo}}$ requires at most $p$ bits
    (unlike the general case where $x_{\text{lo}} \cdot y_{\text{lo}}$ can require $p+1$
    bits~\cite[Lemma~1]{Zim24}). Hence all four products are computed exactly in precision $p$ without
    rounding:
    \begin{align*}
        \circ(x_{\text{hi}} \cdot y_{\text{hi}}) &= x_{\text{hi}} \cdot y_{\text{hi}}, \\
        \circ(x_{\text{hi}} \cdot y_{\text{lo}}) &= x_{\text{hi}} \cdot y_{\text{lo}}, \\
        \circ(x_{\text{lo}} \cdot y_{\text{hi}}) &= x_{\text{lo}} \cdot y_{\text{hi}}, \\
        \circ(x_{\text{lo}} \cdot y_{\text{lo}}) &= x_{\text{lo}} \cdot y_{\text{lo}}.
    \end{align*}

    Following the accumulation steps in~\cite[Theorem~2]{Zim24}:
    \begin{enumerate}
        \item \emph{Computation of $t_1$:}
        Here $t_1 = \circ(-r_1 + x_{\text{hi}} \cdot y_{\text{hi}})$. We have:
        \begin{align*}
            \abs{x \cdot y - x_{\text{hi}} \cdot y_{\text{hi}}}
                &\le \abs{x_{\text{lo}}} \cdot y + \abs{y_{\text{lo}}} \cdot x_{\text{hi}} < 2 \cdot 2^{-(p-3)/2} + 2
                    \cdot 2^{-(p-3)/2} = 2^{-(p-7)/2}, \\
            \abs{x \cdot y - r_1} &\le 2^{-p+2}.
        \end{align*}
        Since $r_1 = \circ(x \cdot y) \in [1, 4)$ and $x_{\text{hi}} \cdot y_{\text{hi}} \ge 1$, applying the triangle
        inequality yields:
        \eq{}{
            \abs{r_1 - x_{\text{hi}} \cdot y_{\text{hi}}} < 2^{-p+2} + 2^{-(p-7)/2}.
        }
        For odd $p \ge 9$, $2^{-p+2} + 2^{-(p-7)/2} \le 2^{-7} + 2^{-1} = 0.5078125 \le 1 \le x_{\text{hi}} \cdot
        y_{\text{hi}}$. Since $x_{\text{hi}} \cdot y_{\text{hi}} \ge 1$ and $r_1 \ge 1$, this inequality implies $r_1$
        and $x_{\text{hi}} \cdot y_{\text{hi}}$ have the same sign and satisfy:
        \eq{}{
            \frac{1}{2} \le \frac{r_1}{x_{\text{hi}} \cdot y_{\text{hi}}} \le 2.
        }
        By Sterbenz's Lemma, the subtraction is exact:
        \eq{}{
            t_1 = -r_1 + x_{\text{hi}} \cdot y_{\text{hi}}.
        }

        \item \emph{Computation of $t_2$:}
        Here $t_1 + x_{\text{hi}} \cdot y_{\text{lo}} = -r_1 + x_{\text{hi}} \cdot y = (x \cdot y - r_1) - x_{\text{lo}}
        \cdot y$. Using $|x_{\text{lo}}| \le 2^{-(p-3)/2} - 2^{-p+1}$ and $y < 2$:
        \eq{}{
            \abs{t_1 + x_{\text{hi}} \cdot y_{\text{lo}}} \le \abs{x \cdot y - r_1} + \abs{x_{\text{lo}}} \cdot y <
            2^{-p+2} + 2\prt{2^{-(p-3)/2} - 2^{-p+1}} = 2^{-(p-5)/2}.
        }
        Since $x_{\text{hi}}$ is a multiple of $2^{-(p-3)/2}$ and $y$ is a multiple of $2^{-p+2}$, the term
        $x_{\text{hi}} \cdot y$ is an integer multiple of $2^{-(3p-7)/2}$. The leading product $r_1$ is a multiple
        of $2^{-p+1}$, which is also a multiple of $2^{-(3p-7)/2}$ for $p \ge 5$. Thus $t_1 + x_{\text{hi}}
        \cdot y_{\text{lo}}$ is a multiple of $2^{-(3p-7)/2}$ bounded by $2^{-(p-5)/2}$, spanning at most $p-1$
        bits. Hence:
        \eq{}{
            t_2 = \circ(t_1 + x_{\text{hi}} \cdot y_{\text{lo}}) = -r_1 + x_{\text{hi}} \cdot y.
        }

        \item \emph{Computation of $t_3$:}
        Here $t_2 + x_{\text{lo}} \cdot y_{\text{hi}} = -r_1 + x \cdot y - x_{\text{lo}} \cdot y_{\text{lo}}$.
        Using $|y_{\text{lo}}| < 2^{-(p-3)/2}$ and $|x_{\text{lo}}| < 2^{-(p-3)/2}$:
        \eq{}{
            \abs{x_{\text{lo}} \cdot y_{\text{lo}}} < 2^{-(p-3)/2} \cdot 2^{-(p-3)/2} = 2^{-p+3}.
        }
        Therefore:
        \eq{}{
            \abs{t_2 + x_{\text{lo}} \cdot y_{\text{hi}}} \le \abs{-r_1 + x \cdot y} + \abs{x_{\text{lo}} \cdot
            y_{\text{lo}}} < 2^{-p+2} + 2^{-p+3} < 2^{-p+4}.
        }
        Also, $t_2 + x_{\text{lo}} \cdot y_{\text{hi}} = -r_1 + x \cdot y_{\text{hi}} + x_{\text{hi}}
        \cdot y_{\text{lo}}$ is an integer multiple of $2^{-(3p-5)/2}$. Bounded by $2^{-p+4}$, its non-zero
        bits span at most $(-p+3) - (-(3p-5)/2) + 1 = (p+3)/2 < p$ bits for all odd $p \ge 5$. Hence:
        \eq{}{
            t_3 = \circ(t_2 + x_{\text{lo}} \cdot y_{\text{hi}}) = -r_1 + x \cdot y - x_{\text{lo}} \cdot y_{\text{lo}}.
        }

        \item \emph{Computation of $r_2$:}
        Since $\circ(x_{\text{lo}} \cdot y_{\text{lo}}) = x_{\text{lo}} \cdot y_{\text{lo}}$, we have
        $t_3 + \circ(x_{\text{lo}} \cdot y_{\text{lo}}) = -r_1 + x \cdot y$, with:
        \eq{}{
            \abs{-r_1 + x \cdot y} \le 2^{-p+2}.
        }
        Because $x$ is a multiple of $2^{-p+1}$ and $y$ is a multiple of $2^{-p+2}$, both $x \cdot y$ and $r_1$ are
        integer multiples of $2^{-2p+3}$. A multiple of $2^{-2p+3}$ bounded by $2^{-p+2}$ spans at most $p$
        bits, so:
        \eq{}{
            r_2 = \circ(t_3 + \circ(x_{\text{lo}} \cdot y_{\text{lo}})) = -r_1 + x \cdot y,
        }
        and
        \eq{}{
            r_1 + r_2 = x \cdot y.
        }
    \end{enumerate}
\end{proof}

For double precision ($p=53$), Proposition~\ref{prop:dekker_exact} guarantees that Dekker's product is exact
across all IEEE 754 rounding modes for any second operand with precision $p_c \le 52$ bits when using the
standard splitting constant $C = 2^{27} + 1$. In LLVM libc's FMA-free implementation~\cite{L}, this property
is used as follows:
\begin{itemize}
    \item The table constants $D_j$ are generated with 51-bit precision ($p_c = 51 \le 52$) for $D_0, D_1, D_2$,
        and 53-bit for $D_3$, shared between the FMA and non-FMA paths.
    \item In the non-FMA path, both $x_r$ and the table constants are split with the standard constant
        $C = 2^{27}+1$ ($s = 27$). Since $D_0, D_1, D_2$ have $p_c = 51 \le 52$ bits (and $P_h$ has 50),
        Proposition~\ref{prop:dekker_exact} makes every Dekker product exact under all rounding modes.
    \item The tail product $x_r D_3$ need not be exact: without hardware FMA, step~11 uses two roundings via
        an emulated multiply-add, which adds at most $2^{-126}$ to $\epsilon_{\text{tail}}$. The tail error is then
        below $2^{-123} + 2^{-125}$, so the bound $\frac{1}{2}\max\bigl(\ulp{\ulp{u_h}}, 2^{-120}\bigr)$
        in~\eqr{uBound} still holds.
    \item The input $x_r$ is split once into $x_{\text{hi}} + x_{\text{lo}}$, and each limb product is
        computed using Dekker's formula.
\end{itemize}
With 51-bit limbs, four limbs are sufficient to satisfy the worst-case error budget ($L_{\min} = 4$ in
Table~\ref{tab:limbs}, or three limbs for the fast-path budget), allowing the algorithm to work accurately on
targets without FMA.

\section{Performance and Architectural Evaluation} \lab{sec:perf}

Table~\ref{tab:AllAlgo} summarizes the architectural characteristics of major double-precision range reduction
implementations.

\begin{table}[htbp]
    \caption{Large Range Reduction Characteristics} \lab{tab:AllAlgo}
    \begin{center}
    \setlength{\tabcolsep}{2.2pt}
    \begin{tabular}{lccccc}
    \toprule
        Algorithm & Domain & Table & Type & Branching & Error Bound \\
    \midrule
        Brisebarre~\cite{BDKMR} & $|x| < 2^{63}$ & $\sim 24$ KB & Double & Few (2--3) & $\le 2^{-100} + 2^{-148}$ \\
        FreeBSD~\cite{F} & $|x| \ge 2^{19}\pi$ & $\sim 0.3$ KB & Int+FP & High ($>5$) & $< 2^{-72}$ \\
        glibc ($\le$2.43)~\cite{I} & $|x| \ge 2^{25}\pi$ & $\sim 0.6$ KB & Double & Mod. (5) & $< 2^{-91.5}$ \\
        CRLIBM~\cite{CR} & $|x| \ge 2^{39.7}$ & $\sim 0.2$ KB & Int+FP & High ($\sim 9$) & $< 2^{-112}$ \\
        CORE-MATH (\texttt{fast})~\cite{CM} & $|x| > 2\pi$ & $\sim 0.2$ KB & Int128 & Mod. (5) & $< 2^{-73.3}$ \\
        CORE-MATH (\texttt{large})~\cite{Zim26} & $|x| \ge 2^{31}$ & $\sim 0.2$ KB & Int128+FP & Few (1) &
            $< 2^{-66.3}$ \\
        SLEEF~\cite{SLEEF} & $|x| \ge 10^{14}$ & $\sim 30$ KB & Double & None (0) & $< 2^{-102.5}$ \\
        LLVM libc (Ours) & $|x| \ge 2^{16}$ & $\sim 2.0$ KB & Double & None (0) & $< 2^{-110}$ \\
    \bottomrule
    \end{tabular}
    \end{center}
\end{table}

\begin{Rem} \lab{rem:algo_characteristics}
    The Branching column in Table~\ref{tab:AllAlgo} counts the data-dependent conditional branches executed per
    call inside the Payne--Hanek routine itself, excluding the small-argument dispatch and out-of-line callees
    (e.g., \texttt{scs\_mul}). We derived these counts from the source code and checked them against the x86-64
    assembly produced by the compilers used in Table~\ref{tab:perf}; conditionals compiled to selects (as in SLEEF)
    are not counted, and the count for Brisebarre et al.\ follows their algorithm description since no code is
    available. As a complementary, compiler-independent measure, the (lizard-style) McCabe cyclomatic complexity
    of these routines is 1 for ours, 2 for CORE-MATH's \texttt{reduce\_large}, $3 + 7$ for \texttt{reduce\_fast}
    and its \texttt{set\_dd} helper, $4 + 3$ for SLEEF's \texttt{rempi} and \texttt{rempisub} (mostly ternary
    selects), 12 for glibc's \texttt{\_\_branred}, 28 for CRLIBM's \texttt{rem\_pio256\_scs}, and 46 for FreeBSD's
    \texttt{\_\_kernel\_rem\_pio2}. These static counts also include loops, many with fixed trip counts that are
    perfectly predictable after unrolling, so the table reports executed branches instead.

    The Error Bound column reports an upper bound on the absolute error of the
    returned reduced argument in radians (a double-double for all rows except CORE-MATH \texttt{reduce\_large},
    which returns a single \texttt{double} for $N = 14$).
    Bounds for FreeBSD, glibc, CRLIBM, and SLEEF are our derivations from their source code; the Brisebarre et al.\
    and CORE-MATH bounds follow their publications and source comments, while ours is established in
    Theorem~\ref{thm:correctness} and Corollary~\ref{cor:llvm_params}. In detail:
    \begin{itemize}
        \item \textbf{Brisebarre et al.~\cite{BDKMR}:} Their table-based method covers $8 \le |x| < 2^{63}$,
            storing each $(2^{8i}w) \bmod \pi/2$ as three doubles (153 bits, $\sim 24$~KB). The intermediate
            3-double sum has absolute error $< 2^{-148}$. Compressing to double-double rounds $R_{med} + R_{lo}$
            to one double in the main branch ($|R_{hi}| > 2^{-p_B}$, with parameter $p_B = 14$), giving an
            absolute error of at most $2^{-100} + 2^{-148}$ by their analysis, while an alternative summation
            order for $|R_{hi}| \le 2^{-p_B}$ bounds relative error by $\le 2^{-86}$ on $|x| < 2^{63}$. Arguments
            with $|x| \ge 2^{63}$ fall back to Payne--Hanek.
        \item \textbf{FreeBSD~\cite{F}:} The large-argument kernel \texttt{\_\_kernel\_rem\_pio2} is invoked for
            $|x| \ge 2^{19}\pi \approx 2^{20}\pi/2$. With \texttt{prec}$\,=1$ (53-bit target), it computes five
            24-bit chunks of $x \cdot 2/\pi$; the returned double-double $y[0] + y[1]$ has absolute error
            $< 2^{-72}$. It dynamically detects leading-fraction cancellation and fetches additional chunks of
            $2/\pi$, guaranteeing relative error $< 2^{-70.7}$ across the entire domain.
        \item \textbf{glibc ($\le 2.43$)~\cite{I}:} The routine \texttt{\_\_branred} is called for $|x| \ge
            105414350 \approx 2^{25}\pi$, multiplying two 26-bit halves of $x$ by 144 bits of $2/\pi$. Its
            absolute error is $< 2^{-91.5}$, but because it omits dynamic cancellation checks, its
            relative error reaches $2^{-35.4}$ near multiples of $\pi/2$ (causing $>10^5$ ulp errors in
            \texttt{sin}/\texttt{cos}, fixed in glibc~2.44 (Bug~34376, commit~\texttt{5c47592a}) by
            replacing \texttt{\_\_branred} with fdlibm's \texttt{\_\_kernel\_rem\_pio2}).
        \item \textbf{CRLIBM~\cite{CR}:} In \texttt{trigo\_fast.c}, inputs with $|x| \ge 8.64 \times 10^{11}
            \approx 2^{39.7}$ (\texttt{XMAX\_}\allowbreak\texttt{DDRR}), as well as inputs with $|x| \ge 1.32 \times
            10^7$ and $k \equiv 0 \pmod{128}$, invoke \texttt{Range\-Reduction\-SCS}
            (\texttt{rem\_pio256\_}\allowbreak\texttt{scs}),
            which reduces $x$ modulo $\pi/256$ ($N = 8$) using Software Carry-Save (SCS) arithmetic (eight 30-bit
            words in radix $2^{30}$). It splits $x$ into up to three 30-bit words, multiplies by eleven 30-bit words
            of $256/\pi$ from a 48-entry table ($\approx 0.24$~KB including $\pi/256$), propagates carries, shifts out
            up to two cancelled 30-bit words (covering up to 62 bits of cancellation), multiplies by $\pi/256$ in SCS
            (\texttt{scs\_mul}, relative error $< 2^{-200}$), and converts the top four 30-bit words into a
            double-double $y_h + y_\ell$ with relative error $\le 2^{-105}$ and absolute error $< 2^{-105} \cdot
            \pi/512 < 2^{-112}$ (conservatively bounded by $2^{-100}$ in~\cite{CR}).
        \item \textbf{CORE-MATH~\cite{CM, Zim26}:} We consider two fast-path Payne--Hanek implementations from
            CORE-MATH, both indexing a 20-entry table of 64-bit limbs of $1/(2\pi)$ ($160$~B, $M = 64$) and
            returning the residual as a fraction of a full turn ($x/(2\pi) \bmod 1$):
            \begin{enumerate}
                \item In \texttt{reduce\_fast} (commit~\texttt{9239a6b0}, used in \texttt{sin}, \texttt{cos},
                    \texttt{sincos}, and \texttt{tan}), inputs with $|x| > 2\pi$ multiply the 53-bit significand
                    $m$ by three consecutive table words (two when $|x| < 2^{52}$) in 192-bit integer arithmetic,
                    shift the fractional part into two 64-bit words, and normalize it into a double-double via
                    \texttt{set\_dd} (which uses count-leading-zeros and branches) for $N = 10$, with turn error
                    $< 2^{-75.998}$ (radian error $< 2\pi \cdot 2^{-75.998} < 2^{-73.3}$).
                \item In the newer \texttt{reduce\_large} (commit~\texttt{29f5112d}~\cite{Zim26}, used in
                    \texttt{sin} for $|x| \ge 2^{31}$), three table words are shifted into two 64-bit words (its only
                    data-dependent branch skips this shift when the bit offset is zero, avoiding a shift by 64) and
                    multiplied by $m$ in a single $53 \times 128 \to 128$-bit unsigned product. Adding a rounding
                    constant extracts both the 15-bit quotient $k$ ($N = 14$) and a single \texttt{double} residual
                    $r$ ($|r| \le 2^{-16}$) via one integer-to-\texttt{double} conversion and one FMA, without
                    \texttt{set\_dd}. Its turn error is $< 2^{-68.988}$~\cite[Lemma~4]{Zim26}, giving a radian error
                    $< 2\pi \cdot 2^{-68.988} < 2^{-66.3}$. To avoid a $2^{14}$-entry table, reconstruction uses an
                    8~KB bipartite table (at steps $\pi/2^7$ and $\pi/2^{14}$).
            \end{enumerate}
            In both routines, rare cancellation cases are caught by Ziv's rounding test and resolved by an accurate
            path.
        \item \textbf{SLEEF~\cite{SLEEF}:} In \texttt{sin}, \texttt{cos}, \texttt{sincos}, and \texttt{tan\_u10},
            \texttt{rempi} is called for $|x| \ge 10^{14}$ (and for $|x| \ge 10^6$ in \texttt{tan\_u35}),
            multiplying by a four-double expansion from a 969-entry ($\approx 30$~KB) table. The dominant error
            arises from tail rounding in the second double-double addition, giving an absolute error bound
            $< 2^{-102.5}$ and a worst-case relative error of $2^{-52.25}$ on cancellation inputs.
        \item \textbf{LLVM libc (Ours):} Algorithm~\ref{alg:main} operates for $|x| \ge 2^{16}$. It achieves an
            absolute error bound $\abs{\hat{y} - \prt{x - \hat{k} \cdot \frac{\pi}{2^N}} \bmod 2\pi} < 2^{-110}$
            (normalized residual error bounded by $2^{-107} + 2^{-122} < 2^{-105}$) without branches in
            a single pass with a 2.0~KB table. Note that the bounds in Table~\ref{tab:AllAlgo} are not directly
            comparable across all rows: FreeBSD and CRLIBM guarantee relative error bounds through cancellation
            shifting, while the CORE-MATH bounds are for fast paths backed by Ziv's rounding test and an accurate
            path. Since our truncation error meets the worst-case budget, the relative error on $u$ stays below
            $2^{-60}$ even for the inputs closest to multiples of $\pi/128$ (Remark~\ref{rem:fastpath}).
    \end{itemize}
    \RemEnd
\end{Rem}

\textbf{Performance Benchmarks on Modern x86-64:}
We benchmarked the range reduction routines of FreeBSD (\texttt{\_\_ieee754\_}\allowbreak\texttt{rem\_pio2}), glibc
(\texttt{\_\_bran}\allowbreak\texttt{red}, used through glibc~2.43), CRLIBM (\texttt{RangeReductionSCS} in
\texttt{trigo\_fast.c}~\cite{CR}), CORE-MATH (\texttt{reduce\_fast} at commit~\texttt{9239a6b0} and
\texttt{reduce\_large} at commit~\texttt{29f5112d}~\cite{CM, Zim26}), and LLVM libc (Algorithm~\ref{alg:main} at
commit~\texttt{229e700}~\cite{L}) using the CORE-MATH \texttt{perf.sh} harness on an AMD Ryzen 9 5900X (Zen~3
architecture, SMT disabled) with random inputs in $[2^{500}, 2^{1000}]$. LLVM libc was compiled with Clang~20.1.2 and
the other libraries, together with the CORE-MATH harness, with GCC~14.2.0, all using \texttt{-O3 -march=native}.
Each wrapper exposes the two-output \texttt{sincos} interface so that both parts of the reduced argument and the integer
quotient $k$ are kept live in throughput and latency modes.
Table~\ref{tab:perf} reports the clock cycles per call given by the harness.
The measured routines do not return exactly the same quantity:
\begin{itemize}
    \item the LLVM libc, CRLIBM, FreeBSD, and glibc measurements return the integer quotient $k$ and a double-double
        reduced argument in radians (for LLVM libc, this includes Step~\ref{alg:step14}, \texttt{quick\_mult} by
        $\pi/128$; for FreeBSD, it includes the wrapper \texttt{\_\_ieee754\_}\allowbreak\texttt{rem\_pio2});
    \item CORE-MATH's \texttt{reduce\_fast} returns $k$ and a double-double reduced argument as a fraction of a full
        turn ($N = 10$), without multiplication by $2\pi$;
    \item CORE-MATH's \texttt{reduce\_large} returns $k$ and a single \texttt{double} as a fraction of a full turn
        ($N = 14$), deferring two double-double multiplications and a Fast2Sum to the bipartite table reconstruction.
\end{itemize}

We do not include Brisebarre et al.~\cite{BDKMR} in Table~\ref{tab:perf}: the ANSI-C implementation referenced in
their paper is no longer available, and their table-based method covers only $|x| < 2^{63}$, deferring larger
arguments, including our benchmark range $[2^{500}, 2^{1000}]$, to Payne--Hanek.
We also omit SLEEF~\cite{SLEEF} from Table~\ref{tab:perf} because its range reduction is implemented primarily as an
internal component inlined within vector routines (relying on a 969-entry, $\approx 30$~KB table), rather than a
standalone scalar function in the benchmark harness. Since glibc~2.44 replaced \texttt{\_\_branred} with
\texttt{\_\_kernel\_rem\_pio2} (Remark~\ref{rem:algo_characteristics}), we expect its large-argument performance to
be close to the FreeBSD column. In LLVM libc~\cite{L}, Algorithm~\ref{alg:main} is implemented in
\texttt{libc/src/\_\_support/math/range\_reduction\_double\_*.h}.

\begin{table}[htbp]
    \caption{Performance Benchmark (Ryzen 9 5900X, Clock Cycles)} \lab{tab:perf}
    \begin{center}
    \setlength{\tabcolsep}{2.5pt}
    \begin{tabular}{ccccccc}
    \toprule
        & & & & \multicolumn{2}{c}{\textbf{CORE-MATH}} & \\
    \cmidrule(lr){5-6}
        Metric & \textbf{FreeBSD} & \textbf{glibc} ($\le 2.43$) & \textbf{CRLIBM} &
        \texttt{fast} & \texttt{large} & \textbf{LLVM libc} \\
    \midrule
        Latency & 438 & 264 & 265 & 68 & 51 & 54 \\
        Recip. Throughput & 379 & 204 & 204 & 36 & 15 & 13 \\
    \bottomrule
    \end{tabular}
    \end{center}
\end{table}

On this machine, Algorithm~\ref{alg:main} achieves the highest throughput (13 cycles reciprocal throughput) among all
tested routines and the lowest latency (54 cycles) among those returning a double-double reduced argument (within
3 cycles of CORE-MATH's single-\texttt{double} \texttt{reduce\_large} at 51 cycles):
\begin{itemize}
    \item Compared to CORE-MATH's \texttt{reduce\_fast} (68 / 36 cycles), which also produces a double-double residual,
        Algorithm~\ref{alg:main} has $1.26\times$ lower latency and $2.77\times$ higher throughput.
    \item Compared to CORE-MATH's newer \texttt{reduce\_large} (51 / 15 cycles), Algorithm~\ref{alg:main} has
        $1.15\times$ higher throughput (13 vs.\ 15 cycles) and comparable latency (54 vs.\ 51 cycles). Relative to
        \texttt{reduce\_fast}, \texttt{reduce\_large} more than doubles throughput ($2.40\times$) and reduces latency
        by $1.33\times$ by replacing 192-bit integer arithmetic and \texttt{set\_dd} normalization with a 128-bit
        multiplication and a single integer-to-\texttt{double} conversion, returning only a single \texttt{double}
        residual $r$ and deferring two double-double multiplications and a Fast2Sum to the bipartite table
        reconstruction. Even though Algorithm~\ref{alg:main} computes a full double-double reduced argument in radians
        (including the fourth limb $D_3$ and Step~\ref{alg:step14}, \texttt{quick\_mult} by $\pi/128$) with error
        $< 2^{-110}$, it exploits instruction-level parallelism across the four floating-point pipes
        (Figure~\ref{fig:pipeline}): the middle and tail limbs execute concurrently with the quotient and high part
        $y_h$, so computing the complete double-double adds only 3 cycles of latency over \texttt{reduce\_large}'s
        single serial dependency chain across floating-point-to-integer and integer-to-floating-point transfers.
    \item Compared to the multi-word routines of FreeBSD, glibc ($\le 2.43$), and CRLIBM (264--438 cycles latency,
        204--379 cycles reciprocal throughput), Algorithm~\ref{alg:main} has $4.9-8.1\times$ lower
        latency and $15.7-29.2\times$ higher throughput.
\end{itemize}
These measurements are for a single microarchitecture (Zen~3) and compiler configuration, and the ratios may differ
on other processors or compilers. For instance, with GCC~14.2.0 the 128-bit-integer-to-\texttt{double} conversion in
\texttt{reduce\_large} is a single \texttt{vcvtsi2sdq} instruction, whereas Clang~20.1.2 emits a call to the
\texttt{\_\_floattidf} runtime routine.

\textbf{SIMD Vectorization and Gather Considerations:}
We have not implemented or measured a SIMD version of Algorithm~\ref{alg:main}, but its structure is amenable to
SIMD vectorization: it is branch-free, and all of its steps, including the integer and bit operations of
Steps~1--3, map to SIMD operations without leaving vector registers. In SIMD code, data-dependent branches require
lane masking or blending of both sides. When the exponents differ across lanes, Step~4 loads the four limbs
$D_0, \ldots, D_3$ of a different table row (32 bytes) for each lane, so a vectorized version needs either one gather
per limb (e.g., \texttt{vgatherdpd} in AVX2/AVX-512) or per-lane row loads followed by a transpose. The cost of
gathers depends on the microarchitecture; for instance, they are relatively slow on Zen~3, and the 2023 microcode
mitigations for Gather Data Sampling made them slower on affected Intel processors. The size of the table also
matters for repeated calls: our table is 2~KB, while SLEEF's is $\approx 30$~KB, compared to typical L1 data caches
of 32--48~KB. Finally, Algorithm~\ref{alg:main} only covers $|x| \ge 2^{16}$, so a vectorized
\texttt{sin}/\texttt{cos} still needs to blend its result with a Cody--Waite reduction when lanes contain arguments
of mixed magnitudes.

\section{Concluding Remarks} \lab{sec:Rem}

We have presented a branch-free variation of the Payne--Hanek range reduction algorithm using floating-point
arithmetic, which is used in LLVM libc~\cite{L}. For binary64, Algorithm~\ref{alg:main} groups exponents in blocks
of $M = 16$, so that a 2~KB table with four limbs per entry covers all inputs with $|x| \ge 2^{16}$. The reduced
argument satisfies $|\hat{u} - u| < 2^{-107} + 2^{-122}$ and $|\hat{y} - y| < 2^{-110}$
(Corollary~\ref{cor:llvm_params}), which meets the worst-case budget of Section~\ref{sec:WorstCase}. On an AMD Ryzen~9
5900X, it takes 54 cycles of latency and 13 cycles of reciprocal throughput (Section~\ref{sec:perf}).

\begin{enumerate}
    \item \textbf{Input Domain Extension}: For $2^{16} \le |x| < 2^{62}$, the table index $i$ is negative, $i \in
        \{-3, -2, -1\}$. The corresponding entries carry a constant prefix $2^{-2}$ in $D_0$, whose contribution
        $32 x_r$ vanishes modulo $2^{N+1} = 256$ since $\uls{x_r} \ge 2^{10}$. This extends the domain down to $|x|
        \ge 2^{16}$ without branching (Section~\ref{sec:LUT}). For $|x| < 2^{16}$, Cody--Waite reduction is used.
    \item \textbf{Rounding Modes and Portability}: Under directed rounding modes ($\operatorname{RU}$,
        $\operatorname{RD}$, $\operatorname{RZ}$), both Fast2Sum operations remain exact by~\cite[Theorem~4]{PLN26},
        and $|\hat{u} - u| < 2^{-105} + 2^{-121}$ when $\hat{k}$ uses static $\operatorname{RN}$ rounding and
        $|\hat{u} - u| < 2^{-104} + 2^{-121}$ with dynamic \texttt{rint} rounding (Section~\ref{sec:rounding}). On
        targets without hardware FMA, the products remain exact in all rounding modes by
        Proposition~\ref{prop:dekker_exact}, using 51-bit limbs and the standard splitting constant $C = 2^{27}+1$
        ($s = 27$) for both operands.
    \item \textbf{Single-Stage and Correctly Rounded Functions}: In single-stage implementations,
        Algorithm~\ref{alg:main} provides the complete range reduction: since it meets the worst-case budget, no
        cancellation detection or accurate path is needed. In two-stage correctly rounded libraries,
        Algorithm~\ref{alg:main} serves as the fast path, with $|\hat{u} - u| < 2^{-107} + 2^{-122}$ on $|u| \le
        1/2 + 2^{-19}$. With three limbs (Algorithm~\ref{alg:three}), $|\hat{u} - u| < 2^{-72} + 2^{-73}$, which
        meets the fast-path budget $2^{-67}$ of Section~\ref{sec:WorstCase} (Corollary~\ref{cor:three_limb}). When
        an input is exceptionally close to a root ($\hat{k} \equiv 0 \pmod{128}$ for $\sin$ or $\hat{k} \equiv 64
        \pmod{128}$ for $\cos$, with $|u| \ll 1$) or to a rounding boundary, and Ziv's rounding test fails, the
        accurate path can reuse the quotient $\hat{k}$ and the exact partial products without recomputing them.
        LLVM libc re-accumulates them together with the fourth-limb product $\circ(x_r D_3)$ in 128-bit arithmetic;
        alternatively, a multi-word reduction can be invoked.
    \item \textbf{Other Formats and SIMD}: The same construction applies to binary32 with $p = p_c = 24$, $N = 3$,
        $M = 8$, and a 14-entry table (0.22~KB) covering $|x| \ge 2^{18}$ (Section~\ref{sec:LUT}). Since
        Algorithm~\ref{alg:main} is branch-free, it is amenable to SIMD vectorization, but we have not implemented
        or measured a vectorized version (Section~\ref{sec:perf}).
    \item \textbf{Future Work}: We plan two follow-up studies. First, the idea of exponent grouping and limb
        alignment also applies to integer-only implementations (Remark~\ref{rem:integer_table}). Such a routine is
        already used in LLVM libc~\cite{L} for double-precision $\sin$/$\cos$ on memory-con\-strained and
        soft-float targets: it uses a 167-byte table of $2/\pi$ and 128-bit fixed-point arithmetic, and does not
        require floating-point hardware. We will present its formulation and analysis in a follow-up paper.
        Second, we will study the trade-offs between the number of limbs, the exponent grouping $M$ (for example,
        $M = 32$ with four limbs under the fast-path budget, Remark~\ref{rem:fastpath}), and the fraction of inputs
        that reach the accurate path.
\end{enumerate}

\appendix

\section{Three-Limb Variant} \lab{sec:three_limb}

For the fast path of a correctly rounded implementation, the budget $\varepsilon = 2^{-67}$ of
Section~\ref{sec:WorstCase} is met by the first three limbs of the table in Section~\ref{sec:LUT}
(Remark~\ref{rem:fastpath}). With three limbs, the omitted tail $x_r (D_3 + D_4 + \cdots)$ is up to $0.91 \cdot
2^{-72}$, while the low part of the exact product $x_r D_2$ is at most $2^{-74}$. So $x_r D_2$ does not need to be
computed exactly, and it can be merged with $b_\ell$ in a single fused multiply-add. Algorithm~\ref{alg:three} is
the resulting three-limb variant of Algorithm~\ref{alg:main}, and Figure~\ref{fig:pipeline3} shows its dataflow.
It differs from Algorithm~\ref{alg:main} in the following:
\begin{itemize}
    \item each table entry stores $D_0, D_1, D_2$, so the table takes $64 \cdot 3 \cdot 8$ bytes $= 1.5$~KB;
    \item steps~7 and~10--13 of Algorithm~\ref{alg:main} (the exact product $x_r D_2$, two Fast2Sum additions, the
        tail FMA, and two additions) are replaced by one FMA $q = \circ(b_\ell + x_r D_2)$ and one Fast2Sum.
\end{itemize}
Algorithm~\ref{alg:three} is not part of LLVM libc at commit~\texttt{229e700}~\cite{L}, and we have not benchmarked
it. The analysis below assumes round-to-nearest ($\operatorname{RN}$) mode.

\begin{algorithm}[htbp]
    \caption{Three-Limb Branch-Free Floating-Point Range Reduction} \lab{alg:three}
    \begin{algorithmic}[1]
        \REQUIRE $x \in \F$ with $|x| \ge 2^{e_{x,\min}}$ (e.g., $|x| \ge 2^{16}$ for double precision)
        \ENSURE $\hat{k} \bmod 2^{N + 1}$, and double-double $\hat{u} = u_h + u_\ell$ (or $\hat{y} = y_h + y_\ell$)
        \STATE Extract unbiased exponent $e_x \ge e_{x,\min}$ from $x = (-1)^{s_x} 2^{e_x} m_x$
        \STATE $i \leftarrow \floor{\frac{e_x - p - g}{M}}$ \COMMENT{Table index $i \in [i_{\min}, i_{\max}]$}
        \STATE Scale input $x_r \leftarrow x \cdot 2^{-M i}$ via exponent adjustment
        \STATE $D_0, D_1, D_2 \leftarrow \operatorname{PI\_INV}[i]$ \COMMENT{Three limbs per entry}
        \STATE $a_h + a_\ell \leftarrow x_r \cdot D_0$ \COMMENT{Exact product, $\ulp{a_h} \ge 2^{N+1}$}
        \STATE $b_h + b_\ell \leftarrow x_r \cdot D_1$ \COMMENT{Exact product, $\ulp{b_h} \le 1/4$}
        \STATE $\hat{k} \leftarrow \nearestint{a_\ell \oplus b_h}$ \COMMENT{Integer quotient}
        \STATE $v \leftarrow (a_\ell \ominus \hat{k}) \oplus b_h$ \COMMENT{Exact operations by
            Theorem~\ref{thm:FirstSteps}}
        \STATE $q \leftarrow \circ(b_\ell + x_r \cdot D_2)$ \COMMENT{Fused multiply-add}
        \STATE $u_h + u_\ell \leftarrow \fts{v, q}$ \COMMENT{Exact sum since $\uls{v} \ge \ulp{q}$~\cite{JZ25}}
        \STATE $y_h + y_\ell \leftarrow (u_h + u_\ell) \cdot \frac{\pi}{2^N}$ \lab{alg3:stepy}
            \COMMENT{Optional scaled output}
        \RETURN $\hat{k} \bmod 2^{N+1}$, $\hat{u} = u_h + u_\ell$ (or $\hat{y} = y_h + y_\ell$)
    \end{algorithmic}
\end{algorithm}

\begin{figure}[htbp]
\centering
\begin{tikzpicture}[
    scale=0.85, every node/.style={transform shape},
    box/.style={rectangle, draw, rounded corners=2pt, minimum height=1.7em, minimum width=4.5em,
        text centered, font=\footnotesize},
    tablebox/.style={rectangle, draw, fill=blue!8, rounded corners=2pt, minimum height=1.7em,
        text centered, font=\footnotesize},
    op/.style={circle, draw, fill=orange!15, inner sep=1.5pt, font=\scriptsize},
    res/.style={rectangle, draw, fill=green!10, rounded corners=2pt, minimum height=1.7em,
        text centered, font=\footnotesize\bfseries},
    arr/.style={-stealth, thick, draw=gray!80},
    bus/.style={thick, draw=gray!80}
]
    \node[box, fill=gray!10] (x) at (-1.8, 6.7) {$x$};
    \node[tablebox] (idx) at (1.8, 6.7) {Index $i = \floor{\frac{e_x - 62}{16}}$};

    \node[box, fill=gray!10] (xr) at (-1.8, 5.5) {$x_r = x \cdot 2^{-16i}$};
    \node[tablebox] (lut) at (1.8, 5.5) {$\operatorname{PI\_INV}[i] \to (D_0, D_1, D_2)$};

    \draw[arr] (x) -- (idx);
    \draw[arr] (x) -- (xr);
    \draw[arr] (idx) -- (lut);
    \draw[arr] (idx.south) to[out=210, in=30] (xr.north east);

    \draw[bus] (-4.5, 4.6) -- (2.0, 4.6);
    \draw[bus] (xr.south) -- (-1.8, 4.6);
    \draw[bus] (lut.south) -- (1.8, 4.6);

    \node[box, fill=purple!8] (p0) at (-4.5, 3.7) {$a_h + a_\ell = x_r D_0$};
    \node[box, fill=purple!8] (p1) at (-1.5, 3.7) {$b_h + b_\ell = x_r D_1$};

    \draw[arr] (-4.5, 4.6) -- (p0.north);
    \draw[arr] (-1.5, 4.6) -- (p1.north);

    \node[op] (kadd) at (-3.0, 2.6) {$\oplus$};
    \node[res] (k) at (-4.5, 1.5) {$\hat{k} = \nearestint{a_\ell \oplus b_h}$};
    \node[op] (vsub) at (-2.7, 1.5) {$\ominus$};
    \node[op] (vadd) at (-1.5, 1.5) {$\oplus$};
    \node[box, fill=yellow!15] (v) at (-1.5, 0.4) {$v = (a_\ell \ominus \hat{k}) \oplus b_h$};

    \draw[arr] (p0.south) to[out=270, in=120] (kadd);
    \draw[arr] (p1.south) to[out=270, in=60] (kadd);
    \draw[arr] (kadd) -- (k);
    \draw[arr] (k) -- (vsub);
    \draw[arr] (p0.south) to[out=270, in=150] (vsub);
    \draw[arr] (vsub) -- (vadd);
    \draw[arr] (p1.south) to[out=270, in=90] (vadd);
    \draw[arr] (vadd) -- (v);

    \node[op] (qfma) at (2.0, 2.6) {$\fma$};
    \node[box, fill=yellow!15] (q) at (2.0, 1.4) {$q = \circ(b_\ell + x_r D_2)$};
    \draw[arr] (2.0, 4.6) -- (qfma);
    \draw[arr] (p1.south) to[out=270, in=150] node[pos=0.6, above, font=\footnotesize] {$b_\ell$} (qfma);
    \draw[arr] (qfma) -- (q);

    \node[box, fill=cyan!10] (f2s2) at (0.0, -0.5) {$\fts{v, q} \to (u_h, u_\ell)$};
    \draw[arr] (v.south) to[out=270, in=150] (f2s2.north);
    \draw[arr] (q.south) to[out=270, in=30] (f2s2.north);

    \node[res] (res) at (0.0, -1.7) {$\hat{u} = u_h + u_\ell$};
    \draw[arr] (f2s2) -- (res);

\end{tikzpicture}
\caption{Dataflow pipeline of the three-limb variant (Algorithm~\ref{alg:three}).} \lab{fig:pipeline3}
\end{figure}

\begin{Cor}[Three-Limb Variant] \lab{cor:three_limb}
    For the parameters of Corollary~\ref{cor:llvm_params} ($p = 53, p_c = 51, N = 7, M = 16, g = 9$, with $i \in
    [-3, 60]$ covering $|x| \ge 2^{16}$), under round-to-nearest mode, Algorithm~\ref{alg:three} satisfies:
    \begin{enumerate}
        \item \textbf{Quotient Separation and Residual Range:}
        Item~(1) of Corollary~\ref{cor:llvm_params} holds: $\Delta_k < 2^{-19}$, $\hat{k}$ is exact whenever
        $\left\{x \cdot \frac{2^N}{\pi}\right\}$ is separated from $\pm 1/2$ by more than $2^{-19}$, and:
        \eq{}{
            \abs{v} \le \frac{1}{2} + 2^{-21}, \quad \abs{u} < \frac{1}{2} + 2^{-19}.
        }

        \item \textbf{Fast2Sum Exactness:}
        The addition $\fts{v, q}$ is exact: $u_h + u_\ell = v + q$.

        \item \textbf{Double-Double Residual Accuracy:}
        The computed residual $\hat{u} = u_h + u_\ell$ satisfies:
        \eq{uBound3}{
            \abs{\hat{u} - u} \le \frac{1}{2}\ulp{q} + \abs{x_r \sum_{j=3}^\infty D_j} < 2^{-72} + 2^{-73}.
        }
        In particular, if $|u| \ge 2^{-15}$ (the fast-path condition of Section~\ref{sec:WorstCase}), then:
        \eq{}{
            \frac{\abs{\hat{u} - u}}{\abs{u}} < 2^{-57} + 2^{-58}.
        }

        \item \textbf{Accuracy of the Scaled Reduced Argument:}
        With the same constant $P_h + P_\ell \approx \pi/128$ as in Corollary~\ref{cor:llvm_params}, the scaled
        reduced argument $\hat{y} = y_h + y_\ell$ in step~\ref{alg3:stepy} satisfies:
        \eq{}{
            \abs{\prt{y_h + y_\ell} - \prt{x - \hat{k} \cdot \frac{\pi}{2^N}} \bmod 2\pi}
                < 2^{-77} + 2^{-78} + 2^{-110}.
        }
    \end{enumerate}
\end{Cor}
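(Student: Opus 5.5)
The plan is to reuse Corollary~\ref{cor:llvm_params} wherever the three-limb variant agrees with Algorithm~\ref{alg:main}, and to redo only the tail analysis.

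\emph{Item (1).} Steps~1--8 of Algorithm~\ref{alg:three} are identical to steps~1--6, 8 and~9 of Algorithm~\ref{alg:main}. Moreover, $\Delta_k$ in~\eqr{DeltaKDef} depends only on $a_\ell \oplus b_h$, $b_\ell$ and the exact tail $x_r\sum_{j\ge 2} D_j$, none of which change. Item~(1) therefore carries over verbatim from Theorem~\ref{thm:correctness}(2) and the table bounds in the proof of Corollary~\ref{cor:llvm_params}.

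\emph{Item (2).} I would mimic the proof of Corollary~\ref{cor:llvm_params}(2). By the earlier bounds, $|b_\ell + x_r D_2| \le \frac12\ulp{b_h} + |x_r D_2|$. Since $|c_h|<4\ulp{b_h}$ and $c_h=\circ(x_rD_2)$, we get $|x_rD_2| \le 4\ulp{b_h}$. Hence $|b_\ell + x_rD_2| \le \frac{9}{2}\ulp{b_h}$, a floating-point number, so rounding preserves the bound. This gives $|q| < 2^3\ulp{b_h}$ and $\ulp{q} \le 2^{-50}\ulp{b_h} \le \uls{v}$. The first inequality uses the fact that $v$ is a multiple of $\ulp{b_h}$ (Theorem~\ref{thm:FirstSteps}); the case $v = 0$ is trivial. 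The criterion of~\cite{JZ25} then makes $\fts{v,q}$ exact.

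\emph{Item (3).} I would write
\[
\prt{x_r\sum_{j\ge0}D_j - \hat{k}} \bmod 2^{N+1} = v + b_\ell + x_rD_2 + x_r\sum_{j\ge3}D_j,
\]
using Theorem~\ref{thm:FirstSteps}. Since $u_h+u_\ell = v+q$ exactly, the error $\hat u - u$ comes only from the FMA rounding, at most $\frac12\ulp{q}$, plus the omitted tail. For the numbers, I would use $|b_\ell|\le 2^{-22}$ and $|x_rD_2|<0.9\cdot 2^{-20}+2^{-74}$ to get $|q|<2^{-19}$, so $\ufp q\le 2^{-20}$ and $\frac12\ulp q\le 2^{-73}$. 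For the tail, the table evaluation quoted in the appendix gives $|x_r\sum_{j\ge3}D_j|\le 0.91\cdot 2^{-72}<2^{-72}$; I would double-check this against the bounds $|x_rD_3|<2^{-72}$ and the $2^{-126}$ remainder, which alone would not quite suffice. Summing gives $2^{-72}+2^{-73}$. The relative bound follows by dividing by $|u|\ge 2^{-15}$.

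\emph{Item (4).} I would apply Theorem~\ref{thm:correctness}(5) with the new $|\hat u-u|$. Multiplying by $\pi/128<2^{-5}$ gives below $2^{-77}+2^{-78}$. The terms $\epsilon_{\text{mult}}<2^{-111}$ and $\epsilon_P(\frac12+\Delta_k)<2^{-115}$ are as before. For these I would re-verify $|u_\ell|\le\frac12\ulp{u_h}\le 2^{-54}$, which still holds since $|u_h|<1$.

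\emph{Main obstacle.} The delicate point is the tail constant. It needs the per-entry exhaustive evaluation, with $0.91\cdot 2^{-72}$ rather than the crude $2^{-72}+2^{-126}$, so that the sum stays strictly below $2^{-72}+2^{-73}$. I would justify it by the same finite check over the 64 entries used in Corollary~\ref{cor:llvm_params}.
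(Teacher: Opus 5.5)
Your proposal is correct and follows essentially the same route as the paper. Item (1) is inherited because steps~1--8 and $\Delta_k$ are unchanged. Item (2) combines $\uls{v}\ge\ulp{b_h}$ with $\ulp{q}\le 2^{-50}\ulp{b_h}$ and the criterion of~\cite{JZ25}. Item (3) isolates the single FMA rounding plus the omitted tail, using the exhaustive table bound $0.91\cdot 2^{-72}$; as you correctly note, the cruder $2^{-72}+2^{-126}$ would not suffice. Item (4) reuses Theorem~\ref{thm:correctness}(5).

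The only minor deviation is in item (2). You obtain $|x_r D_2|\le 4\ulp{b_h}$ from the earlier bound $|c_h|<4\ulp{b_h}$ via monotonicity of rounding, with $c_h$ as an auxiliary quantity not computed in Algorithm~\ref{alg:three}. The paper instead derives it directly from $|D_2|\le 2^{-51}\ufp{D_1}$, $|x_r|<2\ufp{x_r}$ and $\ufp{x_r}\ufp{D_1}\le\ufp{b_h}$. Both routes are valid.
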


\begin{proof}
    Evaluating all 64 table entries with $|x_r| < 2^{78}$ gives:
    \eq{}{
        \abs{x_r D_2} < 0.9 \cdot 2^{-20}, \quad \abs{x_r \sum_{j=3}^\infty D_j} < 0.91 \cdot 2^{-72}.
    }

    \emph{Proof of (1):}
    Steps~1--8 of Algorithm~\ref{alg:three} coincide with steps~1--6, 8, and~9 of Algorithm~\ref{alg:main}, and
    they do not involve $D_2$. The quantity $\Delta_k$ in~\eqr{DeltaKDef} only depends on these steps and on the
    exact limbs of the expansion, not on which limbs are stored. So items~(1) and~(2) of
    Theorem~\ref{thm:correctness} apply, and the proof of Corollary~\ref{cor:llvm_params}~(1) gives the same bounds
    $\Delta_k < 2^{-19}$, $|v| \le 1/2 + 2^{-21}$, and $|u| < 1/2 + 2^{-19}$.

    \emph{Proof of (2):}
    From Theorem~\ref{thm:FirstSteps}, $v$ is an integer multiple of $\ulp{b_h}$, so $\uls{v} \ge \ulp{b_h}$ if $v
    \ne 0$. Since $D_1$ is rounded to $p_c = 51$ bits, $|D_2| \le 2^{-51} \ufp{D_1}$. Together with $|x_r| < 2
    \ufp{x_r}$ and $\ufp{x_r} \ufp{D_1} \le \ufp{b_h}$, this gives:
    \eq{}{
        \abs{x_r D_2} < 2^{-50} \ufp{x_r} \ufp{D_1} \le 2^{-50} \ufp{b_h} = 4 \ulp{b_h}.
    }
    Since $|b_\ell| \le \frac{1}{2}\ulp{b_h}$, we have:
    \eq{}{
        \abs{b_\ell + x_r D_2} < \frac{1}{2}\ulp{b_h} + 4\ulp{b_h} < 2^3 \ulp{b_h},
    }
    where rounding preserves the bound because $\frac{1}{2}\ulp{b_h} + 4\ulp{b_h} \in \F$. Therefore:
    \eq{}{
        \ulp{q} \le 2^{-52} \cdot 2^2 \ulp{b_h} = 2^{-50} \ulp{b_h} < \ulp{b_h} \le \uls{v},
    }
    and $\fts{v, q}$ is exact by~\cite[Theorem~1]{JZ25}. If $v = 0$, exactness holds trivially.

    \emph{Proof of (3):}
    Since $a_h \equiv 0 \pmod{2^{N+1}}$ and $v = a_\ell - \hat{k} + b_h$ exactly, the exact residual satisfies:
    \eq{}{
        u = v + b_\ell + x_r D_2 + x_r \sum_{j=3}^\infty D_j.
    }
    By item~(2), $\hat{u} = u_h + u_\ell = v + q$, so:
    \eq{}{
        \hat{u} - u = \prt{q - (b_\ell + x_r D_2)} - x_r \sum_{j=3}^\infty D_j,
    }
    which gives the first inequality of~\eqr{uBound3}. From the proof of Corollary~\ref{cor:llvm_params}~(1),
    $|b_\ell| \le 2^{-22}$, so:
    \eq{}{
        \abs{b_\ell + x_r D_2} < 2^{-22} + 0.9 \cdot 2^{-20} < 2^{-19}.
    }
    Rounding preserves this bound, so $\ufp{q} \le 2^{-20}$ and:
    \eq{}{
        \frac{1}{2}\ulp{q} \le 2^{-53} \cdot 2^{-20} = 2^{-73}.
    }
    Adding the omitted tail gives:
    \eq{}{
        \abs{\hat{u} - u} < 2^{-73} + 0.91 \cdot 2^{-72} < 2^{-72} + 2^{-73}.
    }
    Finally, if $|u| \ge 2^{-15}$, then:
    \eq{}{
        \frac{\abs{\hat{u} - u}}{\abs{u}} < \frac{2^{-72} + 2^{-73}}{2^{-15}} = 2^{-57} + 2^{-58}.
    }

    \emph{Proof of (4):}
    We bound the three terms of~\eqr{yBoundGen} as in the proof of Corollary~\ref{cor:llvm_params}~(4). Since
    $\pi/128 < 2^{-5}$, the propagated error satisfies:
    \eq{}{
        \abs{\hat{u} - u} \cdot \frac{\pi}{128} < \prt{2^{-72} + 2^{-73}} \cdot 2^{-5} = 2^{-77} + 2^{-78}.
    }
    Since $|v + q| < 1$, we have $|u_h| < 1$, and the Fast2Sum in step~10 gives $|u_\ell| \le \frac{1}{2}\ulp{u_h}
    \le 2^{-54}$. So the double-double multiplication error is still $\epsilon_{\text{mult}} < 2^{-111}$, and the
    constant error is still below $2^{-115}$. Summing these three contributions yields:
    \eq{}{
    \begin{aligned}
        \abs{\hat{y} - y} &< 2^{-77} + 2^{-78} + 2^{-111} + 2^{-115} \\
            &< 2^{-77} + 2^{-78} + 2^{-110}.
    \end{aligned}
    }
\end{proof}

\end{document}